\documentclass[a4paper,10pt,twoside]{amsart}

\setlength{\oddsidemargin}{0mm}
\setlength{\evensidemargin}{0mm} \setlength{\topmargin}{0mm}
\setlength{\textheight}{220mm} \setlength{\textwidth}{155mm} 
\usepackage[affil-it]{authblk}
\usepackage{titling}
\usepackage[latin1]{inputenc}
\usepackage{calligra}
\usepackage[OT1]{fontenc}
\usepackage[english]{babel}
\usepackage{amsfonts}
\usepackage{amsmath}
\usepackage{amssymb}
\usepackage{amsthm}
\usepackage{faktor}
\usepackage{float}
\usepackage{enumerate}
\usepackage{color}
\usepackage{pdfpages}
\usepackage{esint}
\usepackage{hyperref}
\usepackage{cite}
\usepackage{fancyhdr}

\newcommand\testname{Abstract}
\makeatletter
\newenvironment{abs}{%
    \small
    \begin{center}%
        {\textsc \testname\vspace{-.2em}\vspace{\z@}}%
    \end{center}%
    \quote
    }
   {\endquote}
\makeatother





\DeclareMathOperator*{\Id}{Id}

\newcommand{\Div}{\mathrm{div}}
\newcommand{\PP}{\mathbb{P}}
\newcommand{\ee}{\varepsilon}

\newcommand{\Aa}{\mathcal{A}}
\newcommand{\Cc}{\mathcal{C}}
\newcommand{\Bb}{\mathcal{B}}
\newcommand{\dd}{\mathrm{d}}

\newcommand{\X}{\mathfrak{X}}

\newcommand{\Y}{\mathfrak{Y}}
\newcommand{\SSS}{\mathbb{S}}

\newcommand{\RR}{\mathbb{R}}
\newcommand{\ZZ}{\mathbb{Z}}
\newcommand{\NN}{\mathbb{N}}

\newcommand{\BB}{\dot{B}}
\newtheorem{theorem}{Theorem}[section]
\newtheorem{cor}{Corollary}[theorem]
\newtheorem{prop}[theorem]{Proposition}
\newtheorem{lemma}[theorem]{Lemma}
\newtheorem{definition}[theorem]{Definition}
\newtheorem{remark}[theorem]{Remark}

\title{\Large 
	\textbf{\uppercase{
		Global Solvability of the inhomogeneous Ericksen-Leslie system 
		with only bounded density }}}
	
\author{Francesco De Anna$\quad$}
\lhead{M Python}
\rhead{Owl stretching time}
\affil{	\textsc{Universit\'e de Bordeaux} \\ 
		\small\textsc{Institut de Math\'ematiques de Bordeaux}\\ 
		\small{F-33405 Talence Cedex, France}
		\\ \vspace{0cm}\\
		\small\textnormal{Francesco.Deanna@math.u-bordeaux1.fr}}
\date{\today}
\markleft{ FRANCESCO DE ANNA}
\markright{SOLVABILITY OF THE ERICKSEN-LESLIE SYSTEM}


\begin{document}
\maketitle
\begin{abs}
	Ericksen and Leslie established a theory to model the flow of nematic liquid crystals.
	This paper is devoted to the Cauchy Problem of a simplified version of their system, 
	which retains most of the properties of the original one. 
	We consider the density-dependent case and we establish the global existence of solutions in the 
	whole space for small initial data. The initial density only has to be bounded and kept far from vacuum, 
	while the initial velocity belongs to some critical Besov Space. 
	Under a little bit more regularity for the initial velocity, we prove also that those solutions are unique. 
\end{abs}
\vspace{0.2cm}
\textsl{Key words}: Ericksen-Leslie System, Nematic Liquid Crystal, Discontinuous Density, Besov Spaces, Lagrangian Coordinates,  Maximal Regularity for  Heat equation.

\vspace{0.1cm}
\noindent \textsl{AMS Subject Classification}: 82D30, 35Q30, 35Q35
\section{Introduction and main results}
\noindent Over the period of 1958 through 1968, Ericksen \cite{MR0137403} and Leslie \cite{MR1553506} developed the theory of liquid crystal materials. A liquid crystal is a compound of fluid molecules, which has a state of matter between the ordinary liquid one and the crystal solid one. The molecules have not a positional order but they  assume an orientation, which can be modified by the velocity flow. At the same time a variation of this alignment can induce a velocity field.

\noindent The present paper is devoted to the global solvability issue for the following system, which characterizes the liquid crystal hydrodynamics:
\begin{equation}\label{Navier_Stokes_system_1}
	\begin{cases}
		\;\partial_t\rho + \Div\, (\rho u)=0									& \RR_+ \times\RR^N,\\
		\;\partial_t (\rho u)+ \Div (\rho u\otimes u)-\nu \Delta u + \nabla\Pi =
		-\rho \lambda \, \Div\big(\nabla d\odot \nabla d \big)		& \RR_+ \times\RR^N,\\
		\;\Div\, u = 0															& \RR_+ \times\RR^N,\\
		\; \partial_t d + u\cdot\nabla d-\gamma \Delta d= \gamma |\nabla d|^2 d	& \RR_+ \times\RR^N,\\
		\;|d|=1																	& \RR_+ \times\RR^N,\\
		\;(u,\rho, d)_{|t=0} = (u_0,\,\rho_0, \,  d_0)							& \;\;\quad \quad\RR^N.\\
	\end{cases}
\end{equation}
This is a strongly coupled system between the inhomonegenous and incompressible Navier-Stokes equation and the transported heat flow of harmonic maps into sphere. Here $\rho=\rho(t,x)\in \RR_+$ denotes the density, $u=u(t,x)\in \RR^N$ represents the velocity field, 
$\Pi=\Pi(t,x)\in \RR$ stands for the pressure and $d=d(t,x)\in \SSS^{N-1}$ (the unit sphere in $\RR^N$) represents the molecular orientation of the liquid crystal, depending on the time variable $t\in \RR_+$ and on the space variables $x\in \RR^N$, with $N\geq 2$. The symbol $\nabla d\odot \nabla d$ denotes the 
$N\times N$ matrix whose $(i,\,j)$-th entry is given by $\partial_i d\cdot \partial_j d$, for $i,j=1,\dots, N$.
The positive constants $\nu$, $\lambda$ and $\gamma$ stand for the viscosity, the competition between kinetic energy and potential energy and the microscopic elastic relaxation time for the molecular orientation field respectively. 

\noindent We immediately remark that in the present paper we are going only to suppose the initial density $\rho_0$ to be bounded, thus it can present discontinuities along an interface. Such condition should be a model for an important physical case, namely the description of a mixture of fluids with different densities filled with crystals.

\subsubsection*{Some developments in the crystal liquid Theory}
System \eqref{Navier_Stokes_system_1} has been presented by Lin (see \cite{MTOLC,  MR1003435, MR1883167}) to model the  flow of nematic liquid crystals with variable degree of orientation. It has been formulated as a simplification (see \cite{MR1329830} appendix) of the original equations in the theory of liquid crystal proposed by Ericksen in \cite{MR0158610, MR0137403} and Leslie \cite{MR1553506}.

\noindent The solvability problem of the general Ericksen-Leslie equations has been studied by F. Lin and Liu in \cite{MR1784963}, proving the existence of weak and strong solutions under specifically conditions and also by Wu, Xu and Liu in \cite{MR3021544}, establishing the local well-posedness, and the global well-posedness for small intial data. However, because of the complexity of such model, only special cases have been treated in literature (see for example \cite{GennesProst199508}). On the other hand, the simplified system \eqref{Navier_Stokes_system_1} is more maneuverable, even if it preserves the mainly characteristics of the first one.

\noindent In \cite{MR1329830} F. Lin and Liu consider the homogeneous sub-case of system \eqref{Navier_Stokes_system_1} (namely with constant density), replacing the term $|\nabla d|^2 d$ with the forcing term given by
$f(d)=\nabla F(d)$. They study the wellposedness of the system, establishing the following basic energy law:
\begin{equation*}
	\frac{1}{2}\frac{\dd}{\dd t}\int_\Omega (\|u\|^2 + \lambda \|\nabla d\|^2 + 2 \lambda F(d))\dd x = 
	-\int_\Omega (\nu \|\nabla u\|^2 +\lambda \gamma \|\Delta d-f(d)\|^2)\dd x.
\end{equation*}
Then, with a modified Galerkin method, they are able to prove the existence of a weak solution $(u,\,d)$, with  
$u\in L^2(0,T; H^1(\Omega))\cap C([0,T], L^2(\Omega))$ and $d\in L^2(0,T;H^2(\Omega))\cap C([0,T];H^1(\Omega))$, provided $u_0\in L^2(\Omega)$, $d_0\in H^1(\Omega)$ and $d_{0|\partial \Omega} \in 
H^{3/2}(\partial \Omega)$. They also get uniqueness in the 2D case or in the 3D case on the condition 
$\nu \geq \nu ( \gamma , \,\lambda,\,u_0,\,d_0)$. Furthermore they prove a stability result for the equilibria.
Moreover, in \cite{MR1367385} F. Lin and Liu show a regularity result for the system, namely if the initial-boundary condition are smooth enough, then there exists a suitable weak solution whose singular set has one-dimension Hausdorff measure zero in space time.

\noindent In \cite{MR2561116}, Jiang and Tan consider the 3D system \eqref{Navier_Stokes_system_1} on a bounded domain with $f(d) = \nabla F(d)$ instead of $|\nabla d|^2 d$. They derive the global existence of weak solutions provided that the boundary is $C^{3+\delta}$, for a positive $\delta<1$, $d_0\in H^1(\Omega)$ with $d_{0|\partial \Omega} \in H^{3/2}(\partial \Omega)$, the initial density $\rho_0\in L^{\gamma}(\Omega)$ with $\gamma\geq 3/2$, $u_0=0$ whenever $\rho_0=0$ and $|u_0|^2/\rho_0$ belongs to $L^1(\Omega)$.

\noindent In dimension two, the homogeneous sub-case of system \eqref{Navier_Stokes_system_1} on a bounded domain has also been treated by F. Lin, J. Lin and C. Wang in \cite{MR2646822} and by Hong in \cite{MR2745194}.
Even if the velocity field is supposed to be defined on a plane, the authors suppose the director field $d$ to assume values on $S^2$, that is in the three dimensional framework. The first authors prove both interior and boundary regularity Theorems under smallness condition, which allow to obtain the existence of global weak solutions on a bounded smooth domain. Such solutions are smooth except for a finite time-set. An equivalent result has been proved simultaneously by Hong with a different approach, getting an $L^2$-estimate of $\nabla^2 d$ and 
$\nabla u$ under a small energy condition on the initial data.

\noindent In \cite{MR2917124}, D. Wang and Yu studied system \eqref{Navier_Stokes_system_1} on a bounded domain of class $C^{2+\nu}$, where $\nu>0$, with $f(d) = \nabla F(d)$ instead of $|\nabla d|^2 d$. They consider the compressible case, namely with the pressure $\Pi$ dependent by the density and without the free divergence condition on the velocity field. They proved existence and large-time behavior of a global weak solution.

\noindent Still in the compressible case,  F. Jiang, S. Jiang and D. Wang in \cite{MR3110506} established the global existence of weak solutions in a bounded domain in dimension three, under a smallness condition on the N-th component of the initial director field. Moreover in \cite{MR3255696} they determined the existence of weak solutions in dimension two, overcoming the supercritical nonlinearity of $|\nabla d|^2d$ and applying a three-level approximation scheme. Such results hold under some restriction on the initial energy (including the case of small initial energy). They also proved a global existence result with large initial data, provided that the second component of the initial director field fulfils some geometric angle condition. 

\noindent In \cite{MR2745211} F. Lin and C. Wang develop some uniqueness results in the homogeneous setting of \eqref{Navier_Stokes_system_1}. In the two dimensional case they prove that uniqueness holds provided that $u$ belongs to $L^\infty_t L^2_x\cap L^2_t H^1_x$ (where with the notation $L^p_t L^q_x$ stands for $L^p(\RR_+, L^q(\RR^N))$), $\nabla \Pi\in L^{4/3}_t L^{4/3}_x$ and $d\in L^\infty_t\dot{H}^1_x\cap L^2_t \dot{H}^2_x$. In the three dimensional case they prove a similar result on the condition $u\in 
L^\infty_t L^2_x\cap L^2_t H^1_x\cap C([0,T), L^N_x)$, $\Pi\in L^{N/2}_t L^{N/2}_x$ and 
$d\in L^2_t \dot{H}^1_x\cap C([0,T), \dot{W}^{1,N})$, where $\dot{H}^1_x$ and $\dot{W}^{1,N}$ denote the 
homogeneous Sobolev spaces on $\RR^N$.

\noindent In \cite{MR3035314}, Zhou, Fan and Nakamura establish an existence and uniqueness result of the 
two dimensional inhomogeneous \eqref{Navier_Stokes_system_1} on a smooth bounded domain, for large  initial velocity $u_0$ and small $\nabla d_0$ in $L^2$, while the initial density $\rho_0$ is supposed to be smooth enough, namely in $W^{1,r}(\Omega)$, with $r\in (2,\infty)$. 

\noindent Recently, Hieber, Nesensohn, Pr\"uss and Schade \cite{Hieber2014} have developed a complete dynamic Theory for the homogeneous sub-system of \eqref{Navier_Stokes_system_1} (namely with constant density) in a bounded domain $\Omega$ of $\RR^N$ with a $C^2$-boundary $\partial \Omega$. Their approach is to consider such system as a quasilinear parabolic evolution equation, proving the existence and uniqueness of stronq solutions on a maximal time interval. They also show that the equilibria are normally stable, i.e. for an initial data close to their set, there exists a global solution which converges exponentially in time to an equilibrium. Moreover they ascertain the analytic regularity of their solutions.

\begin{remark}
	In order to simplify the nonlinear term $|\nabla d|^2 d$ in the molecular orientation equation, which 
	allows the constraint $d\in \SSS^{N-1}$, in \cite{MR1329830} Lin and Liu have introduced a penalty 
	approximation of Ginzburg-Landau type. More precisely they have replaced the classical energy 
	$1/2\|\nabla d\|_{L^2(\Omega)}^2$ by the following one:
	\begin{equation*}
		\int_{\Omega} \Big( \frac{1}{2}|\nabla d|^2 + \frac{(1-|d|^2)^2}{4 \ee^2}\Big)\dd x,
	\end{equation*}
	where $\ee$ is a positive parameter. It is still a challenging problem to prove the convergence of 
	the approximate solutions to the original one as $\ee$ goes to zero. However, in this paper we are going 
	to prove global solvability results for system \eqref{Navier_Stokes_system_1} with the direct constraint 
	$d\in \SSS^{N-1}$.
\end{remark}

\noindent At first, let us observe that system \eqref{Navier_Stokes_system_1} contains the incompressible inhomogeneous Navier-Stokes equations (imposing the molecular orientation field to be constant), thus we cannot expect to obtain better results than those of this sub-system. We mention the paper of Huang, Paicu, and Zhang \cite{MR3056619} where the authors establish existence and uniqueness of solutions in the whole space and moreover the paper of Danchin and Zhang \cite{MR3250369} where similar results are obtained in the half space setting. In this paper we aim to develop analogous Theorems of \cite{MR3056619} to the liquid crystal framework.

\vspace{0.2cm}
\noindent There is no loss of generality if in \eqref{Navier_Stokes_system_1} we consider the constant viscosity $\nu=1$. In the same line we impose the constants $\lambda$ and $\gamma$ to be $1$, for the convenience of the reader. When the density function $\rho>0$ assumes a value different from $0$, we can define $a:=1/\rho-1$ and reformulate 
system \eqref{Navier_Stokes_system_1} by
\begin{equation}\label{Navier_Stokes_system}
	\begin{cases}
		\;\partial_ta + \Div\, (a u)=0											& \RR_+ \times\RR^N,\\
		\;\partial_t u + u\cdot \nabla u +(1+a)\big\{ \nabla\Pi-\Delta u\big\}
		 =-\Div\big(\nabla d\odot \nabla d\big)													& \RR_+ \times\RR^N,\\
		 \;\Div\, u = 0															& \RR_+ \times\RR^N,\\
		\; \partial_t d + u\cdot \nabla d-\Delta d= |\nabla d|^2 d					& \RR_+ \times\RR^N,\\
		\;|d|=1																	& \RR_+ \times\RR^N,\\		
		\;(u,a, d)_{|t=0} = (u_0,\,a_0, \, d_0)									& \;\;\quad \quad\RR^N,\\
	\end{cases}
\end{equation}

\begin{remark}
The liquid crystal system \eqref{Navier_Stokes_system} holds a scaling property, just as the classical Navier-Stokes one. Namely, if $(a,\,u,\,d,\,\nabla \Pi)$ solves \eqref{Navier_Stokes_system} with initial 
data $(a_0,\,u_0,\,d_0)$, then for 
every positive $\lambda$, 
\begin{equation*}
	(a,\,u,\,d,\,\nabla \Pi)_\lambda := 
	(a(\lambda^2 t,\, \lambda x), \,\lambda u(\lambda^2 t,\, \lambda x),\, d(\lambda^2 t,\, \lambda x),\,
	\lambda^2 \nabla \Pi(\lambda^2 t,\, \lambda x))
\end{equation*}
is still solution with initial data $(a_0(\lambda x),\, \lambda u_0(\lambda x), \, d_0(\lambda x))$. Hence 
it is natural to consider the initial data in a Banach space which has an invariant norm under the previous scaling. Moreover, let us note that $\nabla d_0$ has the same scaling property has $u_0$, thus it is natural to impose them in the same functional space. An example of invariant by scaling space is $(a_0,\,u_0,\,\nabla d_0)\in L^\infty_x\times \BB^{N/p-1}_{p,r}\times \BB^{N/p-1}_{p,r}$, where $\BB^{N/p-1}_{p,r}$ stands for the homogeneous Besov space (see the next section for more details and for the definition of Besov spaces). We are going to consider initial data of this type and we still remark that the case of bounded density can also include discontinuities along an interface, which is an important physical case describing a mixture of fluids with different densities, filled with crystals.
\end{remark}

\noindent In this paper we will consider initial data of the following type:
\begin{equation}\label{initial_data}
	a_0\in L^\infty_x,\quad (u_0,\,\nabla d_0)\in \BB_{p,r}^{\frac{N}{p}-1} \quad\text{with}\quad
	d_0:\RR^N\rightarrow \SSS^{N-1},
\end{equation}
where $\BB_{p,r}^{N/p-1}$ is the critical homogeneous Besov space, with indexes $1<r<\infty$ and $1<p<N$.  
From here on, we suppose that our initial data verify the following smallness condition:
\begin{equation}\label{smallness_condition}
	\eta := 
	\| a_0 \|_{L^\infty_x}+
	\|u_0\|_{\BB_{p,r}^{\frac{N}{p}-1}}+
	\|\nabla d_0\|_{\BB^{\frac{N}{p}-1}_{p,r}}
	\leq c_0,
\end{equation}
where $c_0$ is a positive constant, small enough. The index of integrability $p$ is supposed to be in $(1,N)$ and the value of $r$ in $(1,\infty)$. As in the case of the inhomogeneous Navier-Stokes equation, we can not assume $u$ with a better regularity than $\tilde{L}^1_t \BB_{p,r}^{N/p+1}$ (see \cite{MR2768550} for a complete explanation of such space). Hence, the product $a\Delta u$ between $L^\infty_x$ and $\BB_{p,r}^{N/p-1}$ assumes a distributional sense only if $p<N$, and this explains the restriction for $p$. If the index $r$ is supposed to be equal to $1$ then we expect to obtain a velocity field to be in $L^1_t \mathcal{L}ip_x$ 
(where $L^1_t \mathcal{L}ip_x$ stands for $L^1(\RR_+, \mathcal{L}ip(\RR^N))$) which is very useful to solve the transport equation on the density by Lagrangian coordinates. Our condition $r>1$ is general enough to include the case of non-Lipschitz velocity field.

\vspace{0.2cm}
\noindent Before introducing our main Theorems, let us explain the meaning of weak solution for system \eqref{Navier_Stokes_system}.
\begin{definition}
	We define $(a,\,u,\,d)$ a weak solution for \eqref{Navier_Stokes_system} if $|d|=1$ almost everywhere and
	\begin{itemize}
		\item[$\rhd$] for any test function $\phi\in C^\infty_c (\RR_+\times \RR^N)$ the following equalities 
		are well-defined and fulfilled:
					\begin{equation*}
					  \int_{\RR_+\times \RR^N} 
					  	a(t,x)\left(
					  		\partial_t \varphi(t,x) + u(t,x)\cdot \nabla \varphi(t,x)
					  	 \right)\dd t\dd x  + 
					  	 \int_{\RR^N}a_0(x) \varphi(0,x)\dd x = 0
					\end{equation*}
					\begin{equation*}
						\int_{\RR^N} u\cdot \nabla \varphi = 0.
					\end{equation*}
		\item[$\rhd$] for any vector valued function $\Phi=(\Phi_1,\dots,\Phi_N)\in 
					C^\infty_c (\RR_+\times \RR^N)$ the following identities 
					are well-defined and satisfied:
					\begin{equation*}
						\int_{\RR_+\times \RR^N} 
							u\cdot \partial_t\Phi + 
						\left\{	
							u \cdot \nabla u + 
							(1+a)(\nabla \Pi - \Delta u)  
						\right\}\cdot \Phi + 
						(\nabla d\odot \nabla d)\cdot \nabla \Phi + 
						\int_{\RR^N}u_0 \cdot \Phi(0,\cdot) = 0, 
					\end{equation*}
					\begin{equation*}
						\int_{\RR_+\times \RR^N} 
							d\cdot \partial_t\Phi + 
						\left\{	
							u \cdot \nabla d 
							- \Delta d - 
							|\nabla d|^2 d  
						\right\}\cdot \Phi + 
						\int_{\RR^N}d_0 \cdot \Phi(0,\cdot) = 0,	
					\end{equation*}
	\end{itemize}
\end{definition}

\subsubsection*{The functional Framework: the smooth case}
The maximal regularity Theorem (see Theorem \ref{Maximal_regularity_theorem}) and the characterization of the homogeneous Besov spaces (see Theorem \ref{Characterization_of_hom_Besov_spaces}) play an important role for the study of \eqref{Navier_Stokes_system}, since we can reformulate the second and the third equations of \eqref{Navier_Stokes_system} in the following integral form:
\begin{equation*}
\begin{aligned}
	u(t) &= e^{t\Delta}u_0 + \int_0^t e^{(t-s)\Delta}\big\{ -u\cdot \nabla u + (1+a)\nabla \Pi+a\Delta u
	-\Div\big(\nabla d\odot \nabla d\big)	\big\}(s)\dd s,\\
	d(t) &= e^{t\Delta}d_0 + \int_0^t  e^{(t-s)\Delta}\big\{-u\cdot\nabla d +|\nabla d|^2 d\big\}(s)
	\dd s.
\end{aligned}
\end{equation*}
It is reasonable to suppose the solution having the same regularity as for the linear heat equation given by the heat kernel convoluted with the initial data. Moreover, due to the low regularity of the initial density, which is supposed to be a general bounded function, the transport equation on the density forces us to suppose $a$ only bounded. The classical maximal regularizing effect for heat kernel (see Theorem \ref{Maximal_regularity_theorem}) suggests us to look for a solution in a $L^{r}_t L^q_x$ setting. Now in the simpler case where $u$ just solves the heat equation with initial data $u_0$, having $\Delta u$ in 
$L^{r}_tL^q_x$ is equivalent to $u_0\in \BB_{q,r}^{N/q-1}$ on the condition 
$N/q-1=2-2/r$ (see Corollary \ref{Cor_Characterization_of_hom_Besov_spaces}). From the immersion 
$\BB_{p,r}^{N/p-1}\hookrightarrow \BB^{N/q-1}_{q,r}$ for every $q\geq p$,  we understand that this strategy requires $p\leq Nr/(3r-2)$. Furthermore, since the velocity field $u$ may be seen as solution of the Stokes equation
\begin{equation*}
	\partial_t u -\Delta u +\nabla \Pi = -u\cdot \nabla u +a(\Delta u -\nabla \Pi) -
	\Div\{\nabla d\odot \nabla d\},
\end{equation*}
it turns out that
\begin{equation*}
	\|(\partial_t u,\,\nabla^2 u,\,\nabla\Pi)\|_{L^{\bar{r}}_tL^q_x}\lesssim 
	\|u_0\|_{\BB_{q,\bar{r}}^{\frac{N}{q}-1}} 
	+\|(u\cdot\nabla u,\, a\Delta u,\,a\nabla \Pi,\,\Div\{\nabla d\odot \nabla d\})
	\|_{L^{\bar{r}}_tL^q_x},
\end{equation*}
Here the first relation we expect between the regularities of $u$ and $d$, namely $u\cdot\nabla u= \Div(u\otimes u)$, $\Delta u$ and $\Div\{\nabla d\odot \nabla d\}$ in the same $L^{r}_tL^q_x$ space. 
Thus, according to the previous remark, it is natural to look for a solution such that $u$ and $\nabla d$ fulfils the same functional properties and this explains why we suppose $\nabla d_0$ and $u_0$ belonging to the same critical Besov space 
$\BB_{p,r}^{N/p-1}$. 

\vspace{0.2cm}
\noindent According to the above heuristics, imposing $q=Nr/(3r-2)$, we aim to find a solution in the following space: $a,\,d\in L^{\infty}_{t,x}$ and $(u,\,\nabla d,\,\nabla \Pi)\in \mathfrak{X}_{T,r}$, where
\begin{equation*}
\begin{aligned}
	\X_{T,r}:=\big\{\,(u,\, \nabla d, \nabla\Pi) \quad&\text{with}\quad 
	\nabla d\in L^{3r}_t L^{\frac{3Nr}{3r-2}}_x,\quad (u,\,\nabla d)\in 
	 L^{2r}_TL^{\frac{Nr}{r-1}}_x,\\
	&\nabla (u,\,\nabla d)\in L^{2r}_TL^{\frac{Nr}{2r-1}}_x\cap L^{r}_t L^{\frac{Nr}{2(r-1)}}_x,
	\quad(\nabla^2u,\,\nabla^3 d,\,
	 \,\nabla\Pi)\in  L^{r}_TL^{\frac{Nr}{3r-2}}_x\,\big\}.
\end{aligned}
\end{equation*}
We also  define the following norm
\begin{equation*}
\begin{split}
	\|(u,\,\nabla d, \nabla \Pi)\|_{\X_{r,T}}:=
	\| \nabla d \|_{L^{3r}_T L^{\frac{3Nr}{3r-2}}_x}&+
	\| \nabla( u,\, \nabla d) \|_{L^{2r}_T L^{ \frac{Nr}{2r-1}}_x}+
	\| \nabla( u,\, \nabla d) \|_{L^{r}_T L^{ \frac{Nr}{2(r-1)}}_x}+\\&+
	\| (u,\, \nabla d) \|_{L^{2r}_T L^{\frac{Nr}{r-1}}_x}+ 
	\|(\nabla^2 u,\,\nabla^3 d,\, \nabla \Pi)\|_{L^{r}_T L^{\frac{Nr}{3r-2}}_x},
\end{split}
\end{equation*}
and impose $\X_{r}=\X_{r,\infty}$. Thus, our first result reads as follows:
\begin{theorem}\label{Main_Theorem}
	Let $1<r< 2$ and $p\in (1,Nr/(3r-2)]$. Suppose that the initial data $(a_0,\,u_0,\,d_0)$ are determined by  
	\eqref{initial_data}. 
	There exists a positive constant $c_0$ such that, if \eqref{smallness_condition} is fulfilled, 
	then there exists a global weak solution $(a,\, u,\,  d,\, \nabla \Pi)$  of \eqref{Navier_Stokes_system}, 
	such that $(u,\,\nabla d,\nabla \Pi)\in \X_r$, $(u,\, \nabla d)\in L^2_tL^\infty_x$ and 
	$a,\,d \in L^\infty_{t,x}$. 
	Furthermore $\|a\|_{L^\infty_{t,x}}\leq \|a_0\|_{L^\infty_{x}}$ and the 
	following inequality is satisfied:	
	\begin{equation}\label{main_theorem_inequality}
	\begin{split}
		\| (u,\,\nabla d,\,\nabla \Pi) \|_{\X_r}+
		\| (u,\, \nabla d) \|_{L^{2}_t L^{\infty}_x}
		\lesssim \eta.
	\end{split}
	\end{equation}
\end{theorem}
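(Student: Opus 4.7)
The plan is to construct the global weak solution by approximation and compactness, with the smallness condition \eqref{smallness_condition} providing the closure mechanism. The quantitative backbone is the maximal regularity of the heat/Stokes operators (Theorem \ref{Maximal_regularity_theorem}) together with the Besov characterization (Theorem \ref{Characterization_of_hom_Besov_spaces}), applied with $q=Nr/(3r-2)$ so that the compatibility $N/q-1=2-2/r$ matches $\|\nabla^2 u\|_{L^r_TL^q_x}$ with $\|u_0\|_{\BB^{N/q-1}_{q,r}}$; the embedding $\BB^{N/p-1}_{p,r}\hookrightarrow \BB^{N/q-1}_{q,r}$, available because $p\le q$ under the hypothesis $p\in(1,Nr/(3r-2)]$, then transfers this bound to the data in their original norm. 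The restriction $p<N$ is exactly what gives a distributional sense to the product $a\Delta u$ between an $L^\infty$ function and an object at critical regularity, while the smallness of $\|a_0\|_{L^\infty_x}$ is what allows that term to be absorbed on the left of the momentum equation.

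\textbf{Approximation and iteration.} First I would mollify $(a_0,u_0,d_0)$ by convolution with a Friedrichs family (normalizing so that $|d_0^n|=1$), so that \eqref{smallness_condition} persists uniformly in $n$, and then construct smooth iterates by the following linear scheme: given $(u^n,d^n,a^n)$, define $a^{n+1}$ from $\partial_t a^{n+1}+u^n\cdot\nabla a^{n+1}=0$, $(u^{n+1},\nabla\Pi^{n+1})$ from the Stokes system with source $-u^n\cdot\nabla u^n+a^n(\Delta u^{n+1}-\nabla\Pi^{n+1})-\Div(\nabla d^n\odot\nabla d^n)$, and $d^{n+1}$ from $\partial_t d^{n+1}+u^n\cdot\nabla d^{n+1}-\Delta d^{n+1}=|\nabla d^n|^2 d^n$. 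The transport equation combined with $\Div u^n=0$ preserves the $L^\infty$-norm of the density at each step, $\|a^{n+1}(t)\|_{L^\infty_x}=\|a_0^n\|_{L^\infty_x}$, which propagates to the limit and yields the claimed $\|a\|_{L^\infty_{t,x}}\le\|a_0\|_{L^\infty_x}$.

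\textbf{Uniform a priori estimate in $\X_r$.} Maximal regularity for the Stokes system delivers
\begin{equation*}
\|(\nabla^2 u^{n+1},\nabla\Pi^{n+1},\partial_t u^{n+1})\|_{L^r_T L^{Nr/(3r-2)}_x}\lesssim \|u_0^n\|_{\BB^{N/q-1}_{q,r}}+\|F^n\|_{L^r_T L^{Nr/(3r-2)}_x},
\end{equation*}
with an analogous estimate for the differentiated heat equation on $\nabla d^{n+1}$. The density contribution is absorbed via $\|a(\Delta u-\nabla\Pi)\|_{L^r_TL^{Nr/(3r-2)}_x}\le\|a_0\|_{L^\infty_x}\|(\nabla^2 u,\nabla\Pi)\|_{L^r_TL^{Nr/(3r-2)}_x}$ and smallness of $c_0$. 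The remaining quadratic terms are controlled by H\"older using precisely the exponents built into $\X_r$: factors in $L^{2r}_TL^{Nr/(r-1)}_x$ (for $u,\nabla d$) paired with factors in $L^{2r}_TL^{Nr/(2r-1)}_x$ (for $\nabla u,\nabla^2 d$) land in the target $L^r_TL^{Nr/(3r-2)}_x$, while the cubic term $|\nabla d|^3$ arising when differentiating $|\nabla d|^2 d$ is captured exactly by $\nabla d\in L^{3r}_TL^{3Nr/(3r-2)}_x$. The $L^2_tL^\infty_x$-bound on $(u,\nabla d)$ then follows by Gagliardo-Nirenberg interpolation between $L^{2r}_TL^{Nr/(r-1)}_x$ and $L^r_TL^{Nr/(2(r-1))}_x$, which is permissible precisely because $r<2$.

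\textbf{Passage to the limit and main obstacle.} Uniform boundedness of the iterates in $\X_r\cap L^2_tL^\infty_x$, combined with weak-$*$ compactness for $(a^n,d^n)$ in $L^\infty_{t,x}$ and an Aubin-Lions compactness argument for $u^n$ built on the controls of $\partial_t u^n$ and $\nabla u^n$, produces a limit $(a,u,d,\nabla\Pi)$ satisfying \eqref{Navier_Stokes_system} in the distributional sense of our weak-solution definition and inheriting \eqref{main_theorem_inequality}. The sphere constraint $|d|=1$ survives in the limit thanks to the identity
\begin{equation*}
(\partial_t + u\cdot\nabla -\Delta)(|d|^2-1) = 2|\nabla d|^2(|d|^2-1),
\end{equation*}
a linear transport-diffusion equation with zero initial datum and coefficient $|\nabla d|^2$ integrable enough (via $\X_r$) for a maximum principle to apply. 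The main obstacle throughout is precisely the interaction between the merely bounded density and the critical scale of the velocity: the product $a\Delta u$ must be simultaneously meaningful (forcing $p<N$) and small (forcing $\eta\le c_0$), and calibrating every quadratic and cubic interaction so that it lands in exactly $L^r_TL^{Nr/(3r-2)}_x$, with no slack and in a manner compatible with the supercritical liquid-crystal nonlinearity $|\nabla d|^2 d$, is the delicate bookkeeping that motivates the precise scale of Lebesgue exponents defining $\X_r$.
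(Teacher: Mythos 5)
There is a genuine gap in the passage to the limit, and it is exactly the point on which the paper's proof hinges. Your compactness step invokes weak-$*$ convergence of $a^n$ in $L^\infty_{t,x}$ and an Aubin--Lions argument for $u^n$, and concludes that the limit satisfies the equation. But the problematic term in the momentum equation is $a^n(\Delta u^n-\nabla\Pi^n)$: here $\Delta u^n$ and $\nabla\Pi^n$ sit at the top regularity level of $\X_r$ and therefore converge only \emph{weakly} in $L^r_tL^{Nr/(3r-2)}_x$, while $a^n$ converges only weak-$*$. A product of two merely weakly convergent sequences does not pass to the limit, and no Aubin--Lions compactness on $u^n$ can help, since strong convergence of $u^n$ (or even of $\nabla u^n$) says nothing about $\nabla^2 u^n$ or $\nabla\Pi^n$. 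The only way out is to upgrade the convergence of the \emph{density}, and this is what the paper does: it shows that $(a^n)^2$ solves the transport equation by $u^n$, identifies the weak-$*$ limit $\omega$ of $(a^n)^2$ as a solution of $\partial_t\omega+u\cdot\nabla\omega=0$, shows by the DiPerna--Lions renormalization and uniqueness theory that $\omega=a^2$ a.e., deduces convergence of $\|a^n\|_{L^2(\Omega)}$ to $\|a\|_{L^2(\Omega)}$ on bounded sets, hence strong convergence of $a^n$ in $L^2_{loc}$ and, by interpolation with the uniform $L^\infty$ bound, in every $L^m_{loc}$, $m<\infty$. Your proposal names the product $a\Delta u$ as the ``main obstacle'' but treats it only as an issue of giving it a distributional meaning and absorbing it in the a priori estimate; the identification of its limit, which is the actual obstruction created by a merely bounded density, is not addressed.

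A second, related point: your linear iteration cannot by itself converge at the level of the rough data. The contraction estimates for the differences $\delta a^n,\delta u^n,\delta d^n$ require $\nabla a_0\in L^\infty_x$ and a little extra Besov regularity $\ee$ on $(u_0,\nabla d_0)$ (this is why the paper proves Proposition \ref{Theorem_solutions_smooth_dates} for smoothed data first, and only then removes the regularization by the compactness argument above, rather than running a fixed-point argument directly in $\X_r$ for bounded $a_0$). If, as you write, you mollify once and iterate, you obtain a solution for each regularized datum, but you are then back to the outer limit in the regularization parameter, where the strong convergence of the densities is again indispensable. The uniform $\X_r$ estimates, the absorption of $a(\Delta u-\nabla\Pi)$ by smallness of $\|a_0\|_{L^\infty}$, and the bookkeeping of exponents in your second and third steps are consistent with the paper; the missing ingredient is the DiPerna--Lions step (or an equivalent mechanism) that turns weak-$*$ convergence of $a^n$ into strong $L^m_{loc}$ convergence.
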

\begin{remark}
	In this first Theorem we have supposed the constriction $1<r< 2$. To explain this condition, 
	we anticipate that the proof will be based on an iterate scheme which includes the following one:
	\begin{equation*}
		\partial_t d^n  -\Delta d^n= 
		|\nabla d^{n-1}|^2 d^{n-1}-u^{n-1}\cdot \nabla d^{n-1}. 
	\end{equation*}
	The condition $(u^n, \, \nabla d^n)_\NN\subset \X_r$ is not enough to easily control the 
	$L^{\infty}_{t,x}$-norm of $(d^n)_\NN$. Thus, we have added the condition 
	$(u^n, \, \nabla d^n)_\NN\subset L^2_tL^\infty_x$ which requires $(u_0,\,\nabla d_0)\in 
	\BB_{\infty,2}^{-1}$. We conjecture that such restriction is not necessary, however we 
	have imposed it to simplify the proof for the reader. Indeed the case 
	$r\geq 2$ is treated in our second result, Theorem \ref{Main_Theorem_2}.
\end{remark}

\subsubsection*{The functional framework: the general case}
As we have already pointed out, the choice of a $L^{\bar{r}}_t L^q_x$ functional setting requires the condition 
$p\leq Nr/(3r-2)$. The more general case $1<p<N$ can be handled by the addiction of a weight in time. Indeed the simpler case where $u$ just solves the heat equation with initial data $u_0$, having $u_0\in \BB_{p_1, \bar{r}}^{N/p_1-1}$ for some $p_1\in [p,N)$ and $\bar{r}\in [r, \infty]$ is equivalent to impose $t^{1/2(3-N/p_1)}\Delta u$ in $L^{\bar{r}}(\RR_+,\dd t/t)$. Hence, with similar heuristics proposed in the first case, adding such weights in time, we aim to find a solution in the following functional space: $a,\,d\in L^{\infty}_{t,x}$ and $(u,\,\nabla d,\,\nabla \Pi)\in \mathfrak{Y}_{T,r}$, where $\Y_{r,T}=\Y_{r,T}(p_1,\,p_2,\,p_3)$ is the set of 
$(u,\,\nabla d,\,\nabla \Pi)$ such that
\begin{equation*}
\begin{aligned}
	\Big\{ 
	t^{\gamma_1 }&		   (u,\, \nabla d) 	 	\in L^{2r    }_t	L^{p_3			}_x,\quad
	t^{\gamma_2 }		   (u,\, \nabla d) 	 	\in L^{\infty}_t 	L^{p_3			}_x,\quad
	t^{\gamma_1 }		   (u,\, \nabla d)	 	\in L^{2r    }_t	L^{p_3			}_x,\quad
	t^{\gamma_2 }		   (u,\, \nabla d)	 	\in L^{\infty}_t 	L^{p_3			}_x,\quad\\&
	t^{\gamma_3 }     		     \nabla d	 	\in L^{2r    }_t	L^{3 p_1		}_x,\quad
	t^{\gamma_4 }			     \nabla d	 	\in L^{\infty}_t 	L^{3 p_1		}_x,\quad
	t^{\beta_1	}\nabla	   (u,\, \nabla d) 	 	\in L^{2r    }_t	L^{p_2			}_x,\quad
	t^{\beta_2	}\nabla	   ( u,\,\nabla d)	 	\in L^{\infty}_t 	L^{p_2			}_x,\quad\\&\hspace{1.5cm}
	t^{\beta_3	}\nabla	   ( u,\,\nabla d) 	 	\in L^{2r	 }_t 	L^{\frac{p_3}{2}}_x,\quad
	t^{\beta_4  }\nabla	   ( u,\,\nabla d) 	 	\in L^{\infty}_t 	L^{\frac{p_3}{2}}_x,\quad
	t^{\alpha_1 }\nabla^2  ( u,\,\nabla d,\,\Pi)\in L^{2r}	  _t	L^{p_1}			 _x
	\Big\}.
\end{aligned}
\end{equation*}
where we have fixed $\max\{p,Nr/(2r-1)\}<p_1<N$, $\,Nr/(r-1)<p_3\leq \infty\,$ and $p_2$ such that $\,1/p_1=1/p_2+1/p_3$. Furthermore, the weight in time exponents are defined by
\begin{equation*}
	\begin{array}{llll}
	\alpha_1:=\frac{1}{2}\big(3-\frac{N}{p_1} \big) -\frac{1}{2r},\quad
	&\beta_1:=\frac{1}{2}\big(2-\frac{N}{p_2} \big)	-\frac{1}{2r},\quad
	&\gamma_1:=\frac{1}{2}\big(1-\frac{N}{p_3} \big)-\frac{1}{2r},\quad
	&\gamma_2:=\frac{1}{2}\big(1-\frac{N}{p_3} \big),\\
	\alpha_2:=\frac{1}{2}\big(3 -\frac{N}{p_1}\big)	-\frac{1}{r},\quad
	&\beta_2:=\frac{1}{2}\big(2 -\frac{N}{p_2}\big)	,\quad
	&\gamma_3:=\frac{1}{2}\big(1-\frac{N}{3p_1}\big)-\frac{1}{2r},\quad
	&\gamma_4:=\frac{1}{2}\big(1-\frac{N}{3p_1}\big),\\
	&\beta_3:=\frac{1}{2}\big(2-\frac{2N}{p_3} \big)	-\frac{1}{2r},\quad
	&\beta_4:=\frac{1}{2}\big(2-\frac{2N}{p_3} \big),
	\quad
	\end{array}
\end{equation*}
We also denote by $\|(u,\,d,\,\nabla \Pi)\|_{\Y_{r,T}}$ the following norm: 
\begin{equation*}
\begin{split}
	\|t^{\beta _1} &\nabla (u,\,  \nabla d)				   \|_{ L^{2r}    _T L^{p_2}		  _x} + 
	\|t^{\beta _2}  \nabla (u,\,  \nabla d)				   \|_{ L^{\infty}_T L^{p_2}		  _x} +
	\|t^{\beta _3}  \nabla (u,\,  \nabla d)				   \|_{ L^{2r}    _T L^{\frac{p_3}{2}}_x} +\\&+
	\|t^{\beta _4}  \nabla (u,\,  \nabla d)				   \|_{ L^{\infty}_T L^{\frac{p_3}{2}}_x} +
	\|t^{\gamma_1}         (u,\,  \nabla d) 			   \|_{ L^{2r}    _T L^{p_3}		  _x} +
	\|t^{\gamma_2}         (u,\,  \nabla d)                \|_{ L^{\infty}_T L^{p_3}		  _x} + 
	\|t^{\gamma_3}				  \nabla d 				   \|_{L^{2r}     _T L^{3 p_1}		  _x} +
	\\&\hspace{1cm}+	
	\|t^{\gamma_4}                \nabla d 				   \|_{L^{\infty} _T L^{3 p_1}		  _x} +      
	\|t^{\alpha_1}(\nabla^2 u,\,  \nabla \Pi)              \|_{L^{2r}     _T L^{p_1}		  _x} +
	\|t^{\alpha_2}( \nabla^2 u,\,\nabla^3 d,\, \nabla \Pi) \|_{L^{r}      _T L^{p_1}		  _x} 	
\end{split}
\end{equation*}
and impose $\Y_{r}:=\Y_{r,\infty}$. Hence our second and more general result concerning the existence of a solution reads as follows:
\begin{theorem}\label{Main_Theorem_2}
	Let $1<r<\infty$ and $p\in (1,N)$. Suppose that the initial data $(a_0,\,u_0,\,d_0)$ are determined by  
	\eqref{initial_data}. 
	There exists a positive constant $c_0$  such that, if \eqref{smallness_condition} is fulfilled, 
	then there exists a global weak-solution $(a,\, u,\, d)$  of \eqref{Navier_Stokes_system}, 
	such that $(a,\,d) \in L^\infty_{t,x}$ and $(u,\,\nabla d, \nabla \Pi)$ belongs to $\Y_r$.
	Furthermore $\|a\|_{L^\infty_{t,x}}\leq \|a_0\|_{L^\infty_{x}}$, $|d(t,x)|=1$ for almost every 
	$(t,x)\in \RR_+\times \RR^N$ and the following 
	inequality is satisfied:	
	\begin{equation*}
	\begin{split}
		\|(u,\,\nabla d,\,\nabla \Pi)\|_{\Y_r} &\lesssim\eta
	\end{split}
	\end{equation*}
\end{theorem}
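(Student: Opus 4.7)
The plan is to adapt to the weighted framework $\Y_r$ the Picard-type iterative scheme used for Theorem \ref{Main_Theorem}. The exponents $\alpha_i,\,\beta_j,\,\gamma_k$ defining $\Y_r$ are calibrated so that, by the heat-kernel characterization of homogeneous Besov spaces (Theorem \ref{Characterization_of_hom_Besov_spaces} and its corollary), a free solution $e^{t\Delta}v_0$ with $v_0\in \BB_{p,r}^{N/p-1}$ has exactly the weighted space-time integrability carried by $\|\cdot\|_{\Y_r}$. Maximal regularity for the heat and Stokes operators (Theorem \ref{Maximal_regularity_theorem}) then propagates these weighted estimates from the linear part to the full nonlinear problem, while the constraints $p_1>Nr/(2r-1)$, $p_3>Nr/(r-1)$ together with $1/p_1=1/p_2+1/p_3$ are precisely what makes the nonlinear products close in $\Y_r$.

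\textbf{Iterative scheme and uniform bounds.} Starting from $(a^0,u^0,d^0,\Pi^0)=(a_0,0,d_0,0)$, I would define $a^n$ as the transport solution along the (divergence-free) flow of $u^{n-1}$, $d^n$ as the solution of $\partial_t d^n-\Delta d^n = |\nabla d^{n-1}|^2 d^{n-1}-u^{n-1}\cdot\nabla d^{n-1}$, and $(u^n,\nabla\Pi^n)$ as the solution of the Stokes system with forcing
\[
-u^{n-1}\cdot\nabla u^{n-1}+a^{n-1}(\Delta u^{n-1}-\nabla\Pi^{n-1})-\Div(\nabla d^{n-1}\odot\nabla d^{n-1}).
\]
The transport equation directly gives $\|a^n\|_{L^\infty_{t,x}}=\|a_0\|_{L^\infty_x}$, since $u^{n-1}\in L^1_{\mathrm{loc}}(\RR_+;W^{1,\infty}_x)$ by the induction hypothesis $(u^{n-1},\nabla d^{n-1},\nabla\Pi^{n-1})\in\Y_r$. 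The main step is to show $\|(u^n,\nabla d^n,\nabla\Pi^n)\|_{\Y_r}\leq C\eta$ inductively: each nonlinear product is bounded by H\"older's inequality with the time weights matched up. For instance, $u\cdot\nabla u$ in $L^r_tL^{p_1}_x$ with weight $t^{\alpha_2}$ is controlled by the product of $t^{\gamma_1}u\in L^{2r}_tL^{p_3}_x$ and $t^{\beta_1}\nabla u\in L^{2r}_tL^{p_2}_x$, using $1/p_1=1/p_2+1/p_3$ and the identity $\alpha_2=\gamma_1+\beta_1$. Similar but more delicate estimates handle $\Div(\nabla d\odot\nabla d)$ and the cubic term $|\nabla d|^2 d$; in particular, the exponent $3p_1$ appearing in $\Y_r$ is introduced so that $|\nabla d|^2$ first lands in $L^r_tL^{3p_1/2}_x$ (with weight $t^{2\gamma_3}$) and is then brought back to the target integrability via interpolation against the $L^\infty_tL^{3p_1}_x$ norm with weight $t^{\gamma_4}$. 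A parallel weighted estimate on the differences $(u^n-u^{n-1},\,d^n-d^{n-1},\,\ldots)$, in a slightly weaker topology, makes the sequence Cauchy and produces in the limit a weak solution $(a,u,d,\nabla\Pi)$ of \eqref{Navier_Stokes_system}.

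\textbf{Sphere constraint and main obstacle.} For the pointwise constraint $|d|=1$, I would set $\psi:=|d|^2-1$ and test the $d$-equation against $d$ to find $\partial_t\psi+u\cdot\nabla\psi-\Delta\psi=2|\nabla d|^2\psi$ with $\psi(0,\cdot)=0$; the regularity enjoyed by $\nabla d$ in $\Y_r$ is enough to invoke a uniqueness result for this linear parabolic equation and deduce $\psi\equiv 0$. The main technical obstacle will be the treatment of the product $a^{n-1}(\Delta u^{n-1}-\nabla\Pi^{n-1})$: since $a^{n-1}$ is only bounded, this term inherits the weighted regularity $L^{2r}_tL^{p_1}_x$ (weight $t^{\alpha_1}$) of its strongest factor, and the only way to close the estimate is to absorb it into the left-hand side via maximal regularity of the Stokes system, at the cost of a factor $\|a^{n-1}\|_{L^\infty_{t,x}}\leq\eta$. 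This absorption is responsible for the smallness assumption on the density and, combined with the requirements $p_1>Nr/(2r-1)$ and $p_3>Nr/(r-1)$, determines almost all of the structure of $\Y_r$. An ancillary difficulty, in the absence of the $L^2_tL^\infty_x$ control of $(u,\nabla d)$ available in Theorem \ref{Main_Theorem}, will be to recover $d\in L^\infty_{t,x}$ directly from the weighted Duhamel formula for $d^n$ and the $\Y_r$-bounds of its right-hand side, which boils down to verifying that the cumulative time exponents are integrable at $t=0$.
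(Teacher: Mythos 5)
Your plan runs the Picard scheme directly at the level of the rough data and closes it by a contraction ("a parallel weighted estimate on the differences \ldots makes the sequence Cauchy"). This is precisely where the argument breaks when $a_0$ is only bounded: the difference $\delta a^n=a^{n+1}-a^n$ solves a transport equation with source $-\delta u^{n}\cdot\nabla a^{n+1}$ (or a commutator of the two flows), so any estimate of $\delta a^n$ — which you need, since $\delta a$ multiplies $(\Delta u,\nabla\Pi)$ in the difference of the Stokes forcings — requires a bound on $\nabla a^n$ in $L^\infty_x$, and no such bound exists for merely bounded initial density. There is no stability/contraction property of the transport equation in $L^\infty$ with respect to the velocity without a gradient of the transported quantity. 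The paper circumvents exactly this obstruction in two stages: first (Proposition \ref{Theorem_solutions_smooth_dates_2}) the scheme is run only for regularized data with $\nabla a_0\in L^\infty_x$ and $(u_0,\nabla d_0)\in\BB^{N/p-1+\ee}_{p,r}$, and the Cauchy estimate explicitly carries the factor $\|\nabla a_0\|_{L^\infty_x}$; second, in the proof of Theorem \ref{Main_Theorem_2} itself, $a_0$ is mollified and $(u_0,d_0)$ truncated in frequency, the resulting solutions satisfy uniform $\Y_r$ bounds, and one passes to the limit by compactness (Ascoli--Arzel\`a plus weak convergences), where the only delicate point — convergence of the product $a^n(\Delta u^n+\nabla\Pi^n)$ — is settled by proving strong $L^2_{loc}$ convergence of $a^n$ via the DiPerna--Lions renormalization/uniqueness argument for the transport equation (showing the weak$*$ limit of $(a^n)^2$ equals $a^2$). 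Your proposal contains none of this compactness machinery, and without it the scheme cannot be closed for $a_0\in L^\infty_x$ only.

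A second, related gap: you justify the construction of $a^n$ by asserting $u^{n-1}\in L^1_{\mathrm{loc}}(\RR_+;W^{1,\infty}_x)$ "by the induction hypothesis $(u^{n-1},\nabla d^{n-1},\nabla\Pi^{n-1})\in\Y_r$". Membership in $\Y_r$ at critical regularity does not give a spatially Lipschitz velocity with locally integrable-in-time norm; as Remark \ref{remark1} shows, the $L^1_{t,loc}\mathcal{L}ip_x$ control is obtained from the $\ee$-weighted norms of $\Y_r^\ee$, i.e.\ it needs the extra regularity $(u_0,\nabla d_0)\in\BB^{N/p-1+\ee}_{p,r}$ that the regularized data enjoy but a generic $\Y_r$ solution does not (this is also why uniqueness in Theorems \ref{Main_Theorem_3} and \ref{Main_Theorem_4} requires that extra $\ee$). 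So both the construction of the flow for the density and the convergence of the scheme force you through the "smooth data plus compactness" route that the paper takes, rather than a direct fixed-point argument in $\Y_r$. The rest of your outline (calibration of the weights, H\"older pairings such as $\alpha_2=\gamma_1+\beta_1$ with $1/p_1=1/p_2+1/p_3$, absorption of $a(\Delta u-\nabla\Pi)$ by smallness of $\|a_0\|_{L^\infty}$, and the use of the $3p_1$-based norms for the cubic term) is consistent with the paper's estimates for the approximate solutions.
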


\subsubsection*{Uniqueness}
In order to recover the uniqueness of the constructed global weak-solutions, we need to add an extra regularity on the initial data for the velocity field and the director field. Namely we add to \eqref{initial_data} the following hypotheses
\begin{equation*}
	(u_0,\,\nabla d_0)\in \BB_{p,r}^{\frac{N}{p}-1+\ee},
\end{equation*}
for a sufficient small positive constant $\ee$. With this extra-regularity, we are able to obtain the velocity field $u$ to be in $L^1_{t, loc} \mathcal{L}ip_x$. This allows us to reformulate system \eqref{Navier_Stokes_system_1} in Lagrangian coordinates. Such coordinates simplify in some way our problem, granting the density $a$ to be constant, since it is governed by a transport equation. 
Therefore, we proceed in the same line of \cite{MR3056619}, proving the uniqueness of the constructed solutions for a initial time interval. Thus we conclude by a boot strap method in order to recover the global uniqueness.

\noindent First, let us introduce the functional frameworks. Fixing the value of $\ee$ in $(0,1)$ and let us assume also the constriction $r<2/(2-\ee)$. We define the following space:
\begin{equation*}
\begin{split}
	\X_{r}^\ee:=\big\{\,(u,\, \nabla d, \nabla\Pi) \quad&\text{with}\quad 
	\nabla d\in L^{3r}_t L^{\frac{3Nr}{(3-\ee)r-2}}_x,\quad \nabla (u,\,\nabla d)\in 
	 L^{\frac{2}{2-\ee}}_tL^{\infty}_x,\\
	&\nabla (u,\,\nabla d)\in L^{2r}_tL^{\frac{Nr}{(2-\ee)r-1}}_x,
	\quad(\nabla^2u,\,\nabla^3 d,\,
	 \,\nabla\Pi)\in  L^{r}_tL^{\frac{Nr}{(3-\ee)r-2}}_x\,\big\}.
\end{split}
\end{equation*}
\begin{remark}
 Let us immediately remark that for $r<2/(2-\ee)$ we get $Nr/((3-\ee)r-2)>N$. Thus, it will 
 be possible to apply the Sobolev inequality in order to get the velocity field to be Lipschitz in space, 
 which plays an important role, as we have already mentioned.
\end{remark}

\noindent The first uniqueness result of this article reads as follows:

\begin{theorem}\label{Main_Theorem_3}
	Let $\ee$ be a positive constant in $(0,1]$. Suppose that the hypotheses of Theorem \ref{Main_Theorem} are 
	satisfied with $r<2/(2-\ee)$  
	and let $(a,\,u,\,d)$ be the 
	solution generated. Let us assume that $(u_0,\,\nabla d_0)$ also belongs to $\BB_{N,p}^{N/p-1+\ee}$,  
	then we have $(u,\,\nabla d,\,\nabla \Pi)\in \X_r^\ee$ with
	\begin{equation*}
		\|(u,\,\nabla d,\,\nabla \Pi)\|_{\X_r^\ee}\lesssim\|(u_0,\nabla d_0)\|_{\BB_{p,r}^{\frac{N}{p}-1+\ee}}
		+\|\nabla d_0\|_{\BB_{p,r}^{\frac{N}{p}-1+\frac{2}{3}\ee}}.
	\end{equation*}
	and the uniqueness holds in this functional framework.
\end{theorem}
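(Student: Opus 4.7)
The plan splits into two parts: first, propagating the extra $\ee$-regularity from the initial data to the global solution constructed in Theorem \ref{Main_Theorem}, so that $(u,\nabla d,\nabla\Pi)$ actually belongs to $\X_r^\ee$; second, using the resulting Lipschitz regularity of $u$ to transport the system into Lagrangian coordinates and close a contraction estimate there.

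\textbf{Step 1: propagation of the $\ee$-gain.} I would rerun the Picard-type iteration of Theorem \ref{Main_Theorem}, but at the level prescribed by $\BB_{p,r}^{N/p-1+\ee}$. By Corollary \ref{Cor_Characterization_of_hom_Besov_spaces} the datum $(u_0,\nabla d_0)\in \BB_{p,r}^{N/p-1+\ee}$ produces heat contributions bounded in $L^r_t L^{Nr/((3-\ee)r-2)}_x$ for $\nabla^2 e^{t\Delta}u_0$ and $\nabla^3 e^{t\Delta}d_0$, while $\nabla d_0\in \BB_{p,r}^{N/p-1+2\ee/3}$ controls $\nabla e^{t\Delta}d_0$ in $L^{3r}_t L^{3Nr/((3-\ee)r-2)}_x$. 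Applying Theorem \ref{Maximal_regularity_theorem} to
\begin{equation*}
\partial_t u-\Delta u+\nabla\Pi = -u\cdot\nabla u+a(\Delta u-\nabla\Pi)-\Div(\nabla d\odot\nabla d), \quad \partial_t d-\Delta d=-u\cdot\nabla d+|\nabla d|^2 d,
\end{equation*}
reduces the question to controlling the right-hand sides in the target Lebesgue spaces. Each nonlinear term---$u\cdot\nabla u$, $a(\Delta u-\nabla\Pi)$, $\Div(\nabla d\odot\nabla d)$, $u\cdot\nabla d$, $|\nabla d|^2 d$---is estimated by H\"older using the already established $\X_r$-bound of Theorem \ref{Main_Theorem}, the smallness assumption \eqref{smallness_condition} and $\|a\|_{L^\infty_{t,x}}\leq \eta$. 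This closes the a priori estimate and yields the asserted bound on $\|(u,\nabla d,\nabla \Pi)\|_{\X_r^\ee}$.

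\textbf{Step 2: Lagrangian coordinates.} The constraint $r<2/(2-\ee)$ forces $Nr/((3-\ee)r-2)>N$, so Sobolev embedding turns the $L^r_t L^{Nr/((3-\ee)r-2)}_x$ bound on $\nabla^2 u$ into $\nabla u\in L^r_t L^\infty_x$. Combined with the $L^{2/(2-\ee)}_t L^\infty_x$ control of $\nabla(u,\nabla d)$ built into $\X_r^\ee$, this yields $u\in L^1_{t,loc}\mathcal{L}ip_x$. The flow $X(t,y)$ defined by $\partial_t X(t,y)=u(t,X(t,y))$, $X(0,y)=y$ is therefore a bi-Lipschitz homeomorphism at each $t$, and in the Lagrangian variables $\bar u(t,y):=u(t,X(t,y))$, $\bar d(t,y):=d(t,X(t,y))$ the mass equation collapses to $\bar a(t,y)=a_0(y)$, freezing the density. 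For two solutions $(a^i,u^i,d^i,\nabla\Pi^i)$, $i=1,2$, with common initial data, pulling both back along their respective flows $X^i$ and subtracting, the difference $(\delta\bar u,\delta\bar d,\nabla\delta\bar\Pi)$ satisfies a Stokes--heat system with constant-in-time density $a_0$ and source terms involving the cofactor differences $A^1-A^2$, where $A^i:=(DX^i)^{-1}$, together with $\delta\bar u$, $\delta\bar d$ and their gradients. The standard identity used in \cite{MR3056619} expresses $A^i-I$ (and hence $A^1-A^2$) as a time integral of an expression linear in $\nabla\bar u^i$ (resp.\ $\nabla\delta\bar u$), hence small in $L^\infty_{T_0}L^\infty_x$ for $T_0$ small, thanks to $\nabla u^i\in L^1_{t,loc}L^\infty_x$. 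A maximal-regularity estimate on $[0,T_0]$ then produces
\begin{equation*}
\|(\delta\bar u,\delta\bar d,\nabla\delta\bar\Pi)\|_{\X_r^\ee([0,T_0])}\leq \tfrac{1}{2}\|(\delta\bar u,\delta\bar d,\nabla\delta\bar\Pi)\|_{\X_r^\ee([0,T_0])},
\end{equation*}
forcing the difference to vanish on $[0,T_0]$. Iterating on successive intervals, with $T_0$ controlled uniformly thanks to the global $\X_r^\ee$-bound of Step 1, extends uniqueness to all of $\RR_+$.

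\textbf{Main obstacle.} The technical heart lies in Step 2: after the Lagrangian change of variables the coupling term $\Div(\nabla d\odot\nabla d)$ and the quadratic $|\nabla d|^2 d$ pick up factors of the cofactor matrix and the Jacobian $\det(DX)$, so each term in the contraction estimate contains chains of products of the form $(A^1-A^2)\cdot\nabla\bar d\cdot\nabla\bar d$ and similar combinations in $\bar u$. Arranging the H\"older splittings so that every such difference factor is absorbed by the smallness of $T_0$, while the remaining factors lie in $\X_r^\ee$, is the delicate point---and it is exactly the $\ee$-gain that keeps $\nabla u$ available as a multiplier in $L^\infty_x$, without which the contraction cannot close.
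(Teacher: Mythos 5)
Your plan coincides with the paper's: the $\ee$-gain is propagated through the approximation scheme (Proposition \ref{Theorem_solutions_smooth_dates}, which is your Step 1 carried out at the $\BB_{p,r}^{N/p-1+\ee}$ and $\BB_{p,r}^{N/p-1+\frac{2}{3}\ee}$ levels via the heat-kernel lemmas and maximal regularity), and uniqueness is obtained exactly as in your Step 2 by passing to Lagrangian coordinates where the density freezes to $a_0$, writing the difference system with the cofactor identity for $\delta A$, closing a small-time contraction and bootstrapping. The only cosmetic deviations are that the paper runs the contraction in the unweighted $\X_{r,T}$ norm (augmented by $\partial_t\delta v$ and $\delta \omega$), using the $\X_r^\ee$ regularity of the two solutions only as multipliers, and that $\det(DX)\equiv 1$ by incompressibility, so no Jacobian factors actually appear in the transformed nonlinearities.
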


\noindent As we have already exposed, the results of Theorem \ref{Main_Theorem} and \ref{Main_Theorem_3} require the constriction $1<p\leq Nr/(3r-2) $. Here, the existence and the uniqueness hold in a setting of type $L^{\bar{r}}_t L^q_x$. However, to recover the uniqueness for the general case $1<p<N$, we need again to add a weight in time. More precisely, fixing $q_1\in (N,\,N/(1-\ee))$, $q_3>Nr/((1-\ee)r-1)$ and imposing $q_2$ such that $1/q_1 = 1/q_2 + 1/q_3$, we define the following space:
\begin{equation*}
\begin{aligned}
	\Y_{r,T}^\ee=\Y_{r,T}^\ee(\,q_1,\,q_2,\,q_3): = 
	\Big\{
	(u,\, \nabla d, \nabla\Pi) \quad\text{with}\quad 
	t^{\gamma_1^\ee }		( u,\,	\nabla d)				\in L^{2r}_t     L^{q_3}_x,\quad
	t^{\gamma_2^\ee } 		(u,\, 	\nabla d)				\in L^{\infty}_t L^{q_3}_x,\quad\\
	t^{\gamma_1^\ee } 		(u,\, 	\nabla d)				\in L^{2r}_t     L^{q_3}_x, \quad
	t^{\gamma_2^\ee } 		(u,\, 	\nabla d)				\in L^{\infty}_t L^{q_3}_x,\quad
	t^{\gamma_3^\ee } 				\nabla d				\in	L^{2r}_t     L^{3 q_1}_x,\quad
	t^{\gamma_4^\ee } 				\nabla d				\in	L^{\infty}_t L^{3 q_1}_x,\quad\\
	t^{\beta_1^\ee  } \nabla (u,\,	\nabla d) 				\in L^{2r}_t 	 L^{q_2}_x,\quad
	t^{\beta_2^\ee  } \nabla ( u,\, \nabla d) 				\in L^{\infty}_t L^{q_2}_x,\quad
	t^{\alpha_1^\ee } (\nabla^2  u,\,\nabla^3 d,\,\nabla\Pi)\in L^{2r}_t	 L^{q_1}_x
	\Big\},
\end{aligned}
\end{equation*}
where the exponents of the weights in time are defined by
\begin{equation*}
	\begin{array}{lll}
	\alpha_1^\ee:=\frac{1}{2}\big(3-\frac{N}{q_1}-\ee \big) -\frac{1}{2r},\quad	
	&\beta_1^\ee:=\frac{1}{2}\big(2-\frac{N}{q_2}-\ee \big)	-\frac{1}{2r},\quad
	&\gamma_1^\ee:=\frac{1}{2}\big(1-\frac{N}{q_3}-\ee \big)-\frac{1}{2r},
	\\
	\alpha_2^\ee:=\frac{1}{2}\big(3 -\frac{N}{q_1}-\ee\big)	-\frac{1}{r},\quad
	&\beta_2^\ee:=\frac{1}{2}\big(2 -\frac{N}{q_2}-\ee\big)	,\quad
	&\gamma_2^\ee:=\frac{1}{2}\big(1-\frac{N}{q_3}-\ee \big),\\
	\gamma_3^\ee:=\frac{1}{2}\big(1-\frac{N}{3q_1}-\ee\big)-\frac{1}{2r}
	&\gamma_4^\ee:=\frac{1}{2}\big(1-\frac{N}{3q_1}-\ee\big).
	&
	\end{array}
\end{equation*}

\noindent Therefore, our main uniqueness result reads as follows:

\begin{theorem}\label{Main_Theorem_4}
	Let us assume that the hypotheses of Theorem \ref{Main_Theorem_2} are satisfied and suppose also that 
	$(u_0,\,\nabla d_0)\in \BB_{p,r}^{N/p-1+\varepsilon}$ for a positive $\varepsilon$ bounded by 
	$\min\{1/r,\,1-1/r,\,N/p-1\}$. Then the solution $(a,\,u,\,d,\nabla\Pi)$ determined by Theorem 
	\ref{Main_Theorem_2} fulfils also $(u,\,\nabla d,\,\nabla \Pi)\in \Y_r^\ee$ and we have 
	\begin{equation*}
		\|(u,\,\nabla d,\,\nabla \Pi)\|_{\Y^\ee_r}\lesssim 
		\|(u_0,\,\nabla d_0)\|_{\BB_{p,r}^{\frac{N}{p}-1+\ee}}.
	\end{equation*}
	Moreover, such solution is unique in this functional framework.
\end{theorem}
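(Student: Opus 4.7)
My plan is to establish the $\Y_r^\ee$-bound first, by reopening the approximation scheme used in Theorem \ref{Main_Theorem_2} and running a second a priori estimate in this weighted space. The heat-semigroup characterisation of $\BB_{p,r}^{N/p-1+\ee}$ (Corollary \ref{Cor_Characterization_of_hom_Besov_spaces}) handles the linear part: the exponents $\alpha_j^\ee,\beta_j^\ee,\gamma_j^\ee$ have been chosen precisely so that, e.g., $\|t^{\alpha_1^\ee}\nabla^2 e^{t\Delta}u_0\|_{L^{2r}_t L^{q_1}_x}\lesssim\|u_0\|_{\BB_{p,r}^{N/p-1+\ee}}$, and analogously for every weighted seminorm entering $\Y_r^\ee$. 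The nonlinear forcings $u\cdot\nabla u$, $a\Delta u$, $a\nabla\Pi$, $\Div(\nabla d\odot\nabla d)$, $u\cdot\nabla d$ and $|\nabla d|^2 d$ are then bounded by H\"older in space (via $1/q_1=1/q_2+1/q_3$) together with H\"older in time; in each estimate one factor is absorbed into the $\Y_r$-norm of the solution (itself $\lesssim\eta$), making the underlying map a contraction on a ball of $\Y_r^\ee$ of radius $\lesssim\|(u_0,\nabla d_0)\|_{\BB_{p,r}^{N/p-1+\ee}}$. The constraints $\ee<\min\{1/r,1-1/r,N/p-1\}$ guarantee that every time-weight is integrable and that every Besov index stays in an admissible range.

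\textbf{Step 2: Lagrangian change of variables.} The assumption $q_1>N$ together with $\alpha_1^\ee<1-1/(2r)$ yields, via the Sobolev embedding $W^{1,q_1}\hookrightarrow \mathcal{L}ip$ and H\"older in time, that $\nabla u\in L^1_{t,loc}L^\infty_x$: this is the crucial information extracted from the $\ee$-gain. Hence the flow $X(t,\cdot)$ of $u$ is well-defined and bi-Lipschitz on every short enough $[0,T_0]$, with $\|DX-\Id\|_{L^\infty_{t,x}}$ arbitrarily small. Passing to Lagrangian pull-backs $\bar a,\bar u,\bar d$ in the manner of Huang-Paicu-Zhang \cite{MR3056619} and Danchin-Zhang \cite{MR3250369}, the density equation collapses to $\partial_t\bar a=0$, so $\bar a(t,y)=a_0(y)$, and the remaining equations become Stokes/heat systems whose coefficients differ from the Eulerian ones by quantities vanishing with $(DX)^{-1}-\Id$.

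\textbf{Step 3: Local uniqueness and bootstrap.} Given two solutions $(a_i,u_i,d_i,\nabla\Pi_i)$, $i=1,2$, with the same initial data and both in $\Y_r^\ee$, one passes each to its own Lagrangian frame; the density difference vanishes identically. Subtracting the transformed momentum and director equations, the differences $(\delta u,\delta d,\nabla\delta\Pi)$ solve a Stokes/heat system whose forcings are bilinear in $(\delta u,\delta d)$ and the $\Y_r^\ee$-norms of both solutions. Applying the maximal regularity Theorem \ref{Maximal_regularity_theorem} in the same weighted scale used in Step 1, one obtains
\begin{equation*}
    \|(\delta u,\nabla\delta d,\nabla\delta\Pi)\|_{\Y^\ee_r([0,T_0])}\leq C\,\eta\,\|(\delta u,\nabla\delta d,\nabla\delta\Pi)\|_{\Y^\ee_r([0,T_0])},
\end{equation*}
which, for $\eta$ sufficiently small, forces the differences to vanish on $[0,T_0]$; a standard bootstrap argument, restarting from $T_0$ with the same smallness budget, then propagates uniqueness to all of $\RR_+$. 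The main obstacle I anticipate is the meticulous book-keeping in the weighted H\"older/product estimates of Step 1 — verifying that every nonlinear term closes in $\Y_r^\ee$ with time-weight exponents matching the prescribed $\alpha_j^\ee,\beta_j^\ee,\gamma_j^\ee$ — together with the treatment of the Jacobian-dependent coefficients in the Lagrangian formulation of the director equation, where the convective term $u\cdot\nabla d$ and the quadratic source $|\nabla d|^2 d$ must be handled without consuming the $\ee$-margin of regularity bought in Step 1.
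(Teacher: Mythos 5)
Your overall architecture is the same as the paper's: your Step 1 is exactly Proposition \ref{Theorem_solutions_smooth_dates_2} (the weighted a priori estimate run on the approximation scheme, then passed to the limit), your Step 2 is Remark \ref{remark1} together with the Lagrangian reformulation of Section 5, and your Step 3 has the same skeleton (local uniqueness in Lagrangian variables, then bootstrap). The genuine gap is the smallness mechanism you invoke to close Step 3. You claim a contraction estimate with constant $C\eta$, i.e.\ smallness coming from the critical norm of the data. This cannot work: in the difference system \eqref{Navier_Stokes_Lagrangian_coordinates_difference} the only coefficient that is $O(\eta)$ is $b_0(\Delta\delta v-\nabla\delta P)$, since $\|b_0\|_{L^\infty_x}=\|a_0\|_{L^\infty_x}\leq\eta$. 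All the other forcings $\delta f_1,\dots,\delta f_8,\nabla\delta g,\delta R$ carry coefficients such as $\Id-{}^t\!A_2$, $\delta A$, $\nabla P_1$, $\nabla^2 v_1$, $\partial_t v_1$, $h_i$, $\nabla h_i$, whose relevant weighted norms on $[0,T]$ are controlled by $\|(u_0,\nabla d_0)\|_{\BB_{p,r}^{N/p-1+\ee}}$, a quantity that is finite but \emph{not} assumed small in Theorem \ref{Main_Theorem_2} or \ref{Main_Theorem_4}. Hence an inequality of the form $\|\delta\|\leq C\eta\|\delta\|$ is not attainable, and "for $\eta$ sufficiently small" does not force the difference to vanish.

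The correct mechanism, and the one the paper uses, is smallness in time: one shows $\|(\delta v,\delta h,\nabla\delta P)\|_{\Y_{r,T}}+\|t^{\alpha_1}\partial_t\delta v\|_{L^{2r}_TL^{p_1}_x}+\|t^{\alpha_1}\delta\omega\|_{L^\infty_TL^{p_1^*}_x}\lesssim\chi(T)\{\text{same quantity}\}$ with $\chi(T)\to0$ as $T\to0$, and this decay is precisely what the extra $\ee$ of regularity buys: for $T<1$ one has $t^{\alpha_1}<t^{\alpha_1^\ee}$, which releases factors like $T^{1-\alpha_1(2r)'}$, and coefficient norms such as $\|t^{\alpha_1^\ee}\nabla v_2\|_{L^{2r}_TL^\infty_x}$ or $\|\Id-A_2\|_{L^\infty_TL^\infty_x}\lesssim\int_0^T\|\nabla v_2\|_{L^\infty_x}$ vanish as $T\to0$ by absolute continuity of the time integrals. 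A related correction: the paper does not estimate the difference in $\Y_r^\ee$ as you propose, but in the critical weighted scale $\Y_{r,T}$ (augmented by $t^{\alpha_1}\partial_t\delta v$ and $t^{\alpha_1}\delta\omega$); the $\ee$-gain is spent entirely on the two given solutions appearing as coefficients, which is what makes the product estimates close with vanishing prefactors without "over-weighting" the difference. With the smallness source corrected in this way, your Steps 1--2 and the final bootstrap go through as in the paper.
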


\noindent Let us briefly describe the structure of this paper. In the next paragraph we briefly recall some properties and characterizations about Besov Spaces, while we prove in detail some technical Lemmas and Theorems concerning the regularizing effects for the heat kernel both in the functional framework with and without weight in time. In the third section we prove the existence of solutions for system \eqref{Navier_Stokes_system} on the condition of more regular initial data with respect to \eqref{initial_data}. Such results will play an important role in the proofs of our main results, both for the existence part, regularizing the initial data, and the uniqueness part, where we are able to reformulate \eqref{Navier_Stokes_system} by Lagrangian coordinates. The fourth section is devoted to the proof of the existence part, namely Theorem \ref{Main_Theorem} and Theorem \ref{Main_Theorem_2}. Regularizing the initial data we construct a sequence of approximate solutions and we pass to the limit thanks to some uniform estimates. In the fifth section we present the uniqueness results. We suppose the initial data with a little bit more regularity, which allows us to obtain the Lagrangian coordinates. Thus, we are able to prove the uniqueness of the solution for system \eqref{Navier_Stokes_system} in a small initial time-interval. Then we conclude by a bootstrap method, obtaining Theorem \ref{Main_Theorem_3} and \ref{Main_Theorem_4}. Finally in the appendix, for the convenience of the reader, we prove some technical results which are useful in the main proofs.
\section{Preliminaries}
\noindent
This section is devoted to the study of several regularizing effects for the heat kernel, which will be useful for the proof of the main Theorems. At first step let us recall the well-known Hardy-Littlewood-Sobolev inequality.
\begin{theorem}[Hardy-Littlewood-Sobolev inequality]\label{HLS_Theorem}
	Let  $f$ belongs to $L^p_x$, with $1< p <\infty $, $\alpha\in ]0,N[$ and suppose $r\in ]0,\infty[$ satisfies
	\begin{equation*}
		\frac{1}{p}+\frac{\alpha}{N}= 1+\frac{1}{r}\text{.}
	\end{equation*}
	Then
	\begin{equation*}
		|\cdot |^{-\alpha}* f \in L^r_x\text{,}
	\end{equation*}
	and there exists a positive constant $C$ such that
	\begin{equation*}
		\left\|
			|\cdot |^{-\alpha}* f
		\right\| _{L^r_x}
		\leq 
		C
		\left\| 
			f
		\right\|_{L^p_x}\text{.}
	\end{equation*}		
\end{theorem}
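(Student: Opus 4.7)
The plan is to deduce the strong-type $(p,r)$ estimate from a weak-type $(p,r)$ bound via the Marcinkiewicz interpolation theorem. First, I would observe that the relation $1/r = 1/p - (N-\alpha)/N$ together with $r > 0$ implicitly restricts the admissible $p$ to the range $p < N/(N-\alpha)$, a constraint that is essential for what follows. The kernel $K(x) := |x|^{-\alpha}$ belongs to no classical $L^q(\RR^N)$, but a direct computation shows it lies in the weak Lebesgue space $L^{N/\alpha,\infty}$, since $|\{x : |x|^{-\alpha} > t\}| = c_N t^{-N/\alpha}$; this is the structural reason the stated exponents are the correct ones.

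\textbf{Kernel decomposition and weak-type bound.} For a threshold $R > 0$ to be optimized later, I would split $K = K_1 + K_\infty$ where $K_1 := K \mathbf{1}_{|\cdot| \leq R}$ and $K_\infty := K \mathbf{1}_{|\cdot| > R}$. Polar coordinates yield $\|K_1\|_{L^1_x} \simeq R^{N-\alpha}$ (finite because $\alpha < N$) and $\|K_\infty\|_{L^{p'}_x} \simeq R^{N/p' - \alpha}$ (finite because $p < N/(N-\alpha)$ forces $\alpha p' > N$, hence $N/p' - \alpha < 0$). Young's inequality then gives $\|K_1 * f\|_{L^p_x} \lesssim R^{N-\alpha} \|f\|_{L^p_x}$, while H\"older's inequality applied to $(K_\infty * f)(x) = \int K_\infty(x-y) f(y) \dd y$ yields $\|K_\infty * f\|_{L^\infty_x} \lesssim R^{N/p' - \alpha} \|f\|_{L^p_x}$. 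Given $\lambda > 0$, I would choose $R = R(\lambda)$ so that the $L^\infty$ bound equals $\lambda/2$, i.e.\ $R^{N/p' - \alpha} \sim \lambda / \|f\|_{L^p_x}$. Then the inclusion $\{|K * f| > \lambda\} \subseteq \{|K_1 * f| > \lambda/2\}$ combined with Chebyshev's inequality and the $L^p$ bound for $K_1 * f$ produces, after substituting the chosen $R$ and using the identity $1/p + \alpha/N = 1 + 1/r$,
\begin{equation*}
	|\{x \in \RR^N : |K * f(x)| > \lambda\}| \lesssim \lambda^{-r} \|f\|_{L^p_x}^r,
\end{equation*}
which is exactly the weak-type $(p,r)$ estimate.

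\textbf{Interpolation and main obstacle.} Since the preceding argument applies uniformly to every exponent in the open interval $(1, N/(N-\alpha))$, I would pick auxiliary exponents $p_0 < p < p_1$ in this interval with corresponding targets $r_0, r_1$, and invoke the Marcinkiewicz interpolation theorem to promote the two weak-type bounds at $(p_0, r_0)$ and $(p_1, r_1)$ to the strong $L^p \to L^r$ estimate at the intermediate pair $(p,r)$; this step is the reason the assumption $p > 1$ is needed, so that $p$ can be strictly surrounded on both sides. The main obstacle is not conceptual but purely arithmetic: one must verify that the exponent of $\lambda$ produced by optimizing in $R$ comes out exactly $-r$, which reduces to checking that
\begin{equation*}
	-p + (N - \alpha) p \cdot \frac{1}{N/p' - \alpha} = -r
\end{equation*}
is equivalent to the hypothesis $1/p + \alpha/N = 1 + 1/r$. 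Once this algebraic identity is confirmed and the admissible range of $p$ is pinned down, both the weak-type estimate and the final interpolation proceed in a routine manner.
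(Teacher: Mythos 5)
Your argument is correct, and it is the classical proof of Hardy--Littlewood--Sobolev: split the kernel $|x|^{-\alpha}=K_1+K_\infty$ at a radius $R$, use Young for the $L^1$ piece and H\"older for the $L^{p'}$ piece, optimize $R$ in terms of $\lambda$ to get the weak-type $(p,r)$ bound, and then upgrade to the strong bound by Marcinkiewicz interpolation along the scaling line $1/r=1/p+\alpha/N-1$. The arithmetic you flag as the main obstacle does close: since $N/p'-\alpha=N-N/p-\alpha=-N/r$, the optimization gives $-p+(N-\alpha)p/(N/p'-\alpha)=-p+(p-r)=-r$, and the companion exponent on $\|f\|_{L^p_x}$ is $r$, as needed; the integrability of $K_\infty$ in $L^{p'}_x$ is exactly the condition $p<N/(N-\alpha)$ forced by $1/r>0$, and the interpolation step is legitimate because the admissible interval is open, $r(p)>p$ there, and the relation between $1/p$ and $1/r$ is affine, so the intermediate pair is the one you want. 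Note that the paper does not prove Theorem \ref{HLS_Theorem} at all -- it only recalls it as a well-known result before using it in Lemma \ref{Lemma2} -- so there is no in-paper argument to compare with; your proposal simply supplies the standard textbook proof the paper omits. (A small optional simplification: the same splitting argument at $p_0=1$ gives the weak-type $(1,N/\alpha)$ endpoint directly, so one does not need to surround $p$ strictly on the left, though your choice works equally well.)
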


\noindent
Let us now enunciate the well-known $L^pL^q$-Maximal Regularity Theorem, whose proof is available in \cite{MR1938147}.  

\begin{theorem}[Maximal $L^p_tL^q_x$ regularity for the heat kernel]\label{Maximal_regularity_theorem}
	Let $T\in ]0,\infty]$, $1<p,q<\infty$ and $f\in L^p(0,T;L^q_x)$. Let the operator $\Aa$ be 
	defined by
	\begin{equation*}
		\Aa f(t,\cdot):=\int_0^t \Delta e^{(t-s)\Delta}f(s,\cdot)\dd s\text{.}
	\end{equation*}
	Then $\Aa$ is a bounded operator from $L^p(0,T;L^q_x)$ to $L^p(0,T;L^q_x)$.
\end{theorem}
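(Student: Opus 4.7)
The plan is to view $\Aa$ as a maximal $L^p$-regularity operator for the Laplacian on $L^q(\RR^N)$ and to prove it through operator-valued Fourier multipliers. After extending $f$ by zero outside $(0,T)$ we may work on the whole real line and write
\begin{equation*}
    \Aa f(t,\cdot) = \int_{\RR} \mathcal{K}(t-s)\, f(s,\cdot)\,\dd s,
    \qquad
    \mathcal{K}(\tau) := \Delta e^{\tau\Delta}\mathbf{1}_{\tau > 0},
\end{equation*}
with $\mathcal{K}(\tau)$ an operator on $L^q(\RR^N)$. By Fourier transforming in the time variable, $\Aa$ corresponds to the operator-valued multiplier
\begin{equation*}
    m(\tau) := \Delta(i\tau-\Delta)^{-1}\in \mathcal{L}(L^q_x),\qquad \tau\in\RR.
\end{equation*}

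The next step is to control $m$. The operator $-\Delta$ is sectorial on $L^q(\RR^N)$ for $1<q<\infty$, with uniform resolvent bound $\|\lambda(\lambda-\Delta)^{-1}\|_{\mathcal{L}(L^q)}\lesssim 1$ outside any proper sector containing $(-\infty,0)$. A direct consequence is the scalar Mihlin-type estimate
\begin{equation*}
    \|m(\tau)\|_{\mathcal{L}(L^q)} + \|\tau\, m'(\tau)\|_{\mathcal{L}(L^q)} \leq C, \qquad \tau\in\RR\setminus\{0\}.
\end{equation*}
Moreover, applying the classical Mihlin multiplier theorem in $\RR^N$ to radial symbols of the form $|\xi|^2\mapsto \phi(|\xi|^2)$ furnishes a bounded $H^\infty$-calculus for $-\Delta$ on $L^q$; this in turn upgrades the above scalar bounds to $R$-boundedness of the families $\{m(\tau)\}$ and $\{\tau\, m'(\tau)\}$ in $\mathcal{L}(L^q)$.

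Since $L^q(\RR^N)$ is a UMD space for $1<q<\infty$, Weis's operator-valued Mihlin theorem then applies to $m$, yielding boundedness of $\Aa$ on $L^p(\RR;\,L^q(\RR^N))$ for every $1<p<\infty$; restriction to the interval $(0,T)$ gives the stated inequality. The main obstacle is that the substantive content is concealed in two non-trivial harmonic analysis facts, namely the $R$-boundedness of the resolvent family of $-\Delta$ on $L^q$ (equivalently, the bounded $H^\infty$-calculus) and Weis's theorem itself. A more hands-on alternative would be to verify the operator-valued H\"ormander condition $\int_{|\tau|>2|\sigma|}\|\mathcal{K}(\tau-\sigma)-\mathcal{K}(\tau)\|_{\mathcal{L}(L^q)}\,\dd\tau\leq C$, which follows from pointwise heat-kernel bounds $\|\mathcal{K}(\tau)\|_{\mathcal{L}(L^q)} \leq C/\tau$ and $\|\mathcal{K}'(\tau)\|_{\mathcal{L}(L^q)} \leq C/\tau^2$, and then apply the Benedek--Calder\'on--Panzone theorem after seeding with an $L^{p_0}$-estimate; the seed estimate (easy in the Hilbert case $q=2$ via Plancherel) is precisely where the same harmonic-analytic input creeps back in for general $q$.
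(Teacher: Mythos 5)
The paper does not actually prove this statement: it is quoted with a citation to \cite{MR1938147}, where it is established by vector-valued singular-integral (Calder\'on--Zygmund) methods in the time variable. Your multiplier route is a correct and standard alternative, and your fallback via the operator-valued H\"ormander condition and Benedek--Calder\'on--Panzone is essentially the textbook argument behind the cited reference, so there is no conflict with the paper. Two points deserve tightening. First, the step ``bounded $H^\infty$-calculus \dots\ upgrades the scalar bounds to $R$-boundedness'' is not an implication valid in an abstract Banach space: you need either the Kalton--Weis theorem that on spaces with property $(\alpha)$ (which $L^q(\RR^N)$ has) a bounded $H^\infty$-calculus is automatically $R$-bounded, applied to the uniformly bounded family of functions $z\mapsto -z/(i\tau+z)$, or a direct proof of $R$-sectoriality of $-\Delta$ on $L^q$ (e.g.\ pointwise domination of the resolvent kernels by the Hardy--Littlewood maximal operator together with the Fefferman--Stein vector-valued maximal inequality). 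As written, the justification of the key $R$-boundedness is the loosest link in the chain. Second, in the Benedek--Calder\'on--Panzone route the seed estimate does not require the same machinery for general $q$: choosing $p_0=q$, one has $L^{q}(0,T;L^q_x)=L^q((0,T)\times\RR^N)$ and the bound there is classical scalar Calder\'on--Zygmund/Mihlin theory for the space-time symbol $|\xi|^2/(i\tau+|\xi|^2)$; combined with the kernel bounds $\|\Delta e^{\tau\Delta}\|_{\mathcal{L}(L^q)}\lesssim \tau^{-1}$ and $\|\Delta^2 e^{\tau\Delta}\|_{\mathcal{L}(L^q)}\lesssim \tau^{-2}$ this gives all $1<p<\infty$ with no UMD or $R$-boundedness input, which is arguably the more elementary proof and the one closest to the reference the paper invokes. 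With either of these repairs your outline is sound; also note that restriction from the whole line to $(0,T)$ is legitimate because the kernel is supported in $\tau>0$, so the truncated operator coincides with the whole-line operator applied to the zero-extension of $f$.
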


\begin{lemma}\label{Lemma2}
	Let $T\in\, ]0,\infty]$, $1<r_1, \,r_2<\infty$, $q_1\in [1,\infty]$ and 
	$q_2\in [q_1, \infty]$, such that 
	\begin{equation}\label{condition_lemma2a}
		\frac{N}{2}\Big(\frac{1}{q_1}-\frac{1}{q_2}\Big)+ \frac{1}{2}<1. 
	\end{equation}
	Let the operator $\mathcal{B}$ Let the operator $B$ be defined by
	\begin{equation*}
		\mathcal{B}f(t,\cdot):=\int_0^t \nabla e^{(t-s)\Delta}f(s,\cdot)\dd s\text{.}
	\end{equation*}
	Then, we have that $\mathcal{B}$ is a bounded operator from 
	$L^{r_1}(0,T;L^{q_1}_x)$ with values to $L^{r_2}(0,T;L^{q_2}_x)$, if the following equality is fulfilled:
	\begin{equation}\label{condition_lemma2}
	\frac{1}{r_1} + \frac{N}{2}\Big( \frac{1}{q_1} - \frac{1}{q_2}\Big) = 
	\frac{1}{2} + \frac{1}{r_2}.
	\end{equation}
\end{lemma}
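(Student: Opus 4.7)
The plan is to reduce the claim to a one-dimensional Hardy--Littlewood--Sobolev inequality on the time variable, after applying the classical $L^{q_1}$-to-$L^{q_2}$ decay estimate for the heat semigroup in space.

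The first step is the pointwise-in-time estimate
\begin{equation*}
	\|\nabla e^{\tau\Delta} g\|_{L^{q_2}_x}
	\lesssim \tau^{-\frac{1}{2}-\frac{N}{2}\left(\frac{1}{q_1}-\frac{1}{q_2}\right)}\|g\|_{L^{q_1}_x},
	\quad \tau>0,
\end{equation*}
which is a standard consequence of Young's inequality applied to $\nabla K_\tau \ast g$, where $K_\tau(x)=(4\pi\tau)^{-N/2}e^{-|x|^2/(4\tau)}$ is the heat kernel, together with the scaling identity $\|\nabla K_\tau\|_{L^s_x}=\tau^{-1/2-N/2(1-1/s)}\|\nabla K_1\|_{L^s_x}$ for the appropriate $s\in[1,\infty]$ determined by $1+1/q_2=1/s+1/q_1$ (this requires precisely the assumption $q_2\geq q_1$). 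Setting
\begin{equation*}
	\alpha := \tfrac{1}{2}+\tfrac{N}{2}\big(\tfrac{1}{q_1}-\tfrac{1}{q_2}\big),
\end{equation*}
hypothesis \eqref{condition_lemma2a} yields $\alpha\in (1/2,1)\subset (0,1)$.

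Applied to $g=f(s,\cdot)$ and $\tau=t-s$, the previous inequality gives
\begin{equation*}
	\|\mathcal{B}f(t,\cdot)\|_{L^{q_2}_x}
	\leq \int_0^t \|\nabla e^{(t-s)\Delta} f(s,\cdot)\|_{L^{q_2}_x}\,\dd s
	\lesssim \int_0^t (t-s)^{-\alpha}\,\|f(s,\cdot)\|_{L^{q_1}_x}\,\dd s.
\end{equation*}
Extending $s\mapsto\|f(s,\cdot)\|_{L^{q_1}_x}$ by zero outside $[0,T]$, the right-hand side is a convolution of the Riesz-type kernel $|t|^{-\alpha}$ (restricted to $t>0$) with a function of $L^{r_1}(\mathbb{R})$.

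Finally, I apply Theorem \ref{HLS_Theorem} in dimension $N=1$ with exponent $\alpha\in(0,1)$: this yields boundedness of the convolution operator from $L^{r_1}(\mathbb{R})$ to $L^{r_2}(\mathbb{R})$ exactly when
\begin{equation*}
	\frac{1}{r_1}+\alpha = 1+\frac{1}{r_2},
	\qquad\text{i.e.}\qquad
	\frac{1}{r_1}+\frac{N}{2}\Big(\frac{1}{q_1}-\frac{1}{q_2}\Big) = \frac{1}{2}+\frac{1}{r_2},
\end{equation*}
which is exactly the balance relation \eqref{condition_lemma2}; the admissibility conditions of HLS are met since $1<r_1,r_2<\infty$ by assumption and $\alpha\in(0,1)\subset(0,N)$. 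The only delicate point is the verification that $\alpha<1$ (so the singularity of the kernel is integrable in the HLS sense) and that $q_2\geq q_1$ is needed to legitimately invoke the heat-kernel bound; both are exactly the hypotheses of the lemma. Once these are checked, the proof is essentially a one-line combination of semigroup decay and HLS.
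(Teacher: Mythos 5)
Your proof is correct and follows essentially the same route as the paper: the scaling identity for $\nabla K$ combined with Young's convolution inequality reduces the bound to a convolution in time with the kernel $|t|^{-\frac12-\frac{N}{2}(\frac{1}{q_1}-\frac{1}{q_2})}$, which is then handled by the Hardy--Littlewood--Sobolev inequality (Theorem \ref{HLS_Theorem}) in dimension one. Your explicit verification that $\alpha\in(0,1)$ and that $q_2\geq q_1$ legitimizes the heat-kernel bound matches the role of hypotheses \eqref{condition_lemma2a} and \eqref{condition_lemma2} in the paper's argument.
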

\begin{proof} 
	At first let us observe that, if $K$ denotes the heat kernel, than for all $1\leq \lambda \leq \infty$ we have
	\begin{equation}\label{estimate_nablaheatkernel}
		\|\nabla K(t,\cdot)\|_{L^\lambda_x}=\frac{1}{t^{\frac{d}{2}\frac{1}{\,\lambda'}+\frac{1}{2}}}
		\|K(1,\cdot)\|_{L^\lambda_x}\text{.}
	\end{equation}
	Observe that, for every $t\in \RR_+$,
	\begin{equation*}
		\big\|\int_0^t \nabla e^{(t-s)\Delta}f(s)\dd s\,\big\|_{L^{q_2}_x}
		\leq \int_0^t \|\nabla K(t-s,\cdot)*f(s,\cdot)\,\|_{L^{q_2}_x}\dd s
		\leq \int_0^t \|\,\nabla K(t-s)\|_{L^{\tilde{q}}_x}\|\,f(s)\,\|_{L^{q_1}_x}\dd s,
	\end{equation*}
	with $1/\tilde{q}+1/q_1=1/q_2+1$. Thus, by \eqref{estimate_nablaheatkernel}, we obtain
	\begin{equation*}
		\mathcal{B}f(t) = 
		\big[1_{[0,T]}(s)|s|^{-\frac{N}{2}\Big( \frac{1}{q_1} - \frac{1}{q_2}\Big)-\frac{1}{2}}
		*_s
		1_{[0,T]}(s)\|f(s)\|_{L^{q_1}_x}\big](t)
	\end{equation*}
	and by virtue of Theorem \ref{HLS_Theorem} we conclude the proof of Lemma \ref{Lemma2}. 
\end{proof}

\begin{lemma}\label{Lemma3}
	Let $T\in\, ]0,\infty]$, $1<r_1, \,r_2<\infty$, $q_1\in [1,\infty]$ and $q_2\in [q_1, \infty]$, 
	such that 
	\begin{equation}\label{condition_lemma3a}
		\frac{N}{2}\Big(\frac{1}{q_1}-\frac{1}{q_2}\Big)<1. 
	\end{equation} 
	Let $f\in L^{r_1}_TL^{q_1}_x$ and let the operator $\mathcal{C}$ be defined by
	\begin{equation*}
		\mathcal{C}f(t,\cdot):= \int_0^t e^{(t-s)\Delta}f(s,\cdot)\dd s\text{,}
	\end{equation*}
	Then, $\Cc $ is a bounded operator from 
	$L^{r_1}(0,T;L^{q_1}_x)$ with values to $L^{r_2}(0,T;L^{q_2}_x)$, where
	\begin{equation}\label{condition_lemma3}
	\frac{1}{r_1} + \frac{N}{2}\Big( \frac{1}{q_1} - \frac{1}{q_2}\Big) = 1 + \frac{1}{r_2}.
	\end{equation}
\end{lemma}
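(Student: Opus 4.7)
\medskip

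\noindent\textbf{Proof plan for Lemma \ref{Lemma3}.} The argument follows the same pattern as the proof of Lemma \ref{Lemma2}, just with the heat kernel $K$ in place of its gradient $\nabla K$. The plan is to reduce the claim to a one-dimensional fractional integration (Riesz potential) in the time variable and then to invoke the Hardy--Littlewood--Sobolev inequality (Theorem \ref{HLS_Theorem}) on the half-line.

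\medskip

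\noindent First I would record the scaling estimate for the heat kernel
\begin{equation*}
\|K(t,\cdot)\|_{L^{\lambda}_x}
\;=\;
t^{-\frac{N}{2}\,\frac{1}{\lambda'}}\,\|K(1,\cdot)\|_{L^{\lambda}_x},
\qquad 1\le\lambda\le\infty,
\end{equation*}
obtained by the change of variables $x\mapsto \sqrt{t}\,x$ in the Gaussian, with $\lambda'$ the conjugate exponent of $\lambda$. Next, for each $t\in(0,T)$, I would apply Young's convolution inequality with $1/\tilde q+1/q_1=1/q_2+1$ (which forces $1/\tilde q'=1/q_1-1/q_2\ge 0$ thanks to the hypothesis $q_2\ge q_1$) to get
\begin{equation*}
\bigl\|e^{(t-s)\Delta}f(s)\bigr\|_{L^{q_2}_x}
\;\le\;
\|K(t-s,\cdot)\|_{L^{\tilde q}_x}\,\|f(s)\|_{L^{q_1}_x}
\;\le\;
C\,(t-s)^{-\alpha}\,\|f(s)\|_{L^{q_1}_x},
\end{equation*}
with the exponent
\begin{equation*}
\alpha\;:=\;\frac{N}{2}\Bigl(\frac{1}{q_1}-\frac{1}{q_2}\Bigr).
\end{equation*}

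\medskip

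\noindent Integrating in $s\in(0,t)$ and using that $f$ vanishes outside $[0,T]$, I obtain the pointwise bound
\begin{equation*}
\|\Cc f(t)\|_{L^{q_2}_x}
\;\le\;
C\,\Bigl[\,\mathbf{1}_{[0,T]}(\cdot)\,|\cdot|^{-\alpha}\;*_s\;\mathbf{1}_{[0,T]}(\cdot)\,\|f(\cdot)\|_{L^{q_1}_x}\Bigr](t).
\end{equation*}
Assumption \eqref{condition_lemma3a} gives $\alpha\in(0,1)$, so the Riesz potential $|\cdot|^{-\alpha}$ in one spatial dimension satisfies the hypotheses of Theorem \ref{HLS_Theorem} (applied on $\RR$ with ambient dimension $1$). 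The gain-of-integrability relation of Hardy--Littlewood--Sobolev reads $1/r_1+\alpha=1+1/r_2$, which is exactly the balance \eqref{condition_lemma3}. Applying Theorem \ref{HLS_Theorem} and taking the $L^{r_2}_t$ norm of the displayed inequality yields $\|\Cc f\|_{L^{r_2}_T L^{q_2}_x}\lesssim \|f\|_{L^{r_1}_T L^{q_1}_x}$, which is the desired conclusion.

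\medskip

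\noindent The only mildly delicate point I anticipate is that the Hardy--Littlewood--Sobolev theorem as stated in Theorem \ref{HLS_Theorem} excludes the endpoint $\alpha=0$ (i.e.\ $q_1=q_2$); this borderline case is however trivial, since $\|e^{(t-s)\Delta}f(s)\|_{L^{q_1}_x}\le \|f(s)\|_{L^{q_1}_x}$ reduces the statement to Young's inequality in time, and the scaling condition \eqref{condition_lemma3} then coincides with the one appearing in Young's convolution estimate on the half-line. All other restrictions ($1<r_1,r_2<\infty$ and the strict bound $\alpha<1$) ensure that we stay inside the range of validity of Theorem \ref{HLS_Theorem}, so no further obstruction arises.
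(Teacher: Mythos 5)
Your argument is correct and is exactly the paper's proof: the paper simply says the proof of Lemma \ref{Lemma3} is the same as that of Lemma \ref{Lemma2} with the kernel bound \eqref{estimate_heatkernel} in place of \eqref{estimate_nablaheatkernel}, i.e.\ Young's inequality in space to reduce to a one-dimensional Riesz potential in time, then Theorem \ref{HLS_Theorem}. Your worry about the endpoint $q_1=q_2$ is moot, since \eqref{condition_lemma3} with $1<r_1,r_2<\infty$ forces $\frac{N}{2}\big(\frac1{q_1}-\frac1{q_2}\big)=1+\frac1{r_2}-\frac1{r_1}>0$, so that case never occurs.
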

\begin{proof}
	The proof is basically equivalent to the previous one, observing that 
	\begin{equation}\label{estimate_heatkernel}
		\|K(t,\cdot)\|_{L^\lambda_x}=\frac{1}{t^{\frac{N}{2}\frac{1}{\,\lambda'}}}
		\|K(1,\cdot)\|_{L^\lambda_x},
	\end{equation}
	for every $1\leq \lambda\leq \infty$.
\end{proof}

\noindent
The next Theorem is a variation of Theorem \ref{Maximal_regularity_theorem} for functions which belong to some $L^p_t L^q_x$-space, up to a weight in time. This Theorem has already been presented in \cite{MR3056619} and \cite{MR3250369}, however we briefly prove it for the convenience of the reader.

\begin{theorem}\label{Maximal_regularity_Thm_weight_time}
	Let $T\in ]0,\infty]$, $1<\bar{r},q<\infty$ and  $\alpha\in (0,1-1/\bar{r})$. Let the operator 
	$\Aa$ be defined as in Theorem \ref{Maximal_regularity_theorem}. Suppose that 
	$t^\alpha f(t)$ 
	belongs to $L^{\bar{r}}(0,T;L^q_x)$. Then $t^\alpha \Aa f(t)$ belongs to $L^{\bar{r}}(0,T;L^q_x)$ and there 
	exists $C>0$ such that
	\begin{equation*}
		\|t^\alpha \Aa f(t)\|_{L^{\bar{r}}(0,T;L^q_x)}\leq C \| t^\alpha f(t)\|_{L^{\bar{r}}(0,T;L^q_x)}.
	\end{equation*}	 
\end{theorem}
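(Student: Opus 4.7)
The plan is to split the time integral into a far piece and a near piece, and then reduce the weighted estimate to the unweighted maximal regularity (Theorem~\ref{Maximal_regularity_theorem}) plus elementary Hardy-type inequalities. Setting $g(s):=s^\alpha f(s)$ and $G(s):=\|g(s)\|_{L^q_x}$, so that $G\in L^{\bar{r}}(0,T)$ by hypothesis, I write $\Aa f=\Aa_1 f+\Aa_2 f$ with
\begin{equation*}
\Aa_1 f(t):=\int_0^{t/2}\Delta e^{(t-s)\Delta}f(s)\,\dd s,\qquad \Aa_2 f(t):=\int_{t/2}^{t}\Delta e^{(t-s)\Delta}f(s)\,\dd s.
\end{equation*}

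For $\Aa_1 f$ I exploit the standard smoothing bound $\|\Delta e^{\tau\Delta}\|_{L^q\to L^q}\leq C/\tau$ together with the fact that $t-s\geq t/2$ on the integration range, which yields the pointwise control
\begin{equation*}
t^\alpha\|\Aa_1 f(t)\|_{L^q_x}\leq C\,t^{\alpha-1}\int_0^{t/2}s^{-\alpha}G(s)\,\dd s.
\end{equation*}
The change of variable $s=tu$ rewrites the right hand side as $\int_0^{1/2}u^{-\alpha}G(tu)\,\dd u$, and Minkowski's integral inequality in time produces
\begin{equation*}
\|t^\alpha \Aa_1 f\|_{L^{\bar{r}}(0,T;L^q_x)}\leq \|G\|_{L^{\bar{r}}(0,T)}\int_0^{1/2}u^{-\alpha-1/\bar{r}}\,\dd u,
\end{equation*}
a finite quantity precisely because $\alpha<1-1/\bar{r}$ is assumed.

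For $\Aa_2 f$ the naive pointwise argument fails, since $1/(t-s)$ is not integrable at $s=t$; here Theorem~\ref{Maximal_regularity_theorem} must enter. The trick is to rewrite $t^\alpha f(s)=(t/s)^\alpha g(s)$ and decompose
\begin{equation*}
t^\alpha \Aa_2 f(t)=\int_{t/2}^{t}\Delta e^{(t-s)\Delta}g(s)\,\dd s+\int_{t/2}^{t}\Delta e^{(t-s)\Delta}\bigl[(t/s)^\alpha-1\bigr]g(s)\,\dd s.
\end{equation*}
The first integral equals $\Aa g(t)-\Aa_1 g(t)$: by Theorem~\ref{Maximal_regularity_theorem} the $L^{\bar{r}}(L^q_x)$ norm of $\Aa g$ is bounded by $\|g\|_{L^{\bar{r}}(L^q_x)}=\|G\|_{L^{\bar{r}}}$, whereas $\Aa_1 g$ is handled exactly as above with $\alpha=0$, via the classical Hardy inequality. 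For the second integral the elementary bound $|(t/s)^\alpha-1|\leq 2\alpha(t-s)/t$, valid for $s\in[t/2,t]$ because $s\mapsto(t/s)^\alpha$ is Lipschitz there, combined once more with the smoothing estimate for $\Delta e^{(t-s)\Delta}$, yields a pointwise control of the form $Ct^{-1}\int_{t/2}^{t}G(s)\,\dd s$, which is again bounded in $L^{\bar{r}}_t$ by Hardy since $\bar{r}>1$.

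The main obstacle is this near-diagonal piece $\Aa_2 f$: the kernel $1/(t-s)$ is exactly borderline and cannot be tamed by moduli of the heat kernel alone. The conceptual key is to absorb the weight $t^\alpha$ into the input by passing from $f$ to $g(s)=s^\alpha f(s)$, because the correction multiplier $(t/s)^\alpha-1$ then vanishes linearly in $t-s$ and precisely cancels the bad singularity, reducing matters to the unweighted maximal regularity theorem plus Hardy's inequality. The restriction $\alpha\in(0,1-1/\bar{r})$ enters naturally twice: it makes the Minkowski integral in the first step converge, and it is exactly the range for which the power weight $t^{\alpha\bar{r}}$ belongs to the Muckenhoupt class $A_{\bar{r}}$, which is consistent with the general weighted theory of singular integrals.
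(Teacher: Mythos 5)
Your argument is correct, but it follows a genuinely different route from the paper. The paper's proof is a one-line commutator trick at the level of the equation: it sets $v(t):=t^\alpha \Cc f(t)$, observes that $v$ solves the heat equation with zero data and right-hand side $t^\alpha f+\alpha t^{\alpha-1}\Cc f$, applies the unweighted Theorem \ref{Maximal_regularity_theorem} to get $\|t^\alpha\Aa f\|_{L^{\bar r}L^q}\lesssim \|t^\alpha f\|_{L^{\bar r}L^q}+\|\alpha t^{\alpha-1}\Cc f\|_{L^{\bar r}L^q}$, and then controls the lower-order term by the same change of variables $s=t\tau'$ plus Minkowski that you use, the condition $\alpha<1-1/\bar r$ entering there exactly as in your far-field piece. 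You instead perform the commutation at the level of the Duhamel kernel: a near/far splitting at $s=t/2$, with the far piece handled purely by the smoothing bound $\|\Delta e^{\tau\Delta}\|_{L^q\to L^q}\lesssim \tau^{-1}$ (which follows from $\|\Delta K(\tau,\cdot)\|_{L^1_x}\sim\tau^{-1}$ and Young, in the spirit of \eqref{estimate_nablaheatkernel}) together with Minkowski, and the near piece by transferring the weight onto the data via $t^\alpha f(s)=(t/s)^\alpha g(s)$, so that $\int_{t/2}^t\Delta e^{(t-s)\Delta}g\,\dd s=\Aa g-\Aa_1 g$ is controlled by the unweighted theorem plus Hardy, while the correction multiplier $(t/s)^\alpha-1=O((t-s)/t)$ cancels the borderline $1/(t-s)$ singularity. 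Both proofs rest on the same two ingredients (Theorem \ref{Maximal_regularity_theorem} and a Hardy-type averaging bound with the restriction $\alpha<1-1/\bar r$); the paper's version is shorter because differentiating the product $t^\alpha\Cc f$ produces the commutator term without any kernel splitting, whereas yours is more transparent about where the diagonal singularity sits and generalizes readily to other convolution operators with $O(1/(t-s))$ kernels. Your closing remark on the Muckenhoupt class $A_{\bar r}$ is a consistency check, not a step of the proof, and should be presented as such.
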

\begin{proof}
	Let us define $v(t):=t^\alpha \Cc f(t)$, where $\Cc$ is defined by Lemma \ref{Lemma3}. 
	Since $v(t)$ is solution of
	\begin{equation*}
	\begin{cases}
		\partial_t v(t,x)  -\Delta v (t,x) = t^{\alpha}f(t,x) + \alpha t^{\alpha-1}f(t,x)
		&(t,x)\in \RR_+\times \RR^N,\\
		v(0,x)=0	&x\in\RR^N,
	\end{cases}
	\end{equation*}
	by Theorem \ref{Maximal_regularity_theorem}, we deduce that
	\begin{equation*}
		\|t^\alpha \Aa f(t)\|_{L^{\bar{r}}(0,T;L^q_x)} \lesssim 
		\|t^\alpha f(t)\|_{L^{\bar{r}}(0,T;L^q_x)}+
		\|\alpha t^{\alpha-1} v(t)\|_{L^{\bar{r}}(0,T;L^q_x)}.
	\end{equation*}
	Observing that 
	\begin{equation*}
		 \|t^{\alpha-1} v(t)\|_{L^q_x}\lesssim t^\alpha \int_0^t \|f(\tau)\|_{L^q_x}\dd \tau
		 = \int_0^t \Big(\frac{t}{\tau}\Big)^\alpha  \tau^\alpha \|f(\tau)\|_{L^q_x}\frac{\dd \tau}{t}
		 =\int_0^1 (\tau')^{-\alpha} (t\tau')^\alpha \|f(t \tau')\|_{L^q_x}\dd \tau',
	\end{equation*}
	for every $t\in (0,T)$, then, by virtue of Minkowski inequality and because $\alpha<1-1/\bar{r}$, 
	\begin{align*}
		\|t^{\alpha-1} v(t)\|_{L^{\bar{r}}(0,T;L^q_x} 
		&\lesssim 
		\int_0^1 (\tau')^{-\alpha} 
		\Big(
			\int_0^T  (t\tau')^{\alpha \bar{r}} \|f(t \tau')\|_{L^q_x}^{\bar{r}} \dd t 
		\Big)^{\frac{1}{\bar{r}}}\dd \tau'\\
		&\lesssim
		\int_0^1 (\tau')^{-\alpha-\frac{1}{\bar{r}}} 
		\Big(
			\int_0^{T\tau'}  (t')^{\alpha \bar{r}} \|f(t ')\|_{L^q_x}^{\bar{r}} \dd t' 
		\Big)^{\frac{1}{\bar{r}}}\dd \tau'
		\lesssim \|t^\alpha f(t)\|_{L^{\bar{r}}(0,T;L^q_x)}.
	\end{align*}			 
\end{proof}

\noindent The next two Lemmas are a particular case of Lemma \ref{Lemma4} and Lemma \ref{Lemma5}, therefore we postpone the proof to the appendix.

\begin{lemma}\label{Lemma4b}
	Let the operator $\Cc$ be defined as in Lemma \ref{Lemma3}. 
	Consider $T\in(0,\infty]$, $1<\bar{r}<\infty$, and moreover suppose that 
	$q,\,\tilde{q}$ satisfy $N/2<q<N$, 
	$\,\max\{N, q\}<\tilde{q}\leq \infty$. Let $\alpha$, $\gamma$ and 
	$\bar{\gamma}$	be defined by
	\begin{equation*}
	\alpha:=\frac{1}{2}\big(3-\frac{N}{q} \big) -\frac{1}{\bar{r}},\quad	
	\gamma:=\frac{1}{2}\big(1-\frac{N}{\tilde{q}} \big)	-\frac{1}{\bar{r}}\quad\text{and}\quad
	\bar{\gamma}:=\frac{1}{2}\big(1-\frac{N}{\tilde{q}} \big).
	\end{equation*}
	If $t^{\alpha} f(t)$ belongs to $L^{\bar{r}}(0,T;L^{q}_x)$ then $t^{\gamma^\ee}\Cc f(t)$ 
	belongs to $L^{\bar{r}}(0,T;L^{\tilde{q}}_x)$. 
	Furthermore there exist $C=C(q, \tilde{q}, \bar{r})>0$ such that
	\begin{equation*}
		\|t^{\gamma} \Cc f(t) \|_{L^{\bar{r}}(0,T; L^{\tilde{q}}_x)}
		\leq 		
		C\|t^{\alpha}  f(t) \|_{L^{\tilde{r}}(0,T; L^{q}_x)}.
	\end{equation*}
	Moreover, if $\bar{r}>2$ and $N\bar{r}/(2\bar{r}-2)<q$,  
	then $t^{\bar{\gamma}}\Cc f(t)$ belongs to 
	$L^{\infty}_T L^{\tilde{q}}_x$ and there exists a positive constant 
	$\bar{C} = \bar{C} (q, \tilde{q}, \bar{r})$ 
	such that
	\begin{equation*}
		\|t^{\bar{\gamma}} \Cc f(t) \|_{L^{\infty}_T L^{\tilde{q}}_x}\leq 
		\bar{C} \|t^{\alpha}  f(t) \|_{L^{\bar{r}}_T L^{q}_x}.
	\end{equation*} 
\end{lemma}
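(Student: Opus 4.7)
The strategy is to treat Lemma~\ref{Lemma4b} as a weighted gain-of-integrability variant of Lemma~\ref{Lemma3}. By Young's inequality together with the heat-kernel bound \eqref{estimate_heatkernel}, one has, for the H\"older exponent $\mu$ defined by $1/\mu=1-1/q+1/\tilde q$, the pointwise estimate $\|e^{(t-s)\Delta}f(s)\|_{L^{\tilde q}_x}\lesssim (t-s)^{-\sigma}\|f(s)\|_{L^q_x}$ with $\sigma:=\tfrac{N}{2}(1/q-1/\tilde q)$. Setting $g(s):=s^{\alpha}f(s)$ and performing the change of variable $s=t\tau$, one obtains
\begin{equation*}
t^{\gamma}\|\Cc f(t)\|_{L^{\tilde q}_x}\;\lesssim\; t^{\gamma+1-\sigma-\alpha}\int_0^1(1-\tau)^{-\sigma}\tau^{-\alpha}\|g(t\tau)\|_{L^q_x}\,\dd\tau.
\end{equation*}
The decisive algebraic point is the identity $\alpha-\gamma=1-\sigma$, which follows directly from the definitions of $\alpha$ and $\gamma$ and makes the prefactor $t^{\gamma+1-\sigma-\alpha}$ collapse to $1$.

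For the first assertion, the plan is then to apply Minkowski's integral inequality to move the $L^{\bar r}_T$ norm inside the $\tau$-integral. The change of variable $t'=t\tau$ gives $\|g(t\tau)\|_{L^{\bar r}_T L^q_x}\le \tau^{-1/\bar r}\|g\|_{L^{\bar r}_T L^q_x}$, leaving the Beta-type integral $\int_0^1(1-\tau)^{-\sigma}\tau^{-\alpha-1/\bar r}\,\dd\tau$. Its finiteness is ensured by $\sigma<1$ (which holds since $q>N/2$ and $\tilde q>0$) and $\alpha+1/\bar r=(3-N/q)/2<1$ (which holds since $q<N$); these are the only places where the two endpoint hypotheses on $q$ are used.

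For the second assertion, the same pointwise bound with $\bar\gamma$ in place of $\gamma$ leaves a residual growth factor $t^{\bar\gamma-\gamma}=t^{1/\bar r}$ in front of the $\tau$-integral. Here I would switch from Minkowski to H\"older in $\tau$ with conjugate exponents $\bar r'$ and $\bar r$, pairing $(1-\tau)^{-\sigma}\tau^{-\alpha}$ against $\|g(t\tau)\|_{L^q_x}$. The change of variable $t'=t\tau$ now yields $\bigl(\int_0^1\|g(t\tau)\|_{L^q_x}^{\bar r}\,\dd\tau\bigr)^{1/\bar r}\le t^{-1/\bar r}\|g\|_{L^{\bar r}_T L^q_x}$, which exactly cancels the $t^{1/\bar r}$ and produces an estimate uniform in $t$. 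Convergence of the Beta integral $\int_0^1(1-\tau)^{-\sigma\bar r'}\tau^{-\alpha\bar r'}\,\dd\tau$ requires $\sigma\bar r'<1$ and $\alpha\bar r'<1$: the latter reduces once more to $q<N$, while the former, in the worst case $\tilde q=\infty$, becomes $N\bar r'/(2q)<1$, that is exactly the hypothesis $q>N\bar r/(2\bar r-2)$. As a byproduct, $q<N$ combined with this last inequality forces $\bar r>2$, consistently with the statement.

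I do not expect any serious analytical obstacle; the argument is essentially bookkeeping of exponents organised around the scaling identity $\alpha-\gamma=1-\sigma$. The only delicate choice is selecting Minkowski in the first case and H\"older in the second: applying Minkowski in the second case would produce a $\tau^{-1/\bar r}$ factor rather than the $t^{-1/\bar r}$ needed to absorb the excess $t^{1/\bar r}$, and would output the weight $\gamma$ rather than $\bar\gamma$.
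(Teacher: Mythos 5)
Your plan is correct and coincides with the paper's own argument (the proof of Lemma \ref{Lemma4} in the appendix, of which Lemma \ref{Lemma4b} is the case $\ee=0$): the same heat-kernel bound \eqref{estimate_heatkernel}, the same change of variable $s=t\tau$ organised around the identity $\alpha-\gamma=1-\tfrac{N}{2}(1/q-1/\tilde q)$, Minkowski for the $L^{\bar r}_T$ estimate and H\"older with exponents $\bar r',\bar r$ for the weighted $L^\infty_T$ estimate, with the Beta-integral conditions reducing exactly to $q<N$, $q>N\bar r/(2\bar r-2)$ and $\tfrac{N}{2}(1/q-1/\tilde q)<1$ as you state.
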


\begin{lemma}\label{Lemma5b}
	Let the operators $\Bb$ be defined as in Lemma \ref{Lemma2}. 
	Consider $T\in (0,\infty]$, $\ee\geq 0$ small enough, $1<\bar{r}<\infty$, and moreover suppose that 
	$q,\,\bar{q}$ satisfy $N/2<q<N$ and $q\leq \bar{q}$ such that $1/q-1/\bar{q}<1/N$. Let 
	$\alpha$ be defined as in Lemma \ref{Lemma4b} and $\beta$ and $\bar{\beta}$ be defined by
	\begin{equation*}
	\bar{\beta}:=\frac{1}{2}\big(2-\frac{N}{\bar{q}} \big)	\quad\text{and}\quad
	\beta:=\frac{1}{2}\big(1-\frac{N}{\bar{q}} \big)-\frac{1}{\bar{r}}.
	\end{equation*}
	If $t^{\alpha} f(t)$ belongs to $L^{\bar{r}}_T L^{q}_x$ then 
	$t^{\beta}\mathcal{B}f(t) $ 
	belongs to $ L^{\bar{r}}_TL^{\bar{q}}_x$ and there exists a positive constant
	$C=C (q, \bar{q}, \bar{r})$ such that
	\begin{equation*}
		\|t^{\beta} \mathcal{B} f(t) \|_{L^{\bar{r}}_T L^{\bar{q}}_x}\leq 
		C \|t^{\alpha}  f(t) \|_{L^{\bar{r}}_T L^{q}_x}.
	\end{equation*}
	Moreover, if $\bar{r}>2$, $N\bar{r}/(2\bar{r}-2)<q$ and $\bar{q}<Nr$ 
	then $t^{\bar{\beta}}\mathcal{B}f(t)$ belongs to 
	$L^{\infty}_TL^{\bar{q}}_x$ and there exists a positive constant 
	$\bar{C} = \bar{C} (q, \bar{q}, \bar{r})$ 
	such that
	\begin{equation*}
		\|t^{\bar{\beta}} \mathcal{B} f(t) \|_{L^{\infty}_T L^{\bar{q}}_x}\leq 
		\bar{C} \|t^{\alpha}  f(t) \|_{L^{\bar{r}}_T L^{q}_x}.
	\end{equation*}
\end{lemma}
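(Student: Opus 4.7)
The plan is to bound $\Bb f(t)$ via a pointwise heat-kernel estimate and then carry out a one-dimensional scaling argument, mimicking the proof of Lemma \ref{Lemma2} and the technique used in Theorem \ref{Maximal_regularity_Thm_weight_time}. The key input is the Gaussian bound $\|\nabla K(t,\cdot)\|_{L^\lambda_x}\simeq t^{-\frac{1}{2}-\frac{N}{2\lambda'}}$ combined with Young's convolution inequality, which gives
$$\|\nabla e^{(t-s)\Delta}f(s)\|_{L^{\bar{q}}_x}\lesssim (t-s)^{-\frac{1}{2}-\frac{N}{2}(\frac{1}{q}-\frac{1}{\bar{q}})}\|f(s)\|_{L^q_x},$$
the exponent being strictly above $-1$ thanks to the standing assumption $\frac{1}{q}-\frac{1}{\bar{q}}<\frac{1}{N}$.

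Next I would perform the change of variable $s=t\tau$ with $\tau\in(0,1)$, and introduce $G(s):=s^{\alpha}\|f(s)\|_{L^q_x}$, which lies in $L^{\bar{r}}_T$ by hypothesis. The whole point of the algebraic definitions of $\alpha$, $\beta$ and $\bar{\beta}$ is that the $t$-powers match precisely, leaving an estimate of the form
$$t^{\beta}\|\Bb f(t)\|_{L^{\bar{q}}_x}\lesssim \int_0^1 (1-\tau)^{-\frac{1}{2}-\frac{N}{2}(\frac{1}{q}-\frac{1}{\bar{q}})}\tau^{-\alpha}\,G(t\tau)\,\dd\tau,$$
with the same formula multiplied by an extra $t^{1/\bar{r}}$ in the $\bar{\beta}$-case, since $\bar{\beta}-\beta=1/\bar{r}$.

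For the first conclusion ($L^{\bar{r}}_T$ target), I would apply Minkowski's integral inequality in $(t,\tau)$ together with the dilation identity $\|G(t\tau)\|_{L^{\bar{r}}_T}\leq \tau^{-1/\bar{r}}\|G\|_{L^{\bar{r}}_T}$, reducing the statement to the convergence of $\int_0^1 (1-\tau)^{-\frac{1}{2}-\frac{N}{2}(\frac{1}{q}-\frac{1}{\bar{q}})}\tau^{-\alpha-1/\bar{r}}\dd\tau$: integrability at $\tau=1$ is exactly the hypothesis on $\frac{1}{q}-\frac{1}{\bar{q}}$, while integrability at $\tau=0$ reduces, after substitution of $\alpha$, to $q<N$. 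For the second conclusion ($L^\infty_T$ target), I would instead apply H\"older's inequality in $\tau$ with conjugate exponents $(\bar{r}',\bar{r})$, so that the $G$-factor contributes $\|G(t\tau)\|_{L^{\bar{r}}_\tau}\leq t^{-1/\bar{r}}\|G\|_{L^{\bar{r}}_T}$ and exactly cancels the extra $t^{1/\bar{r}}$. The remaining requirement, finiteness of the $L^{\bar{r}'}$-norm of $(1-\tau)^{-\frac{1}{2}-\frac{N}{2}(\frac{1}{q}-\frac{1}{\bar{q}})}\tau^{-\alpha}$ on $(0,1)$, is precisely what forces the strengthened hypotheses $\bar{r}>2$, $N\bar{r}/(2\bar{r}-2)<q$ and $\bar{q}<N\bar{r}$.

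The analytic content is quite light: Young, Minkowski and H\"older, plus the kernel estimate. The main obstacle is therefore purely bookkeeping, namely tracking how the specific values of $\alpha$, $\beta$, $\bar{\beta}$ are tuned so that the $t$-exponents balance, and checking that the imposed conditions on $(q,\bar{q},\bar{r})$ correspond exactly to placing the singular powers of $\tau$ and $1-\tau$ on the correct side of the integrability threshold. Since this lemma is explicitly identified as a special case of the more general Lemma \ref{Lemma5}, one could alternatively derive it by straight specialization of the exponents there.
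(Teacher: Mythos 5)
Your plan is exactly the paper's proof of Lemma \ref{Lemma5} (of which Lemma \ref{Lemma5b} is the case $\ee=0$): the pointwise bound coming from \eqref{estimate_nablaheatkernel}, the substitution $s=t\tau$ with $F(s)=s^{\alpha}\|f(s)\|_{L^q_x}$, Minkowski's inequality for the $L^{\bar r}_T$ estimate and H\"older with exponent $\bar r'$ for the $L^\infty_T$ estimate, together with the same integrability checks of the powers of $\tau$ and $1-\tau$ at $0$ and $1$. The one caveat is that your exact cancellations (the vanishing of the residual $t$-power and $\bar\beta-\beta=\tfrac1{\bar r}$) hold for $\beta=\tfrac12\big(2-\tfrac{N}{\bar q}\big)-\tfrac1{\bar r}$, which is the value the paper's own proof (via the identity $\beta-\alpha+1=\tfrac12+\tfrac N2(\tfrac1q-\tfrac1{\bar q})$) and the space $\Y_{r,T}$ actually use, rather than the $\tfrac12\big(1-\tfrac{N}{\bar q}\big)-\tfrac1{\bar r}$ printed in the statement, which is evidently a misprint.
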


\noindent
For the main properties of homogeneous Besov Spaces we refer to \cite{MR2768550}. However, let us briefly recall the definition and two important results which characterize such spaces in relation to the heat kernel. 
\begin{definition}
	Let $\chi$ be a smooth nonincreasing radial function which has support in $B(0,1)$ and such that 
	$\chi\equiv 1$ on $B(0,1/2)$. Imposing $\varphi_q(\xi) := \chi(\xi2^{-q-1}) - \chi(\xi2^{-q})$ for every 
	$q\in \ZZ$, we define 
	the homogeneous Lettlewood-Paley dyadic block $\dot{\Delta}_q$ by
	\begin{equation*}
		\dot{\Delta}_q u := \mathcal{F}^{-1}(\varphi_q \mathcal{F}u),
	\end{equation*}	 
	where $u$ is a temperate distribution and $\mathcal{F}$ is the Fourier transform on $\RR^N$.
\end{definition}
\noindent The homogeneous Besov Space is defined as follows:
\begin{definition}
	For any $s\in \RR$ and $(p,r)\in [1,\infty]^2$, let us define $\BB_{p,r}^s$ as the set of tempered 
	distribution $f$ such that
	\begin{equation*}
		\|f\|_{\BB_{p,r}^s}:= \|2^{sq}\|\dot{\Delta}_q f\|_{L^p_x}\|_{l^r(\ZZ)}
	\end{equation*}
	and for all smooth compactly supported function $\theta$ on $\RR^N$ we have
	\begin{equation*}
		\lim_{\lambda\rightarrow +\infty} \theta (\lambda D)f = 0\quad\text{in}\quad L^\infty(\RR^N).
	\end{equation*}
\end{definition}

\begin{theorem}[Characterization of Homogeneous Besov Spaces]\label{Characterization_of_hom_Besov_spaces}
	Let $s$ be a negative real number and $(p,r)\in [1,\infty]^2$. $u$ belongs to $\dot{B}_{p,r}^s$ 
	if and only if $e^{t\Delta}u$ belongs to $L^p_x$ for almost every $t\in \RR_+$ and 
	\begin{equation*}
		t^{-\frac{s}{2}}\left\|e^{t\Delta}u\right\|_{L^p_x}\in L^r\Big(\RR_+;\frac{\dd t}{t}\Big).
	\end{equation*}
	Moreover, there exists a positive constant $C$ such that
	\begin{equation*}
		\frac{1}{C}\left\| u\right\|_{\dot{B}_{p,r}^s} \leq
		\left\|\left\|t^{-\frac{s}{2}}e^{t\Delta}u\right\|_{L^p_x}
		\right\|_{L^r(\RR_+;\frac{\dd t}{t})}\leq
		C \left\| u\right\|_{\dot{B}_{p,r}^s}\text{.}
	\end{equation*}
\end{theorem}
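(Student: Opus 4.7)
The plan is to rely on the Littlewood-Paley decomposition together with the clean interaction between the heat semigroup and dyadic frequency localization. The base ingredient is the estimate
\[
\|e^{t\Delta} \dot{\Delta}_q u\|_{L^p_x} \lesssim e^{-c t\, 2^{2q}} \|\dot{\Delta}_q u\|_{L^p_x}
\]
for some absolute constant $c>0$, which follows from the fact that $\supp \mathcal{F}(\dot{\Delta}_q u) \subset \{2^{q-1} \leq |\xi| \leq 2^{q+1}\}$ combined with a Bernstein-type argument: writing $e^{t\Delta} \dot{\Delta}_q u = K_{q,t}^{\sharp} \ast \dot{\Delta}_q u$ with $K_{q,t}^{\sharp} := \mathcal{F}^{-1}(\tilde{\varphi}_q(\xi)\, e^{-t|\xi|^2})$ for a slightly enlarged bump $\tilde{\varphi}_q$ equal to $1$ on $\supp \varphi_q$, a change of variable $\xi = 2^q \eta$ reduces $\|K_{q,t}^{\sharp}\|_{L^1_x}$ to the $L^1$-norm of the Fourier inverse of a smooth compactly supported function, controlled by $e^{-ct\, 2^{2q}}$.

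For the direction $\BB^s_{p,r} \hookrightarrow$ heat characterization, I would sum the base estimate over $q \in \ZZ$ to obtain the pointwise bound
\[
t^{-s/2}\|e^{t\Delta} u\|_{L^p_x} \leq \sum_{q \in \ZZ} f(t\, 2^{2q})\, a_q,
\]
where $a_q := 2^{sq}\|\dot{\Delta}_q u\|_{L^p_x}$ and $f(\tau) := \tau^{-s/2} e^{-c\tau}$. Because $s<0$, the function $f$ lies in $L^1(\RR_+; \dd\tau/\tau)$. Taking the $L^r(\RR_+; \dd t/t)$-norm and interpreting the right-hand side as a convolution on the multiplicative group $(\RR_+, \cdot)$ between $f$ and the discrete measure $\sum_q a_q \delta_{2^{-2q}}$, a standard smoothing of this measure (replacing each Dirac by a scaled indicator over a dyadic interval of bounded logarithmic length) reduces the estimate to Young's inequality and yields an upper bound by a multiple of $\|(a_q)\|_{\ell^r(\ZZ)} = \|u\|_{\BB^s_{p,r}}$.

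For the converse, I would invert the localization: writing $\varphi_q(\xi) = \varphi_q(\xi) e^{t|\xi|^2} \cdot e^{-t|\xi|^2}$ yields $\dot{\Delta}_q u = K_{q,t} \ast e^{t\Delta} u$ with $K_{q,t} := \mathcal{F}^{-1}(\varphi_q(\xi)\, e^{t|\xi|^2})$. A rescaling $\xi = 2^q \eta$ shows that $\|K_{q,t}\|_{L^1_x}$ is bounded uniformly in $q$ whenever $t \in [2^{-2q-1}, 2^{-2q+1}]$, hence
\[
2^{sq} \|\dot{\Delta}_q u\|_{L^p_x} \lesssim t^{-s/2} \|e^{t\Delta} u\|_{L^p_x} \quad \text{for } t \sim 2^{-2q},
\]
using also that $t^{-s/2} \sim 2^{sq}$ on that interval. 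Raising to the $r$-th power, integrating in $\dd t/t$ over the dyadic interval, and summing over $q \in \ZZ$ (the intervals having bounded overlap) gives the bound on $\|u\|_{\BB^s_{p,r}}$ by $\|t^{-s/2}\|e^{t\Delta} u\|_{L^p_x}\|_{L^r(\RR_+;\dd t/t)}$.

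The main obstacle is the Young-convolution step in the forward direction: passing from the discrete sum $\sum_q f(t\, 2^{2q}) a_q$ to a bound by $\|(a_q)\|_{\ell^r(\ZZ)}$ in the continuous $L^r(\dd t/t)$-norm requires either the smoothing trick above to reduce to Young's inequality on $(\RR_+, \cdot)$, or a case-split into the regimes $t\, 2^{2q} \leq 1$ and $t\, 2^{2q} \geq 1$ followed by direct discrete-convolution estimates (the first regime exploiting $-s/2 > 0$ for summability at low frequencies, the second the exponential decay for high frequencies). A secondary subtlety is that the condition $\theta(\lambda D) u \to 0$ in $L^\infty(\RR^N)$ appearing in the definition must be invoked to guarantee that the Littlewood-Paley series reconstructs $u$ itself and not merely $u$ modulo polynomials, without which the two norms would only match up to that polynomial ambiguity.
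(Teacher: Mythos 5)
The paper does not actually prove this statement --- it is recalled verbatim from the cited monograph of Bahouri--Chemin--Danchin \cite{MR2768550} (Theorem 2.34 there) --- and your argument is essentially the proof given in that reference: the localized heat bound $\|e^{t\Delta}\dot{\Delta}_q u\|_{L^p_x}\lesssim e^{-ct2^{2q}}\|\dot{\Delta}_q u\|_{L^p_x}$ plus a Young/Schur-type estimate for the mixed discrete--continuous convolution with the kernel $\tau\mapsto \tau^{-s/2}e^{-c\tau}\in L^1(\RR_+;\dd\tau/\tau)$ in one direction, and the uniform $L^1_x$ bound on $\mathcal{F}^{-1}\big(\varphi_q(\xi)e^{t|\xi|^2}\big)$ for $t\sim 2^{-2q}$ followed by averaging over dyadic time intervals in the other. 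Your outline is correct, including the two points you rightly flag: the convolution step needs either the case-split $t2^{2q}\lessgtr 1$ or the smoothing of the Dirac comb (legitimate here because of the decay and quasi-monotonicity of the kernel, not for a general $L^1$ function), and the low-frequency condition $\theta(\lambda D)u\to 0$ is what guarantees $u=\sum_q\dot{\Delta}_q u$, so the equivalence is stated within that class.
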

\noindent
An immediate consequence is the following Corollary:
\begin{cor}\label{Cor_Characterization_of_hom_Besov_spaces}
	Let $p\in [1,\infty]$ and $r\in [1,\infty)$. $u$ belongs to $\dot{B}_{p,r}^{-\frac{2}{r}}$ 
	if and only if $e^{t\Delta}u\in L^r_t L^p_x$.
	Moreover, there exists a positive constant $C$ such that
	\begin{equation*}
		\frac{1}{C}\left\| u\right\|_{\dot{B}_{p,r}^{-\frac{2}{r}}} \leq
		\left\|e^{t\Delta}u
		\right\|_{L^r_tL^p_x} \leq
		C \left\| u\right\|_{\dot{B}_{p,r}^{-\frac{2}{r}}}\text{.}
	\end{equation*}
\end{cor}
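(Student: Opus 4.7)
The plan is to invoke Theorem \ref{Characterization_of_hom_Besov_spaces} directly with the negative regularity index $s = -2/r$. Since $r \in [1,\infty)$, we indeed have $s < 0$, so the hypothesis of the theorem is satisfied. With this choice of $s$, the weight appearing in the Besov characterization becomes $t^{-s/2} = t^{1/r}$, and the theorem yields the equivalence
\begin{equation*}
\|u\|_{\dot{B}_{p,r}^{-2/r}} \simeq \bigl\| t^{1/r}\, \|e^{t\Delta}u\|_{L^p_x} \bigr\|_{L^r(\RR_+;\,\dd t/t)}.
\end{equation*}

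The only remaining step is the elementary algebraic identity showing that this weighted norm coincides with the unweighted mixed Lebesgue norm $L^r_t L^p_x$. Explicitly, one writes
\begin{equation*}
\bigl\| t^{1/r}\, \|e^{t\Delta}u\|_{L^p_x} \bigr\|_{L^r(\RR_+;\,\dd t/t)}^{r}
= \int_0^{\infty} t \, \|e^{t\Delta}u\|_{L^p_x}^{r} \, \frac{\dd t}{t}
= \int_0^{\infty} \|e^{t\Delta}u\|_{L^p_x}^{r}\, \dd t
= \|e^{t\Delta}u\|_{L^r_t L^p_x}^{r},
\end{equation*}
and extracts the $r$-th root to obtain both inequalities with the same constant $C$ provided by Theorem \ref{Characterization_of_hom_Besov_spaces}.

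There is no real obstacle here: the corollary is a tautological rewriting of the parent theorem in the specific case where the weight power and the measure $\dd t/t$ cancel to produce plain Lebesgue measure, which is exactly what forces the particular index choice $s = -2/r$. The range $p \in [1,\infty]$ and the finiteness $r < \infty$ (needed to make $\|\cdot\|_{L^r_tL^p_x}$ meaningful via the identity above) are all consistent with the hypotheses of the underlying characterization, so nothing new needs to be checked.
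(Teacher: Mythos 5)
Your proposal is correct and is exactly the argument the paper has in mind: the corollary is stated as an immediate consequence of Theorem \ref{Characterization_of_hom_Besov_spaces}, obtained by taking $s=-2/r<0$ (valid since $r<\infty$) and noting that the weight $t^{-s/2}=t^{1/r}$ cancels the measure $\dd t/t$ to give the plain $L^r_tL^p_x$ norm. Nothing further needs to be checked.
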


\begin{theorem}\label{Theorem_embedding_Besov}
	Let $1\leq p_1\leq p_2 \leq \infty$ and $1\leq r_1\leq r_2\leq \infty$. Then for any real 
	number $s$, the space $\dot{B}_{p_1,r_1}^{s}$ is continuously embedded in 
	$\dot{B}_{p_2,r_2}^{s-N(1/p_1-1/p_2)  }$. 
\end{theorem}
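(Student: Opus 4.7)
The plan is to reduce the embedding to a dyadic block estimate, using the Bernstein inequality to convert the gain of integrability into a factor of $2^{qN(1/p_1-1/p_2)}$ on each Littlewood--Paley block, and then to use the monotonicity of $\ell^r$ norms to handle the passage from $r_1$ to $r_2$.

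First I would recall (or quickly establish) the Bernstein inequality: if $g$ is a tempered distribution whose Fourier transform is supported in an annulus $\{|\xi|\sim 2^q\}$, then for $1\leq p_1\leq p_2\leq \infty$ one has
\begin{equation*}
    \|g\|_{L^{p_2}_x}\lesssim 2^{qN(1/p_1-1/p_2)}\|g\|_{L^{p_1}_x}.
\end{equation*}
This follows by writing $g=\tilde{\varphi}_q * g$ for a suitable function $\tilde{\varphi}_q$ obtained by rescaling a fixed Schwartz function whose Fourier transform equals $1$ on the annulus supporting $\varphi_q$, and applying Young's convolution inequality together with the scaling $\|\tilde{\varphi}_q\|_{L^m_x}=2^{qN(1-1/m)}\|\tilde{\varphi}_0\|_{L^m_x}$ with $1+1/p_2 = 1/m+1/p_1$.

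Applying Bernstein to $g=\dot{\Delta}_q u$ (whose spectrum lies in the annulus $\{|\xi|\sim 2^q\}$ by construction of $\varphi_q$), I obtain
\begin{equation*}
    2^{q(s-N(1/p_1-1/p_2))}\|\dot{\Delta}_q u\|_{L^{p_2}_x}\lesssim 2^{qs}\|\dot{\Delta}_q u\|_{L^{p_1}_x}
\end{equation*}
uniformly in $q\in\ZZ$. Taking the $\ell^{r_2}(\ZZ)$ norm on both sides and using the trivial embedding $\ell^{r_1}(\ZZ)\hookrightarrow \ell^{r_2}(\ZZ)$ for $r_1\leq r_2$, the right-hand side is controlled by $\|u\|_{\dot{B}^{s}_{p_1,r_1}}$, yielding
\begin{equation*}
    \|u\|_{\dot{B}^{s-N(1/p_1-1/p_2)}_{p_2,r_2}}\lesssim \|u\|_{\dot{B}^{s}_{p_1,r_1}}.
\end{equation*}

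The only subtlety is checking the low-frequency vanishing condition in the definition of the homogeneous Besov space: one must verify that if $u\in \dot{B}^{s}_{p_1,r_1}$ (so that $\theta(\lambda D)u\to 0$ in $L^\infty$ as $\lambda\to\infty$), then the same limit property holds with $p_1$ replaced by $p_2$, or equivalently that $u$ is well-defined as a tempered distribution in the class used to define $\dot{B}^{s-N(1/p_1-1/p_2)}_{p_2,r_2}$. This part is a standard but slightly tedious verification, and it is the step most likely to require care; it reduces to another application of Bernstein to $\theta(\lambda D) u$ combined with the original vanishing condition. Everything else is essentially bookkeeping.
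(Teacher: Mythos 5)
Your proposal is correct and is exactly the standard Bernstein-based argument: the paper itself does not prove this statement but refers to Bahouri--Chemin--Danchin \cite{MR2768550}, where the embedding is obtained precisely by applying Bernstein's inequality to each block $\dot{\Delta}_q u$ and then using $\ell^{r_1}(\ZZ)\hookrightarrow\ell^{r_2}(\ZZ)$. The ``subtlety'' you flag at the end is in fact vacuous here: the low-frequency condition $\theta(\lambda D)u\to 0$ in $L^\infty$ appearing in the paper's definition involves only the distribution $u$ and not the indices $(p,r)$, so the same $u$ trivially satisfies it in the target space and no further verification is needed.
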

\section{Smooth initial data}

\noindent In order to prove Theorem \ref{Main_Theorem} and Theorem \ref{Main_Theorem_2}, in this section we are going to establish the global existence of a solution for system \eqref{Navier_Stokes_system}, considering more regular initial data. More precisely we are going to consider an initial Lipschitz density and moreover we suppose the initial velocity and the initial director field with a little bit more regularity with respect to the one of Theorem \ref{Main_Theorem} and Theorem \ref{Main_Theorem_2}. 

\begin{prop}\label{Theorem_solutions_smooth_dates} 
	Let $\ee\in (0,\,1]$, $r\in(1,\,2/(2-\ee)]$, $p\in (\,1,\,Nr/(3r-2)\,]$. Suppose that the initial 
	condition \eqref{initial_data} is fulfilled and moreover 
	$\nabla a_0\in L^\infty_x$ and $(u_0,\,\nabla d_0) \in \BB^{N/p-1+\ee}_{p,r}$. If the smallness condition \eqref{smallness_condition} 
	holds, then \eqref{Navier_Stokes_system_1} admits a global weak-solution $(a,\, u,\,d,\, \nabla\Pi)$ 
	such that $(a,\,d)\in L^{\infty}_{t,x}$, $\nabla a\in L^\infty_{t,loc}L^\infty_x$, 
	$(u,\,\nabla d,\,\nabla \Pi)$ belongs to $\X_r\cap \X_r^\ee$ and 	$(u,\nabla d)$ belongs to 
	$L^2_t L^\infty_x$, with 
	\begin{equation*}
	\begin{aligned}
		&\|(u,\,\nabla d,\,\nabla \Pi)\|_{\X_r}+\|( u,\,\nabla d)\|_{L^{2}_tL^\infty_x} 
		\lesssim \eta\\
		&\|(u,\,\nabla d,\,\nabla \Pi)\|_{\X_r^\ee}
		\lesssim 	
		\|(u_0,\,\nabla d_0)\|_{\BB_{p,r}^{\frac{N}{p}-1+\ee}}+
		\|\nabla d_0\|_{\BB_{p,r}^{\frac{N}{p}-1+\frac{2}{3}\ee}}^3.
	\end{aligned}
	\end{equation*}
\end{prop}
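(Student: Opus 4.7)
The plan is to build the smooth solution by a Friedrichs-style iteration, close uniform bounds with the maximal regularity and mixed-norm lemmas of Section 2 together with the smallness of $\eta$, pass to the limit by contraction, and finally upgrade the regularity to $\X_r^\ee$. I would define the iterates $(a^n, u^n, d^n, \nabla \Pi^n)$ starting from $(a_0, e^{t\Delta}u_0, e^{t\Delta}d_0, 0)$ and solving successively: the transport equation $\partial_t a^n + u^{n-1}\cdot\nabla a^n = 0$ with $a^n|_{t=0}=a_0$; the inhomogeneous Stokes system $\partial_t u^n - \Delta u^n + \nabla \Pi^n = -u^{n-1}\cdot\nabla u^{n-1} + a^{n-1}(\Delta u^{n-1}-\nabla\Pi^{n-1}) - \Div(\nabla d^{n-1}\odot\nabla d^{n-1})$, $\Div u^n=0$, $u^n|_{t=0}=u_0$; and the heat equation $\partial_t d^n-\Delta d^n = -u^{n-1}\cdot\nabla d^{n-1}+|\nabla d^{n-1}|^2 d^{n-1}$ with $d^n|_{t=0}=d_0$. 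Because the transport equation preserves $L^\infty$, one has $\|a^n\|_{L^\infty_{t,x}}\le \|a_0\|_{L^\infty_x}$ for free; Gronwall along the characteristics of $u^{n-1}$ then delivers $\nabla a^n\in L^\infty_{t,\mathrm{loc}}L^\infty_x$ once the Lipschitz control on $u^{n-1}$ is secured through the additional regularity $(u_0,\nabla d_0)\in \BB^{N/p-1+\ee}_{p,r}$.

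The heart of the proof is the induction producing the bound $\|(u^n,\nabla d^n,\nabla\Pi^n)\|_{\X_r}+\|(u^n,\nabla d^n)\|_{L^2_tL^\infty_x}\le C\eta$. The free-heat contribution is dominated by $\eta$ via Theorem \ref{Characterization_of_hom_Besov_spaces}, Corollary \ref{Cor_Characterization_of_hom_Besov_spaces} and the embedding $\BB^{N/p-1}_{p,r}\hookrightarrow \BB^{N/q-1}_{q,r}$ of Theorem \ref{Theorem_embedding_Besov} with $q=Nr/(3r-2)$, whose scaling matches exactly the top-order norm $\|(\nabla^2 u,\nabla^3 d,\nabla\Pi)\|_{L^r_t L^q_x}$ defining $\X_r$. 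The Duhamel contributions would be bounded by Theorem \ref{Maximal_regularity_theorem} on the second-order derivatives and by Lemmas \ref{Lemma2}--\ref{Lemma3} on each intermediate mixed-norm entering $\X_r$; the nonlinear sources are controlled by H\"older inequalities of the form
\begin{equation*}
\|u^{n-1}\cdot\nabla u^{n-1}\|_{L^r_t L^q_x}\lesssim \|u^{n-1}\|_{L^{2r}_t L^{Nr/(r-1)}_x}\|\nabla u^{n-1}\|_{L^{2r}_t L^{Nr/(2r-1)}_x},
\end{equation*}
together with the analogous bound for $\Div(\nabla d^{n-1}\odot\nabla d^{n-1})$ and $\|a^{n-1}(\Delta u^{n-1}-\nabla\Pi^{n-1})\|_{L^r_t L^q_x}\le \|a_0\|_{L^\infty_x}\|(\nabla^2 u^{n-1},\nabla\Pi^{n-1})\|_{L^r_t L^q_x}$. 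The $L^2_t L^\infty_x$ piece comes from the embedding $\BB^{N/p-1}_{p,r}\hookrightarrow \BB^{-1}_{\infty,r}\hookrightarrow \BB^{-1}_{\infty,2}$, which is legitimate since $r<2/(2-\ee)\le 2$, combined with Corollary \ref{Cor_Characterization_of_hom_Besov_spaces}; the nonlinear contribution is absorbed through $\||\nabla d^{n-1}|^2 d^{n-1}\|_{L^1_t L^\infty_x}\lesssim \|d^{n-1}\|_{L^\infty_{t,x}}\|\nabla d^{n-1}\|_{L^2_t L^\infty_x}^2$, the $L^\infty$ bound on $d^n$ following from the maximum principle applied to the scalar heat equation satisfied by $d^n$. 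A bootstrap with $c_0$ small enough propagates the bound $C\eta$ through the induction.

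Convergence is obtained by subtracting two consecutive levels and observing that the differences $(\delta a^n,\delta u^n,\delta d^n,\nabla\delta\Pi^n)$ satisfy linear equations whose right-hand sides are products of a uniformly bounded factor and one of the differences; the same multilinear inequalities produce a contraction factor of order $\eta$ in $\X_r\cap L^2_tL^\infty_x$. Passing to the distributional limit yields a weak solution of \eqref{Navier_Stokes_system} with the stated regularity, and the constraint $|d|=1$ is recovered from the maximum principle applied to the equation $\partial_t(|d|^2-1)+u\cdot\nabla(|d|^2-1)-\Delta(|d|^2-1)=2|\nabla d|^2(|d|^2-1)$ combined with $|d_0|=1$. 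The extra bound in $\X_r^\ee$ is obtained by re-running the maximal regularity argument with Theorem \ref{Maximal_regularity_Thm_weight_time} and Lemmas \ref{Lemma4b}--\ref{Lemma5b} tuned to the weight exponents of $\X_r^\ee$ (with $\bar r=2/(2-\ee)$): in every quadratic product one factor is already controlled in $\X_r$ by $\eta$, so the estimate becomes linear in the $\X_r^\ee$-norm and the smallness closes it, the cubic $|\nabla d|^2 d$ accounting for the $\|\nabla d_0\|_{\BB^{N/p-1+2\ee/3}_{p,r}}^3$ term in the final inequality through an interpolation between $\BB^{N/p-1}_{p,r}$ and $\BB^{N/p-1+\ee}_{p,r}$.

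The main obstacle is the simultaneous management of the supercritical nonlinearity $|\nabla d|^2 d$ and of the low-regularity density. The first forces the $L^2_t L^\infty_x$ bound on $(u,\nabla d)$ (and hence the restriction $r<2/(2-\ee)\le 2$) in order to produce an $L^1_t L^\infty_x$ control of the source in the $d$-equation and, through the maximum principle, an $L^\infty_{t,x}$ bound on $d$. The second requires the $\X_r$-norm to contain at the same time the maximal regularity piece $(\nabla^2 u,\nabla \Pi)\in L^r_t L^q_x$ (so that $a(\Delta u-\nabla\Pi)$ depends linearly on $\X_r$ times the small multiplier $\|a_0\|_{L^\infty_x}$) and the precise mixed Lebesgue norms needed to estimate $u\cdot\nabla u$ and $\Div(\nabla d\odot\nabla d)$ in the very same $L^r_t L^q_x$ space; it is the interplay between these two demands that dictates the choice $q=Nr/(3r-2)$ and the restriction $p\le q$.
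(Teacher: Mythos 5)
Your iteration, the closure of the uniform $\X_r\cap L^2_tL^\infty_x$ bounds by smallness of $\eta$, and the structural remark that the $\X_r^\ee$ estimate is linear because one factor of each quadratic term is already small in $\X_r$, all match the paper (the paper uses a semi-implicit scheme solved at each step via Proposition \ref{Appx_Prop_exist_sol_scheme}, you use a fully explicit one; that difference is harmless). The genuine gap is in the convergence step. You claim that the difference system yields a contraction of order $\eta$ because every source is ``a uniformly bounded factor times a difference''. This fails precisely for the density terms: in the momentum difference equation one must estimate $\delta a^{n}\,(\Delta u^{n},\,\nabla\Pi^{n},\,\Div(\nabla d^{n}\odot\nabla d^{n}))$ (or the analogous terms of your explicit scheme), and the only control on $\delta a^{n}$ is the transport estimate $\|\delta a^{n}(t)\|_{L^\infty_x}\le\int_0^t\|\delta u^{n-1}(s)\|_{L^\infty_x}\|\nabla a^{n}(s)\|_{L^\infty_x}\,\dd s$, whose coefficient $\bar C(T)\|\nabla a_0\|_{L^\infty_x}$ is finite (thanks to the Lipschitz bound coming from the extra $\ee$-regularity) but in no way small. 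So what you actually get is $\delta U^n(T)\lesssim \eta\,\delta U^{n-1}(T)+\bar C(T)\|\nabla a_0\|_{L^\infty_x}\,(\text{a time-integrated norm of }\delta U^{n-1})$, which is not a contraction. The paper closes this by exploiting the gain of a time integral: it first proves the improved time-integrability estimate \eqref{ineq_proof_prop_smooth_date5} (norms in $L^{4r/(4-\ee)}_T$) so that $\delta a^n$ can be paired by H\"older in time, arriving at \eqref{estimate_deltaUn}, and then runs the induction \eqref{estimate_deltaUn2}, $\delta U^{n}(t)\le C\eta^{n/2}\exp\{K(T)t/\sqrt{\eta}\}$, to obtain summability of $\sum_n\delta U^n(T)$ on every finite interval. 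Without this (or an equivalent Gronwall-in-$n$ device) your Cauchy-sequence argument does not close; note that this is exactly the place where the hypotheses $\nabla a_0\in L^\infty_x$ and $(u_0,\nabla d_0)\in\BB^{N/p-1+\ee}_{p,r}$ are actually needed.

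Two smaller points. First, $\X_r^\ee$ carries no weight in time, so the tools for the extra estimate are the unweighted Lemmas \ref{Lemma2}--\ref{Lemma3} and Theorem \ref{Maximal_regularity_theorem} applied with $\ee$-shifted Lebesgue exponents (as the paper does), not Theorem \ref{Maximal_regularity_Thm_weight_time} and Lemmas \ref{Lemma4b}--\ref{Lemma5b}, which are designed for the weighted spaces $\Y_r$, $\Y_r^\ee$. Second, the maximum principle alone does not bound $\|d^n\|_{L^\infty_{t,x}}$, since $d^n$ solves a heat equation with the source $|\nabla d^{n-1}|^2 d^{n-1}-u^{n-1}\cdot\nabla d^{n-1}$; what works is exactly the Duhamel bound with the $L^1_tL^\infty_x$ control of the source that you also wrote down, which is the paper's argument (combined there with Gronwall because of its semi-implicit scheme).
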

\begin{proof}
The mainly idea is to proceed by iterate scheme. We solve a sequence of linear systems which comes from \eqref{Navier_Stokes_system} and we prove that their solutions converge to the one we are looking for. We set $(a^0,\, u^0, \nabla d^0,  \nabla \Pi^0)=(0,0,0,0)$ and we solve inductively the following two systems:
\begin{equation}\label{prop_smooth_dates_system1}
\begin{cases}
		\;\partial_ta^{n} +  u^{n-1}\cdot\nabla a^{n}=0				& \RR_+ \times\RR^N,\\
		\;a^n_{|t=0} = a_0																& \;\;\quad \quad\RR^N,\\
	\end{cases}
\end{equation}
\begin{equation}\label{prop_smooth_dates_system2}
	\begin{cases}
		\; \partial_t d^n  -\Delta d^n=
		|\nabla d^{n-1}|^2 d^{n-1}-u^{n-1}\cdot \nabla d^{n-1}						& \RR_+ \times\RR^N,\\
		\;\partial_t u^n + u^{n-1}\cdot \nabla u^{n}-\Delta u^n + \nabla \Pi^n =F^n	& \RR_+ \times\RR^N,\\
		\;\Div\, u^n = 0															& \RR_+ \times\RR^N,\\
		\;(u^n,d^n)_{|t=0} = (u_0,\,d_0)											& \;\;\quad \quad\RR^N,\\
	\end{cases}
\end{equation}
where $F^n$ is defined by:
\begin{equation}\label{def_Fn}
	F^n:= 	(1+a^n)\Div\big(\nabla d^n \odot \nabla d^n \big)+
			a^n(\Delta u^{n-1} - \nabla \Pi^{n-1}).
\end{equation}
\noindent
The global existence and uniqueness of a solution to \eqref{prop_smooth_dates_system1} is standard, since 
$u^{n-1}$ belongs to $L^1_{loc}\mathcal{L}ip$ and so it is possible to construct the Lagrangian coordinates. Furthermore, 
\begin{equation}\label{ineq_proof_prop_smooth_date2}
	\|a^n\|_{L^\infty_{t,x}}\leq \|a_0\|_{L^\infty_x}
\end{equation}
and
\begin{equation}\label{ineq_proof_prop_smooth_date_nablaa}
		\| \nabla a^n(t) \|_{L^\infty_x}\leq \|\nabla a_0\|_{L^\infty_x}
		\exp\Big\{\int_0^t \|\nabla u^{n-1}(s)\|_{L^\infty_x}\dd s\Big\}
\end{equation}
are fulfilled for every natural number $n$. For system \eqref{prop_smooth_dates_system2} we apply Proposition \ref{Appx_Prop_exist_sol_scheme}.

\noindent
Now let us prove by induction that the following inequalities are satisfied for every $n\in\NN$:
\begin{equation}\label{ineq_d_proof_prop_smooth_date}
	\|d^n\|_{L^\infty_{t,x}}\leq (1+C \eta)e^{C\eta},
\end{equation}
\begin{equation}\label{ineq_proof_prop_smooth_date1}
	\begin{split}
		\|(u^n,\,\nabla d^n,\,\nabla \Pi^n)\|_{\X_r} + \|(u^n,\,\nabla d^n) \|_{L^{2}_t L^\infty_x}
		\lesssim \eta,
	\end{split}
\end{equation}
where $C$ is a positive constant. We consider initially system \eqref{prop_smooth_dates_system2} and we want to estimate $d^n$. By the Mild formulation for the heat equation, we obtain that
\begin{equation}\label{prop_smooth_dates_explicit_formula_d}
	d^n(t)= e^{t\Delta}d_0 + 
	\int_0^t e^{(t-s)\Delta}\big\{ -u^{n-1}\cdot \nabla d^{n-1} + |\nabla d^{n-1}|^2 d^n\big\}(s)\dd s.
\end{equation}
Hence, we deduce
\begin{equation}\label{ineq_prop_smooth_dates_dn_bound}
\begin{aligned}
	\|d^n(t)\|_{L^\infty_x}
	&\leq 
	\|d_0\|_{L^\infty_x} + 
	\int_0^t	\|u^{n-1}(s)\|_{L^\infty_x}\| \nabla d^{n-1}(s)\|_{L^\infty_x} + 
	\|\nabla d^{n-1}(s)\|_{L^\infty_x}^2 \|d^n(s) \|_{L^\infty_x}
	\dd s\\
	&\leq 
	1+  C \eta  + \int_0^t \|\nabla d^{n-1}(s)\|_{L^\infty_x}^2 \|d^n(s) \|_{L^\infty_x}
	\dd s.
\end{aligned}
\end{equation}
Applying the Gronwall inequality and by the induction hypotheses, we obtain \eqref{ineq_d_proof_prop_smooth_date}. 
We want now to estimate $\nabla d^n$ and $u^n$. From \eqref{prop_smooth_dates_explicit_formula_d} we get 
\begin{equation}\label{prop_smooth_dates_explicit_formula_nablad}
	\nabla d^n(t)= e^{t\Delta}\nabla d_0 + 
	\int_0^t e^{(t-s)\Delta}\nabla \big\{ u^{n-1}\cdot \nabla d^{n-1} + |\nabla d^{n-1}|^2 d^n\big\}(s)\dd s
\end{equation} 
First, let us estimate $\nabla d^n$ in $L^{3r}_t L^{3Nr/(3r-2)}$. By Corollary \ref{Cor_Characterization_of_hom_Besov_spaces} and Lemma \ref{Lemma2}, with $r_1=6r/5$, $r_2=3r$, $q_1=3Nr/(6r-5)$ and $q_2=3Nr/(3r-2)$ which verify \eqref{condition_lemma2a} and \eqref{condition_lemma2},  we get 
\begin{equation}\label{prop_smooth1_estimate_dn_L3r}
	\|\nabla d^n\|_{L^{3r}_t L^{\frac{3Nr}{3r-2}}_x}\lesssim 
	\|\nabla d_0\|_{\BB_{p,r}^{\frac{N}{p}-1}} + 
	\|(u^{n-1},\nabla d^{n-1})\|_{L^{2r}_t L^{\frac{Nr}{r-1}}_x} 
	\|\nabla d^{n-1}\|_{L^{3r}_t L^{\frac{3Nr}{3r-2}}_x}\lesssim \eta
\end{equation}
Moreover, applying Corollary \ref{Cor_Characterization_of_hom_Besov_spaces}, Lemma \ref{Lemma2}, Lemma \ref{Lemma3} and Theorem \ref{Maximal_regularity_theorem}, we obtain
\begin{equation}\label{prop_smooth1_estimate_dn}
\begin{split}
	\|&\nabla d^n\|_{L^2_t L^\infty_x}+
	\| \nabla d^n\|_{L^{2r}_t L^{\frac{Nr}{r-1}}_x}+
	\|\nabla^2 d^n \|_{L^{2r}_t L^{ \frac{Nr}{2r-1}}_x}+ 
	\|\nabla^2 d^n\|_{L^{r}_t L^{\frac{Nr}{2(r-1)}}_x}+
	\|\nabla^3 d^n\|_{L^r_t L^{\frac{Nr}{3r-2}}_x}
	\lesssim\\
	&\lesssim
	\|\nabla d_0\|_{\BB_{p,r}^{\frac{N}{p}-1}}+
	\|u^{n-1}\|_{L^{2r}_t L^\frac{Nr}{r-1}_x}
	\|\nabla^2 d^{n-1}\|_{L^{2r}_t L^\frac{Nr}{2r-1}_x}+
	\|\nabla u^{n-1}\|_{L^{2r}_t L^\frac{Nr}{2r-1}_x}
	\|\nabla d^{n-1}\|_{L^{2r}_t L^\frac{Nr}{r-1}_x}+\\&\quad\quad\quad\quad
	+
	\|\nabla^2 d^{n-1}\|_{L^{2r}_t L^\frac{Nr}{2r-1}_x} 
	\|\nabla d^{n}\|_{L^{2r}_t L^\frac{Nr}{r-1}_x}\|d^n\|_{L^\infty_{t,x}}+
	\|\nabla d^{n-1}\|_{L^{3r}_tL^{\frac{3Nr}{3r-2}}_x}^2
	\|\nabla d^{n}\|_{L^{3r}_tL^{\frac{3Nr}{3r-2}}_x}.
\end{split}
\end{equation}
Here, we have used Lemma \ref{Lemma3} with $r_1 =r $, $r_2 = 2r$ (respectively $r_2=2$), $q_1 = Nr/(3r-2)$ and $q_2= Nr/(r-1)$ (respectively $q_2=\infty$), which fulfil the conditions \eqref{condition_lemma3a} and \eqref{condition_lemma3}  (since $r<2$). Moreover the constants of Lemma \ref{Lemma2} are determined by $r_1=r$, 
$q_1 = Nr/(3r-2)$, $r_2=2r$ (respectively $r_2=r$) and $q_2=Nr/(2r-1)$ (respectively $q_2=Nr/(2r-2)$), which satisfy the conditions \eqref{condition_lemma2a} and \eqref{condition_lemma2}.

\noindent The hypotheses \eqref{ineq_proof_prop_smooth_date1} for $n-1$, allows us to absorb all the terms on the right-hand side with index $n$ by the left-hand side (for $\eta$ small enough). Hence,  
\eqref{ineq_proof_prop_smooth_date1} is true at least for the terms related to $d$.
Now, let us estimate the remaining terms. By the Mild formulation for the Stokes equation, we get
\begin{equation}\label{prop_smooth_dates_explicit_formula_u}
	u^n(t) =e^{t\Delta}u_0 +
	\int_0^t e^{(t-s)\Delta}\PP\big\{-u^{n-1} \cdot \nabla u^n(s) + F^n(s) \big\}\dd s
\end{equation}
where $\PP$ is the well known Leray projector.  Moreover, applying $\Div$ to the second equation of 
\eqref{prop_smooth_dates_system2}, we get $-\Delta \Pi^n =\{-u^{n-1} \cdot \nabla u^n + F^n\}$, which yields
\begin{equation}\label{prop1_Pin_explicit_formula}
	\nabla \Pi^n = RR\cdot \big\{ -u^{n-1} \cdot \nabla u^n + F^n\big\},
\end{equation}
where $R$ is the Riesz operator. Since $\PP$ and $R$ are bounded operators from $L^p$ to $L^p$, for every 
$1<p<\infty$, applying  Corollary \ref{Cor_Characterization_of_hom_Besov_spaces}, Lemma \ref{Lemma2}, Lemma \ref{Lemma3} (with the constants $r_1$, $r_2$, $q_1$ and $q_2$ as in \eqref{prop_smooth1_estimate_dn}) and Theorem \ref{Maximal_regularity_theorem}, we deduce that
\begin{equation}\label{prop_smooth1_estimate_un}
\begin{aligned}
		\| \nabla u^n &\|_{L^{2r}_t L^{ \frac{Nr}{2r-1}}_x}+ 
		\|\nabla u^n\|_{L^{r}_t L^{\frac{Nr}{2(r-1)}}_x} +
		\| u^n\|_{L^{2r}_t L^{\frac{Nr}{r-1}}_x}+ 
		\| u^n\|_{L^{2}_t L^{\infty}_x}+ 
		\|(\nabla^2 u^n,\, \nabla \Pi^n)\|_{L^{r}_t L^{\frac{Nr}{3r-2}}_x}\lesssim \\
		&\|u_0\|_{\BB^{\frac{N}{p}-1}_{p,r}}+
		\|u^{n-1}\|_{L^{2r}_t L^{\frac{Nr}{r-1}}_x}
		\|\nabla u^n\|_{L^{2r}_t L^{\frac{Nr}{2r-1}}_x}+
		\|F^n \|_{L^r_t L^{\frac{Nr}{3r-2}}_x}
		\lesssim
		 \eta \|\nabla u^n\|_{L^{2r}_t L^{\frac{Nr}{2r-1}}_x}+\eta.
\end{aligned}
\end{equation}
As in the previous estimates, the term of index $n$ in the right-hand side can be absorbed by the left-hand side, obtaining finally \eqref{ineq_proof_prop_smooth_date1}.

\noindent Now, let us observe that $\|\nabla d^n\|_{L^{3r}_t L^{\frac{3Nr}{(3-\ee)r-2}}_x}\lesssim \|\nabla d_0\|_{\BB_{p,r}^{\frac{N}{p}-1+\frac{2}{3}\ee}}$, for every $n\in\NN$. Indeed, by induction and recalling \eqref{prop_smooth_dates_explicit_formula_d}, we get
\begin{equation*}
	\|\nabla d^n\|_{L^{3r}_t L^{\frac{3Nr}{3r-2}}_x}\lesssim 
	\|\nabla d_0\|_{\BB_{p,r}^{\frac{N}{p}-1+\frac{2}{3}\ee}} + 
	\|(u^{n-1},\nabla d^{n-1})\|_{L^{2r}_t L^{\frac{Nr}{r-1}}_x} 
	\|\nabla d^{n-1}\|_{L^{3r}_t L^{\frac{3Nr}{3r-2}}_x}\lesssim
	\|\nabla d_0\|_{\BB_{p,r}^{\frac{N}{p}-1+\frac{2}{3}\ee}},
\end{equation*}
where we have used Corollary \ref{Cor_Characterization_of_hom_Besov_spaces} and Lemma \ref{Lemma2}, with $r_1=6r/5$, $r_2=3r$, $q_1=3Nr/((6-\ee)r-5)$ and $q_2=3Nr/((3-\ee)r-2)$ which verify \eqref{condition_lemma2a} and \eqref{condition_lemma2}. Hence, in the same line of \eqref{prop_smooth1_estimate_dn} we get
\begin{equation*}
\begin{split}
	\|&\nabla^2 d^n\|_{L^{\frac{2}{2-\ee}}_t L^\infty_x} + 
	\|\nabla^2 d^n \|_{L^{2r}_t L^{ \frac{Nr}{(2-\ee)r-1}}_x}+ 
	\|\nabla^3 d^n\|_{L^r_t L^{\frac{Nr}{(3-\ee)r-2}}_x}
	\lesssim
	\|\nabla d_0\|_{\BB_{p,r}^{\frac{N}{p}-1+\ee}}+\\&+
	\|u^{n-1}\|_{L^{2r}_t L^\frac{Nr}{r-1}_x}
	\|\nabla^2 d^{n-1}\|_{L^{2r}_t L^\frac{Nr}{(2-\ee)r-1}_x}+
	\|\nabla u^{n-1}\|_{L^{2r}_t L^\frac{Nr}{(2-\ee)r-1}_x}
	\|\nabla d^{n-1}\|_{L^{2r}_t L^\frac{Nr}{r-1}_x}+\\&\quad
	+
	\|\nabla^2 d^{n-1}\|_{L^{2r}_t L^\frac{Nr}{(2-\ee)r-1}_x} 
	\|\nabla d^{n}\|_{L^{2r}_t L^\frac{Nr}{r-1}_x}\|d^n\|_{L^\infty_{t,x}}+
	\|\nabla d^{n-1}\|_{L^{3r}_tL^{\frac{3Nr}{(3-\ee)r-2}}_x}^2
	\|\nabla d^{n}\|_{L^{3r}_tL^{\frac{3Nr}{(3-\ee)r-2}}_x}.
\end{split}
\end{equation*}
Here, we have Theorem \ref{Maximal_regularity_theorem} and used Lemma \ref{Lemma2} with $r_1=r$, 
$q_1 = Nr/((3-\ee)r-2)$, $r_2=2r$ (respectively $r_2=2/(2-\ee)$) and $q_2=Nr/((2-\ee)r-1)$ (respectively $q_2=\infty$), which satisfy the conditions \eqref{condition_lemma2a} and \eqref{condition_lemma2}. Furthermore, in the same line of \eqref{prop1_estimates_delta_un}, we get 
\begin{equation*}
\begin{split}
		\| &\nabla u^n \|_{L^{2r}_t L^{ \frac{Nr}{(2-\ee)r-1}}_x}+ 
		\|\nabla u^n\|_{L^{\frac{2}{2-\ee}}_t L^{\infty}_x}+ 
		\|(\nabla^2 u^n,\, \nabla \Pi^n)\|_{L^{r}_t L^{\frac{Nr}{(3-\ee)r-2}}_x}\lesssim 
		\|u_0\|_{\BB^{\frac{N}{p}-1+\ee}_{p,r}}+
		\|u^{n-1}\|_{L^{2r}_t L^{\frac{Nr}{r-1}}_x}\\ & {\scriptstyle \times}
		\|\nabla u^n\|_{L^{2r}_t L^{\frac{Nr}{(2-\ee)r-1}}_x}+		+
		\eta\|\nabla( u^n, \nabla d^{n-1})\|_{L^{2r}_t L^{\frac{Nr}{(2-\ee)r-1}}_x}+
		\eta\|(\nabla^2 u^{n-1},\,\nabla \Pi^{n-1})\|_{L^r_t L^{\frac{Nr}{(3-\ee)r-2}}_x}.
\end{split}
\end{equation*}
Summarizing the previous consideration, we get by induction
\begin{equation}\label{ineq_proof_prop_smooth_date4}
	\|(u^n,\,\nabla d^n,\,\nabla \Pi^n)\|_{\X_r^\ee}
	\lesssim 	
	\|(u_0,\,\nabla d_0)\|_{\BB_{p,r}^{\frac{N}{p}-1+\ee}}+
	\|\nabla d_0\|_{\BB_{p,r}^{\frac{N}{p}-1+\frac{2}{3}\ee}}^3,\quad\quad n\in \NN.
\end{equation}	

\noindent At last, arguing as in the proof of \eqref{ineq_proof_prop_smooth_date4}, we get also the following inequality
\begin{equation}\label{ineq_proof_prop_smooth_date5}
\begin{split}
	\|\nabla d^n\|_{L^{\frac{12r}{4-3\ee }}_t L^\frac{3Nr}{3r-2}_x}&+
	\|\nabla(  u^n,\, \nabla d^n) \|_{L^{\frac{4r}{2-\ee}}_t L^{ \frac{Nr}{2r-1}}_x}+
	\|(\nabla^2 u^n,\, \nabla \Pi^n)\|_{L^{\frac{4r}{4-\ee}}_t L^{\frac{Nr}{3r-2}}_x}\\& 
	\lesssim \|(u_0,\,\nabla d_0)\|_{\BB_{p,r}^{\frac{N}{p}-1+\frac{\ee}{2r}}}\lesssim 
	\eta + \|(u_0,\nabla d_0)\|_{\BB_{p,r}^{\frac{N}{p}-1+\ee}}.
\end{split}
\end{equation}
Here, we need Lemma \ref{Lemma3} with $r_1 =4r/(4-\ee) $, $r_2 = 4r/(2-\ee)$, $q_1 = Nr/(3r-2)$ and $q_2= Nr/(r-1)$, which fulfil the conditions \eqref{condition_lemma3a} and \eqref{condition_lemma3}. We need also Lemma \ref{Lemma2} with $r_1=4r/(4-\ee)$, 
$q_1 = Nr/(3r-2)$, $r_2=4r/(2-\ee)$ and $q_2=Nr/(2r-1)$, which satisfy the conditions \eqref{condition_lemma2a} and \eqref{condition_lemma2}.

\noindent Now we claim that, for every $T>0$, $d^n$ is a Cauchy sequence in $L^\infty_TL^\infty_x$,  $(u^n,\, \nabla d^n, \nabla \Pi_n)_\NN$ is a Cauchy sequence in $\X_{r,T}$ and   $\nabla( u^n,\, \nabla d^n)_\NN$ is a Cauchy sequence in  $L^{2}_T L^{\infty}_x$.

\noindent Denoting $\delta u^n:= u^{n+1}-u^{n}$, $\delta a^{n}:= a^{n+1}-a^{n}$, $\delta d^n:= d^{n+1}-d^{n}$, 
$\delta \Pi^{n}:= \Pi^{n+1}-\Pi^{n}$, we define
\begin{equation*}
	\delta U^n(T):= \|\delta d^n\|_{L^\infty_TL^\infty_x}+
	\|(\delta u^n,\,\nabla \delta d^n\,\nabla \delta \Pi^n)\|_{\X_{r,T}}+ 
	\| (\delta u^n,\, \nabla \delta d^n) \|_{L^{2}_T L^{\infty}_x}
\end{equation*}
We want to prove that $\sum_{n\in\NN}\delta U^n(T)$ is finite.
First, let us consider $\delta a^n$, which is solution of the following system
\begin{equation}\label{prop1_eq_delta_an}
\begin{cases}
	\;\partial_t\delta a^n +  u^n\cdot\nabla \delta a^n= -\delta u^{n-1} \nabla  a^n
																& \RR_+ \times\RR^N,\\
	\;\delta a^n_{|t=0} = 0										& \;\;\quad \quad\RR^N.\\
\end{cases}
\end{equation}
Using standard estimates for the transport equation, we obtain that
\begin{equation}\label{ineq_proof_prop_smooth_date3}
	\| \delta a^n(t) \|_{L^\infty_x} \leq 
	\int_0^t \| \delta u_{n-1}(s)\|_{L^\infty_x}\| \nabla  a^n(s)\|_{L^\infty_x}
	\leq \bar{C}(T)\delta U^{n-1}(t)\|\nabla a_0\|_{L^\infty_x},
\end{equation}
for every $t\in (0,T)$, where
\begin{equation*} 
	\bar{C}(t):=
	T^{\frac{1}{2}}\exp\big\{ T^{\frac{\ee}{2}} \|(u_0,\nabla d_0)\|_{\BB^{\frac{N}{p}-1+\ee}_{p,r}} \big\}.
\end{equation*}
Considering $\delta d^n$ we observe that it is solution of
\begin{equation}\label{prop1_eq_delta_dn}
\begin{cases}
	\; \partial_t \delta d^n  -\Delta \delta  d^n= 
	\delta H^n
	& \RR_+ \times\RR^N,\\
	\;\delta d^n_{|t=0} = 0									& \;\;\quad \quad\RR^N,\\
\end{cases}
\end{equation}
where
\begin{equation*}
\begin{split}
	\delta H^n:= - \delta u^{n-1}\cdot \nabla d^{n} 
	&- u^{n-1} \cdot \nabla \delta  d^{n-1}+
	\nabla\delta d^{n-1} \odot \nabla d^n d^n+\\
	&+\nabla d^{n-1} \odot  \nabla \delta  d^{n-1} d^n+
	\nabla d^{n-1} \odot \nabla d^{n-1} \delta d^{n}.
\end{split}
\end{equation*}
Thus, for every $t\in (0,T)$,
\begin{equation*}
	\|\delta d^n(t)\|_{L^\infty_x}\leq \| \delta H^n \|_{L^1_TL^\infty_x}\lesssim 
	\eta \|(\delta u^{n-1},  \nabla \delta  d^{n-1})\|_{L^2_TL^\infty_x} + 
	\eta^2 \|\delta d^n(t)\|_{L^\infty_TL^\infty_x},
\end{equation*}
which yields
\begin{equation}
	\|\delta d^n\|_{L^\infty_TL^\infty_x}\lesssim \eta \delta U^{n-1}(T).
\end{equation}

\noindent
Arguing exactly as in the proof of inequality \eqref{ineq_proof_prop_smooth_date1}, we obtain
\begin{equation*}
\begin{split}
	\| \nabla & \delta  d^n\|_{L^{2}_T L^{\infty}_x}+
	\| \nabla \delta  d^n\|_{L^{3r}_T L^{\frac{3Nr}{3r-2}}_x}+
	\| \nabla \delta  d^n\|_{L^{2r}_T L^{\frac{Nr}{r-1}}_x}+
	\|\nabla^2 \delta d^n \|_{L^{2r}_TL^{ \frac{Nr}{2r-1}}_x}+
	\|\nabla^2 \delta d^n\|_{L^{r}_TL^{\frac{Nr}{2(r-1)}}_x}+\\ 
	&+
	\|\nabla^3 \delta d^n\|_{L^{r}_TL^{\frac{Nr}{3r-2}}_x}
	\lesssim
	\|\nabla \delta H^n\|_{L^r_T L^{\frac{Nr}{3r-2}}_x}+
	\|\delta H^n\|_{L^{\frac{6}{5}r}_T L^{\frac{3Nr}{6r-2}}_x}\lesssim
	\eta \big(\delta U^n(T)+\delta U^{n-1}(T)\big).
\end{split}
\end{equation*}
In order to conclude our estimate we have to bound the terms related to $\delta u^n$, which is solution of
\begin{equation}\label{prop1_estimates_delta_un}
\begin{cases}
	\;\partial_t \delta u^n + u^{n}\cdot \nabla\delta u^{n} + \delta u^{n-1}\cdot \nabla u^n
	-\Delta \delta u^n + \nabla \delta \Pi^n =	\delta F^{n}					& \RR_+ \times\RR^N,\\
	\;\Div\, \delta u^n = 0														& \RR_+ \times\RR^N,\\
	\;\delta u^n_{|t=0} = 0														& \;\;\quad \quad\RR^N,\\
\end{cases}
\end{equation}
where $\delta F^n:= F^{n+1}-F^n$. First, let us observe that
\begin{equation*}
\begin{split}
	\| \nabla &\delta u^n\|_{L^{2r}_TL^{ \frac{Nr}{2r-1}}_x}+ 
	\|\nabla \delta u^n\|_{L^{r}_TL^{\frac{Nr}{2(r-1)}}_x} +
	\| \delta u^n \|_{L^{2r}_TL^{\frac{Nr}{r-1}}_x}+
	\| \delta u^n \|_{L^{2}_TL^{\infty}_x}+\\&+ 
	\|(\nabla^2 \delta u^n,\, \nabla \delta \Pi^n)\|_{L^{r}_TL^{\frac{Nr}{3r-2}}_x} 
	\lesssim
	\eta 
	\big(\| \delta u^{n-1} \|_{L^{2r}_TL^{\frac{Nr}{r-1}}_x}+
	\|\nabla  \delta u^{n} \|_{L^{2r}_TL^{\frac{Nr}{2r-1}}_x}
	\big)
	+
	\| \delta F^n \|_{L^{r}_TL^{\frac{Nr}{3r-2}}_x}
\end{split}	
\end{equation*}
Hence, denoting by $G^n:=\Div\big(\nabla d^n \odot \nabla d^n\big)$,
we have that
\begin{equation*}
	\begin{split}
		\| \delta F^n \|_{L^{r}_TL^{\frac{Nr}{3r-2}}_x}&\lesssim 
		\|\delta a^n \big(G^n,\,\Delta u^n,\, \nabla \Pi^n\big)\|_{L^{r}_TL^{\frac{Nr}{3r-2}}_x} 
		+\|\delta G^n\|_{L^{r}_TL^{\frac{Nr}{3r-2}}_x}+ 
		\eta\|(\Delta \delta u^{n-1},\, \nabla \delta \Pi^{n-1}) \|_{L^{r}_TL^{\frac{Nr}{3r-2}}_x},		
	\end{split}
\end{equation*}
which yields, recalling \eqref{ineq_proof_prop_smooth_date5}, 
\begin{equation*}
\begin{split}
	\|\delta F^n\|_{L^{r}_TL^{\frac{Nr}{3r-2}}_x} &\lesssim
	\|(G^n,\,\Delta u^n, \nabla \Pi^n) \|_{L^{\frac{4r}{4-\ee }}_TL^{\frac{Nr}{3r-2}}_x}
	\|\delta a^n\|_{L^{\frac{4r}{\ee}}_T L^\infty_x} + \delta U^n(T)\eta\\
	&\lesssim
	\bar{C}(T)\|\nabla a_0\|_{L^\infty_x}\|\delta U^{n-1}\|_{L^{\frac{4r}{\ee}}_T} +  \delta U^n(T)\eta.
\end{split}
\end{equation*}

\noindent
Summarizing the previous considerations and supposing $\eta$ small enough, we obtain
\begin{equation}\label{estimate_deltaUn}
	\delta U^n(T)\lesssim \eta \delta U^{n-1}(T) + 
	\bar{C}(T)\|\nabla a_0\|_{L^\infty_x}\|\delta U^{n-1}\|_{L^{\frac{e}{2}}(0,T)}.
\end{equation}
We claim that there exists $C(T)>0$ and $K(T)>0$ such that, for all $t\in [0,T]$, and 
for all $n\in\NN$
\begin{equation}\label{estimate_deltaUn2}
	\delta U^{n}(t)\leq C\eta^{\frac{n}{2}}\exp\big\{ K(T)\frac{t}{\sqrt{\eta}}\big\}.
\end{equation}
We are going to prove it by induction and the base case is trivial, since it is sufficient to find $C(T)>0$ such that, for all $t\in [0,T]$,
\begin{equation*}
	\delta U^{0}(t)\leq C(T).
\end{equation*}
Then, for all $K(T)>0$, it is fulfilled
	\begin{equation*}
		\delta U^{0}(t)\leq C(T)\exp\big\{ K(T)\frac{t}{\sqrt{\eta}}\big\}.
	\end{equation*}
	Passing trough the induction hypotheses, we have that, for all $t\in [0,T]$
	\begin{align*}
		\delta U^{n}(t)&\lesssim
		\eta\delta U_{n-1}(t)+ \bar{C}(T)\|\nabla a_0\|_{L^\infty_x}
		\|\delta U_{n-1}\|_{L^{\frac{4r}{\ee}}(0,t)}\\
		&\lesssim \sqrt{\eta}C(T) \eta^{\frac{n}{2}}+
					C(T)\bar{C}(T)\|\nabla a_0\|_{L^\infty_x}
					\Big(
						\int_0^t \exp\left\{\frac{4r}{\ee} K(T)\frac{s}{\sqrt{\eta}} \right\}\dd s
					\Big)^\frac{\ee}{2}\\
			&\leq 	\Big( 
						\sqrt{\eta}C(T) + \frac{\ee^{\frac{\ee}{4r}}}{(2K(T))^{\frac{\ee}{4r}}}C(T)\bar{C}(T)
						\|\nabla a_0\|_{L^\infty_x}
					\Big)\eta^{\frac{n}{2}}\exp\left\{ K(T)\frac{t}{\sqrt{\eta}}\right\}. 
	\end{align*}
	Chosen $K(T)>0$ big enough and supposing $\eta$ small enough, we finally obtain \eqref{estimate_deltaUn}.
	
	\noindent	
	It is now immediate to conclude that $(d^n)_\NN$, $(\nabla(u^n,\,\nabla d^n)\,)_\NN$ and $(u_n,\,
	\nabla d^n,\,\nabla \Pi^n)_\NN $ are Cauchy sequences in $L^\infty_TL^\infty_x$, $L^2_TL^\infty_x$ 
	and $\X_{r,T}$, respectively . Moreover, resuming \eqref{ineq_proof_prop_smooth_date3}, we deduce 
	that $(a^n)_\NN$ is a Cauchy sequence in $L^\infty_TL^\infty_x$.  Granted with these convergence 
	results and recalling the inequalities \eqref{ineq_proof_prop_smooth_date2}, 
	\eqref{ineq_proof_prop_smooth_date1} and \ref{ineq_proof_prop_smooth_date4} we conclude that the 
	limit $(a, \,u,\, \nabla d,\, \nabla \Pi)_\NN$ fulfils the property of the Proposition. 
	
	\noindent Finally, recalling that, for every natural $n$, $(a^n,\, u^n, \, \nabla d^n, \nabla \Pi^n)$ is 
	solution of \eqref{prop_smooth_dates_system1} and \eqref{prop_smooth_dates_system2}, passing 
	through the limit, we deduce that $(a, \,u,\, \nabla d,\, \nabla \Pi)_\NN$ is solution of 
	\eqref{Navier_Stokes_system} with $(a_0,\,u_0,\,d_0)$ as initial data, and this completes the proof 
	of Proposition \ref{Theorem_solutions_smooth_dates}.
\end{proof}

\begin{prop}\label{Theorem_solutions_smooth_dates_2} 
	Let $r\in(1,\infty)$, $p\in (1,N)$. Suppose that the initial data fulfil \eqref{initial_data} and moreover 
	$\nabla a_0\in L^\infty_x$ and $(u_0,\,\nabla d_0)\in\BB_{p,r}^{N/p-1+\ee}$ with 
	$\ee<\min\{1/r,1-1/r, N/p-1\}$.	
	If the smallness condition \eqref{smallness_condition} holds, then 
	there exists a global weak-solution $(a,\, u,\, d)$  with the same property of Theorem
	\ref{Main_Theorem_2}. Moreover $\nabla a\in L^\infty_{t,loc}L^\infty_x$, $(u,\, \nabla d, \nabla \Pi)$ 
	belongs to $\Y_r^\ee$ and
	\begin{equation}\label{condition_theorem_smooth_data_2}
		\| (u,\, \nabla d, \,\nabla \Pi)\|_{\Y_r^\ee} 
		\lesssim 
		\|(u_0,\,\nabla d_0)\|_{\BB_{p,1}^{\frac{N}{p}-1+\ee}}.
	\end{equation}
\end{prop}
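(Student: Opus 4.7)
The plan is to proceed by essentially the same iteration scheme as in the proof of Proposition \ref{Theorem_solutions_smooth_dates}, replacing the unweighted maximal regularity and heat kernel estimates by their weighted counterparts given in Theorem \ref{Maximal_regularity_Thm_weight_time}, Lemma \ref{Lemma4b} and Lemma \ref{Lemma5b}. Define $(a^0,u^0,\nabla d^0,\nabla\Pi^0)=(0,0,0,0)$ and construct inductively $(a^n,u^n,d^n,\Pi^n)$ solving the same linearized systems \eqref{prop_smooth_dates_system1}--\eqref{prop_smooth_dates_system2}. Existence for each iterate is standard: $a^n$ is obtained by the method of characteristics (since $u^{n-1}$ will be shown to have $\nabla u^{n-1}\in L^1_{t,\text{loc}}L^\infty_x$ through the weighted bounds and the Sobolev embedding in $L^{q_1}_x$ with $q_1>N$), and $(u^n,d^n,\Pi^n)$ exist by Proposition \ref{Appx_Prop_exist_sol_scheme} applied to the linear Stokes and heat systems.

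The key step is to establish, by induction on $n$, the uniform bound
\begin{equation*}
\|(u^n,\nabla d^n,\nabla\Pi^n)\|_{\Y_r^\ee}\lesssim \|(u_0,\nabla d_0)\|_{\BB_{p,r}^{N/p-1+\ee}},
\end{equation*}
together with $\|a^n\|_{L^\infty_{t,x}}\leq\|a_0\|_{L^\infty_x}$, $\|d^n\|_{L^\infty_{t,x}}\lesssim 1$ and a local-in-time bound on $\|\nabla a^n\|_{L^\infty_{t,\text{loc}}L^\infty_x}$ derived from \eqref{ineq_proof_prop_smooth_date_nablaa} after controlling $\|\nabla u^{n-1}\|_{L^1_tL^\infty_x}$ through $\nabla u^{n-1}\in t^{-\beta_1^\ee}L^{2r}_tL^{q_2}_x$ and Sobolev embedding (the exponent restrictions $\ee<\min\{1/r,1-1/r,N/p-1\}$ guarantee $\alpha_1^\ee,\beta_1^\ee,\gamma_1^\ee\in(0,1-1/(2r))$, so the weighted maximal regularity applies). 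The inductive estimate for $u^n,\Pi^n,d^n$ comes from applying the weighted operators in Lemma \ref{Lemma4b}, Lemma \ref{Lemma5b} and Theorem \ref{Maximal_regularity_Thm_weight_time} to the mild formulations \eqref{prop_smooth_dates_explicit_formula_d}, \eqref{prop_smooth_dates_explicit_formula_u}, \eqref{prop1_Pin_explicit_formula}. The products such as $u^{n-1}\cdot\nabla u^n$, $a^n\Delta u^{n-1}$, $|\nabla d^{n-1}|^2 d^{n-1}$, $\Div(\nabla d^n\odot\nabla d^n)$ will be controlled in $t^{\alpha_1^\ee}L^{2r}_tL^{q_1}_x$ by Hölder inequality in time and space, using exactly the decompositions $1/q_1=1/q_2+1/q_3$ and the matching $\alpha_1^\ee=\beta_1^\ee+\gamma_2^\ee=\gamma_1^\ee+\beta_2^\ee=2\gamma_3^\ee+\gamma_2^\ee$ (and similar) built into the weight definitions; smallness of $\eta$ then absorbs the terms of index $n$.

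For convergence, set $\delta U^n(T)$ equal to the $\Y^\ee_{r,T}$ norm of $(\delta u^n,\nabla\delta d^n,\nabla\delta\Pi^n)$ plus $\|\delta d^n\|_{L^\infty_TL^\infty_x}$, and, as in the proof of Proposition \ref{Theorem_solutions_smooth_dates}, derive
\begin{equation*}
\delta U^n(T)\lesssim \eta\,\delta U^{n-1}(T)+\bar C(T)\|\nabla a_0\|_{L^\infty_x}\|\delta U^{n-1}\|_{L^{\sigma}(0,T)}
\end{equation*}
for some $\sigma>0$ determined by the weights, the bad term arising from $\delta a^n$ through $\delta a^n\,(\Div(\nabla d^n\odot\nabla d^n),\Delta u^{n-1},\nabla\Pi^{n-1})$. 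A telescoping/Gronwall-type induction as in \eqref{estimate_deltaUn2} then shows $\sum_n \delta U^n(T)<\infty$, yielding Cauchy sequences in the relevant weighted spaces. Passing to the limit gives a solution satisfying \eqref{condition_theorem_smooth_data_2}, and $|d|=1$ almost everywhere follows as in Theorem \ref{Main_Theorem_2}.

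The main obstacle is purely bookkeeping: checking that for every nonlinear product appearing in \eqref{prop_smooth_dates_explicit_formula_d}, \eqref{prop_smooth_dates_explicit_formula_u}, \eqref{prop1_Pin_explicit_formula} and in $\delta F^n,\delta H^n$, the exponents of the time-weights add up correctly to $\alpha_1^\ee$ (resp.\ $\alpha_2^\ee$) and the spatial Lebesgue exponents satisfy $1/q_1=1/q_2+1/q_3$ with $q_1,q_2,q_3$ in the range allowed by Lemmas \ref{Lemma4b}--\ref{Lemma5b}; the chosen definitions of $\alpha_i^\ee,\beta_i^\ee,\gamma_i^\ee$ and the constraint $\ee<\min\{1/r,1-1/r,N/p-1\}$ are tailored precisely so that all these balance equations hold and all weight exponents lie in $(0,1-1/(2r))$, permitting the application of Theorem \ref{Maximal_regularity_Thm_weight_time}.
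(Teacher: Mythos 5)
Your proposal follows essentially the same route as the paper's own proof: the identical iteration on the linearized systems \eqref{prop_smooth_dates_system1}--\eqref{prop_smooth_dates_system2}, uniform bounds via the weighted estimates of Theorem \ref{Maximal_regularity_Thm_weight_time} and Lemmas \ref{Lemma4b}--\ref{Lemma5b} applied to the mild formulations, and a Cauchy-sequence argument for $\delta U^n(T)$ closed by the same Gronwall-type induction as in \eqref{estimate_deltaUn2}. Just make explicit, as the paper does in \eqref{prop2_inductionclaimestimate} before \eqref{smooth_prop3_induction}, that the induction carries the small unweighted bound $\|(u^n,\nabla d^n,\nabla\Pi^n)\|_{\Y_r}\lesssim\eta$ alongside the (not small) $\Y_r^\ee$ bound, since it is this $\eta$-smallness that allows the absorption of the index-$n$ terms.
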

\begin{remark}\label{remark1}
	The condition \eqref{condition_theorem_smooth_data_2} ensures the velocity field to be in 
	$L^1_{t,loc}\mathcal{L}ip_x$. Indeed, a classical Gagliardo-Niremberg interpolation inequality
	\begin{equation*}
		\|\nabla f\|_{L^\infty_x}\lesssim 
		\|f\|_{L^{q_3}_x}^{1-\theta}\|\nabla^2 f\|_{L^{q_1}_x}^\theta\quad\quad
		\theta =\frac{Nq_1+q_1q_3}{Nq_1+2q_1q_3-Nq_3}\in \left[\frac{1}{2},\,1\right],
	\end{equation*}
	allows us to obtain the following estimate for every positive $T$:
	\begin{equation*}
	 \|\nabla (u,\,\nabla d)\|_{L^1_T L^\infty_x}\lesssim 
	 \|(t^{-\gamma_1^\ee},\,t^{-\alpha_1^\ee})\|_{L^{(2r)'}_T}
	\Big(\, 
	\|t^{\gamma_1^\ee} (u,\,\nabla d)\|_{L^{2r}_TL^{q_3}_x} + 
	\|t^{\alpha_1^\ee} \nabla^2 (u,\nabla d)\|_{L^{2r}_TL^{q_1}_x}
	\Big)<\infty.
	\end{equation*}
	As already mentioned, such condition permits the existence of the flow for the velocity field, hence 
	we can reformulate system \eqref{Navier_Stokes_system_1} trough Lagrangian coordinates (see section $6$).
 	Adding a weigh in time, we can increase the time integrability by
 	\begin{equation*}
 	\|t^{\alpha_1^{\ee}}(\nabla u,\,\nabla^2 d)\|_{L^{2r}_T L^\infty_x} \lesssim 
 	T^{\alpha_1^\ee-\gamma_1^\ee}\|t^{\gamma_1^\ee} u\|_{L^{2r}_TL^{q_3}_x} + 
	\|t^{\alpha_1^\ee} \nabla^2 u\|_{L^{2r}_TL^{q_1}_x}<\infty,
	\end{equation*}
	observing that $\alpha_1^\ee-\gamma_1^\ee$ is positive.	These estimates are going to be useful to 
	prove the uniqueness for the solution of \eqref{Navier_Stokes_system_1} in the $\Y_r\cap \Y_r^\ee$ 
	functional framework. 	 
\end{remark}
\begin{proof}[Proof of Proposition \ref{Theorem_solutions_smooth_dates_2}]
	Proceeding with the same strategy of Proposition \ref{Theorem_solutions_smooth_dates}, we consider the 
	sequence of solutions for the systems \eqref{prop_smooth_dates_system1} and 
	\eqref{prop_smooth_dates_system2}. 
	We claim by induction that such solutions belong to the same space of Theorem \ref{Main_Theorem_2} and 
	moreover that 
	\begin{equation}\label{prop2_inductionclaimestimate}
	\begin{aligned}
		\|(\delta u^n,\,\nabla d^n,\,\nabla \delta \Pi^n)\|_{\Y_r}		
		\lesssim\eta.
	\end{aligned}
	\end{equation}
	
	\noindent
	At first, let us observe that $\| e^{t\Delta}d_0 \|_{L^\infty_x}\leq \|d_0\|_{L^\infty_x}\leq 1$.
	Recalling that
	\begin{equation}\label{prop2_dn}
		d^n(t)= e^{t\Delta}d_0
		+
		\int_0^t e^{(t-s)\Delta}\big\{ |\nabla d^{n-1}|^2 d^n - u^{n-1}\cdot \nabla d^{n-1} \big\}(s)\dd s,
	\end{equation}
	by Lemma \ref{Lemma6}, with $\bar{r}=r$, $q=p_3/2$ and $\sigma=2\gamma_1=1-N/p_3-1/r$, we have
	\begin{equation*}
		\|d^n(t)\|_{L^\infty_{t,x}} \leq 1 + C_r\|s^{2\gamma_1} u^{n-1}\cdot 
		\nabla d^{n-1}(s)\|_{L^{r}_s L^{\frac{p_3}{2}}_x} + 
		C_r\Big(
		\int_0^ts^{2r\gamma_1}
			\|\nabla d^{n-1}(s)\|_{L^{p_3}_x}^{2r}\|d^n(s)\|_{L^\infty_x}
		\dd s
		\Big)^{\frac{1}{2r}},
	\end{equation*}
	for every $t\in \RR_+$, where $C_r$ is a positive constant. Thus, by the induction hypotheses and 
	Gronwall inequality, we deduce
	\begin{equation}\label{estimate_dn_prop2}
		\|d^n\|_{L^\infty_{x}}^{2r}\leq 2r(1 + C_{r}\eta^2)^{2r}\exp\Big\{C_{r}\eta^{2r}\Big\}.
	\end{equation}
	Then we get $\|d^n(t)\|_{L^\infty_{t,x}}\leq \bar{C_r}$, with $\bar{C_r}$ positive constant 
	dependent only by $r$.
	
	\noindent
	Furthermore, using standard estimates for the transport equation, we have 
	$\|a^n\|_{L^\infty_{t,x}}\leq \|a_0\|_{L^\infty_{x}}$ and for all positive $t$,
	\begin{equation}\label{prop2_estimate_nabla_an}
		\|\nabla a^n(t)\|_{L^\infty_{t,x}}\leq \|\nabla a_0\|_{L^\infty_x}
		\exp\Big\{\int_0^t \|\nabla u^{n-1}(s)\|_{L^\infty_x}\dd s\Big\}.
	\end{equation}
	
	\noindent
	We claim now that $t^{\gamma_1}\nabla d^n$ belongs to $L^{2r}_t L^{p_3}_x$, $t^{\gamma_2}\nabla d^n$ 
	to $L^{\infty}_t L^{p_3}_x$, $t^{\gamma_3}\nabla d^n$ to $L^{2r}_t L^{3p_1}_x$ and 	$t^{\gamma_4}\nabla 
	d^n$ to $L^{\infty}_t L^{3p_1}_x$. By Theorem \ref{Characterization_of_hom_Besov_spaces}, 
	\eqref{smallness_condition} and	since 
	\begin{equation}\label{prop2_emb_B}
	\BB_{p,r}^{N/p-1}\hookrightarrow 
	\BB_{p_3,2r}^{N/{p_3}-1}\cap \BB_{3p_1,2r}^{N/{3p_1}-1}\cap
	\BB_{p_3,\infty}^{N/{p_3}-1}\cap \BB_{3p_1,\infty}^{N/{3p_1}-1},
	\end{equation} 
	we obtain that 
	\begin{equation*}
		\|t^{\gamma_1}  \nabla e^{t\Delta} d_0 \|_{L^{2r}_t L^{p_3}_x}+
		\|t^{\gamma_2}  \nabla e^{t\Delta} d_0 \|_{L^{\infty}_t L^{p_3}_x}+
		\|t^{\gamma_3}  \nabla e^{t\Delta} d_0 \|_{L^{2r}_t L^{3p_1}_x}+
		\|t^{\gamma_4}  \nabla e^{t\Delta} d_0 \|_{L^{\infty}_t L^{3p_1}_x}\lesssim
		\eta.
	\end{equation*}
	Furthermore, from the induction hypotheses and \eqref{estimate_dn_prop2}, we have 
	$t^{\alpha_1}\nabla ( |\nabla d^{n-1}|^2 d^n - u^{n-1}\cdot \nabla d^{n-1})\in L^{2r}_tL^{p_1}_x$. 
	Thus, applying Lemma  \ref{Lemma4b} with $q=p_1$, $\tilde{q}=p_3$ and moreover 
	the same Lemma with $\tilde{q}=3p_1$, we finally obtain
	\begin{equation}\label{recal_1}
	\begin{split}
		\|t^{\gamma_1}  \nabla d^n \|_{L^{2r}_t L^{p_3}_x}+
		&\|t^{\gamma_2}  \nabla d^n \|_{L^{\infty}_t L^{p_3}_x}+
		\|t^{\gamma_3}  \nabla d^n \|_{L^{2r}_t L^{3p_1}_x}+
		\|t^{\gamma_4}  \nabla d^n \|_{L^{\infty}_t L^{3p_1}_x}\lesssim\\
		&\lesssim 
		\|\nabla d_0\|_{\BB_{p,r}^{\frac{N}{p}-1}} + 
		\|t^{\alpha_1}\nabla ( |\nabla d^{n-1}|^2 d^n - u^{n-1}\cdot \nabla d^{n-1})
		\|_{L^{2r}_t L^{p_1}_x}.
	\end{split}	
	\end{equation}
	Developing the right-hand side and absorbing the terms with index $n$ by the left-hand side, 
	it results
	\begin{equation}\label{prop2_estimate_nabladn}
		\|t^{\gamma_1}  \nabla d^n \|_{L^{2r}_t L^{p_3}_x}+
		\|t^{\gamma_2}  \nabla d^n \|_{L^{\infty}_t L^{p_3}_x}+
		\|t^{\gamma_3}  \nabla d^n \|_{L^{2r}_t L^{3p_1}_x}+
		\|t^{\gamma_4}  \nabla d^n \|_{L^{\infty}_t L^{3p_1}_x}\lesssim \eta.
	\end{equation}
	
	\noindent
	In a similar way, by Theorem \ref{Maximal_regularity_Thm_weight_time} and by Lemma \ref{Lemma5b} with  
	$\bar{q}=p_2$ and $q=p_1$, we obtain
	\begin{equation}\label{prop2_estimate_nabla2dn}
		\|t^{\beta_1}  \nabla^2 d^n \|_{L^{2r}_t L^{p_2}_x}+ 
		\|t^{\beta_2}  \nabla^2 d^n\|_{L^{\infty}_t L^{p_2}_x}+
		\|t^{\alpha_1}\nabla^3 d\|_{L^{2r}_t L^{p_1}_x}+
		\|t^{\alpha_2}\nabla^3 d\|_{L^{r}_t L^{p_1}_x}\lesssim \eta.
	\end{equation}

	\noindent
	Now, let us take the velocity field into account. At first, we recall that $u^n$ fulfils 
	\begin{equation}\label{prop2_un}
		u^n(t) = e^{t\Delta}u_0 + \int_0^t e^{(t-s)\Delta}
		\PP\big\{-u^{n-1} \cdot \nabla u^n(s) + F^n(s) \big\}\dd s\dd s,
	\end{equation}
	with $F^n= (1+a^n)\Div\big(\nabla d^n \odot \nabla d^n \big)+a^n(\Delta u^{n-1} - \nabla \Pi^{n-1})$.
	By Theorem \ref{Characterization_of_hom_Besov_spaces},	$t^{\gamma_1} e^{t\Delta}u_0$ belongs to 
	$L^{2r}_tL^{p_3}_x$ and $t^{\gamma_2} e^{t\Delta}u_0\in L^{\infty}_tL^{p_3}_x$. 
	By Theorem 
	\ref{Maximal_regularity_Thm_weight_time} with $\alpha = \alpha_1$ (respectively  $\alpha=\alpha_2$), 
	$\bar{r}=2r$ (respectively $\bar{r}=r$) and $q=p_1$ we obtain
	\begin{equation}\label{prop2_estimate_nabla2un}
	\begin{split}
		\|t^{\alpha_1} \nabla^2 u^n \|_{L^{2r}_t L^{p_1}_x}&+
		\|t^{\alpha_2} \nabla^2 u^n \|_{L^{r}_t L^{p_1}_x}\lesssim 
		\|t^{\alpha_1} F^n\|_{L^{2r}_t L^{p_1}_x}+\\
		&+
		\|t^{\alpha_1} u^{n-1}\cdot\nabla u^n\|_{L^{2r}_t L^{p_1}_x}+
		\|t^{\alpha_2} F^n\|_{L^{r}_t L^{p_1}_x}+
		\|t^{\alpha_2} u^{n-1}\cdot\nabla u^n\|_{L^{r}_t L^{p_1}_x}
	\end{split}
	\end{equation}
	Arguing exactly as in the proof of \eqref{prop2_estimate_nabladn} and \eqref{prop2_estimate_nabla2dn}
	we get also
	\begin{equation}\label{prop2_estimate_un_nablaun}
	\begin{split}	
		\|t^{1-\frac{1}{2r}}u^n\|_{L^{2r}_t L^{\infty}_x}&+
		\|t^{\gamma_1}u^n\|_{L^{2r}_t L^{p_3}_x} + 
		\|t^{\gamma_2}u^n\|_{L^{\infty}_t L^{p_3}_x} +\\&+
		\|t^{\beta_1}\nabla u^n\|_{L^{2r}_t L^{p_2}_x}+
		\|t^{\beta_2}\nabla u^n\|_{L^{\infty}_t L^{p_2}_x}\leq 
		\|t^{\alpha_1} F^n\|_{L^{2r}_t L^{p_1}_x}.
	\end{split}
	\end{equation}
	Since 
	\begin{equation*}
	\begin{split}	
		\|t^{\alpha_1} F^n\|_{L^{2r}_t L^{p_1}_x}&+
		\|t^{\alpha_2} F^n\|_{L^{r}_t L^{p_1}_x}\lesssim 
		(1+\|a_0\|_{L^\infty_x})
		\|t^{\beta_1}\nabla^2d^n\|_{L^{2r}L^{p_2}_x}\|t^{\beta_3}\nabla^2d^n\|_{L^{\infty}L^{p_3}_x}+ \\
		&+\|a_0\|_{L^\infty_x}
		\Big(\|t^{\alpha_1}(\Delta u^{n-1}, \nabla	\Pi^{n-1} ) \|_{L^{2r}_tL^{p_1}_x} +
		\|t^{\alpha_2}(\Delta u^{n-1}, \nabla	\Pi^{n-1} ) \|_{L^{r}_tL^{p_1}_x}\Big),
	\end{split}
	\end{equation*}
	summarizing \eqref{prop2_estimate_nabladn}, \eqref{prop2_estimate_nabla2dn}, 
	\eqref{prop2_estimate_nabla2un} and \eqref{prop2_estimate_un_nablaun}, there holds 
	\eqref{prop2_inductionclaimestimate} (by \eqref{prop1_Pin_explicit_formula} $\nabla \Pi^n$ has the same 
	regularity of $\Delta u^n$).  Moreover, from the arbitrary of $p_3$ we get also
	\begin{equation*}
		\|t^{\frac{1}{2}-\frac{1}{2r}}(u^n,\,\nabla d^n)\|_{L^{2r}_t L^\infty_x}+
		\|t^{\frac{1}{2}}(u^n,\,\nabla d^n)\|_{L^{\infty}_t L^\infty_x}\lesssim \eta.
	\end{equation*}
		
	\noindent Now, we claim by induction that $(u^n,\,\nabla d^n,\,\nabla \Pi^n)$ belongs to $\Y_r^\ee$ and 
	moreover 
	\begin{equation}\label{smooth_prop3_induction}
		\|(u^n,\,\nabla d^n,\,\nabla \Pi^n)\|_{\Y_r^\ee}\lesssim \|(u_0,\,\nabla d_0)\|_{\BB_{p,r}^{
		\frac{N}{p}-1+\ee}},
	\end{equation}	 
	for every $n\in\NN$ (uniformly). 
	Recalling \eqref{prop2_dn} and \eqref{prop2_un}, by Lemma \ref{Lemma4b} with $\tilde{q}=q_3$ or 
	$\tilde{q}=3q_1$ and $q=p_1$, we get that
	\begin{equation*}
	\begin{split}
		\|t^{\gamma_1^\ee-\ee}(u^n,\,\nabla d^n)&\|_{L^{2r}_t L^{q_3}_x} + 
		\|t^{\gamma_3^\ee-\ee}\nabla d^n\|_{L^{2r}_t L^{3q_1}_x} +
		\|t^{\gamma_4^\ee-\ee}\nabla d^n\|_{L^{\infty}_t L^{3q_1}_x}\lesssim 
		\|(u_0,\,\nabla d_0)\|_{\BB_{p,r}^{\frac{N}{p}-1}}+\\&+
		\|t^{\alpha_1}\big(\nabla \{ |\nabla d^{n-1}|^2 d^n - u^{n-1}\cdot \nabla d^{n-1} \},\,
		\{-u^{n-1} \cdot \nabla u^n+ F_n\})\|_{L^{2r}_TL^{p_1}_x},
	\end{split}
	\end{equation*}
	for every $n\in\NN$. Hence, since $\|(u^n,\,\nabla d^n,\,\nabla\Pi^n)\|_{\Y_r}\lesssim \eta$, we deduce 
	the following uniformly estimate:
	\begin{equation}\label{prop3smooth_estimate_no_eps}
		\|t^{\gamma_1^\ee-\ee}(u^n,\,\nabla d^n)\|_{L^{2r}_t L^{q_3}_x} + 
		\|t^{\gamma_3^\ee-\ee}\nabla d^n\|_{L^{2r}_t L^{3q_1}_x}+
		\|t^{\gamma_4^\ee-\ee}\nabla d^n\|_{L^{\infty}_t L^{3q_1}_x} \lesssim \eta.
	\end{equation}
	We still proceed by induction and the base case $(0,\,0,\,0)\in \Y_r^\ee$ is trivial. 
	Now, let us assume that $(u^{n-1},\,\nabla d^{n-1},\,\nabla \Pi^{n-1})$ belongs to $\Y_r^\ee$. At first,
	since $(u_0,\,\nabla d_0)\in \BB_{q_3,2r}^{N/q_3-1+\ee}$, $\nabla (u_0,\,\nabla d_0)
	\in \BB_{q_2,2r}^{N/q_2-2+\ee}$ and $\nabla^2(u_0,\,\nabla d_0)\in \BB_{q_1,2r}^{N/q_1-3+\ee}$, 
	by Theorem \ref{Characterization_of_hom_Besov_spaces} we get that the conditions for $u$ and $\nabla d$ 
	determined by $\Y_r^\ee$ are satisfied by $e^{t\Delta}u_0$ and $e^{t\Delta}\nabla d_0$.
	Now, arguing as for proving \eqref{recal_1}, we get
	\begin{equation*}
	\begin{split}
		&\|t^{\alpha_1^\ee}\nabla^3 d^n\|_{L^{2r}_t L^{q_1}_x}+
		\|t^{\alpha_2^\ee}\nabla^3 d^n\|_{L^{r}_t L^{q_1}_x}+
		\|t^{\gamma_1^\ee}  \nabla d^n \|_{L^{2r}_t L^{q_3}_x}+
		\|t^{\gamma_2^\ee}  \nabla d^n \|_{L^{\infty}_t L^{q_3}_x}+
		\|t^{\gamma_3^\ee}  \nabla d^n \|_{L^{2r}_t L^{3q_1}_x}+\\&+
		\|t^{\gamma_4^\ee}  \nabla d^n \|_{L^{\infty}_t L^{3q_1}_x}
		\lesssim 
		\|\nabla d_0\|_{\BB_{p,r}^{\frac{N}{p}-1+\ee}} + 
		\|t^{\alpha_1^\ee}\nabla \{ |\nabla d^{n-1}|^2 d^n - u^{n-1}\cdot \nabla d^{n-1}\}
		\|_{L^{2r}_t L^{q_1}_x}+\\&\quad\quad\quad\quad\quad\quad\quad\quad\quad\quad\quad
		\quad\quad\quad\quad\quad\quad\quad\quad\quad+
		\|t^{\alpha_2^\ee}\nabla \{ |\nabla d^{n-1}|^2 d^n - u^{n-1}\cdot \nabla d^{n-1}\}
		\|_{L^{r}_t L^{q_1}_x}.
	\end{split}	
	\end{equation*}
	Thus, recalling \eqref{prop3smooth_estimate_no_eps}, we get
	\begin{equation*}
	\begin{split}
		&\|t^{\alpha_1^\ee}\nabla^3 d^n\|_{L^{2r}_t L^{q_1}_x}+
		\|t^{\alpha_2^\ee}\nabla^3 d^n\|_{L^{r}_t L^{q_1}_x}+
		\|t^{\gamma_1^\ee}  \nabla d^n \|_{L^{2r}_t L^{q_3}_x}+
		\|t^{\gamma_2^\ee}  \nabla d^n \|_{L^{\infty}_t L^{q_3}_x}+\\&+
		\|t^{\gamma_3^\ee}  \nabla d^n \|_{L^{2r}_t L^{3q_1}_x}+
		\|t^{\gamma_4^\ee}  \nabla d^n \|_{L^{\infty}_t L^{3q_1}_x}
		\lesssim  
		\|\nabla d_0\|_{\BB_{p,r}^{\frac{N}{p}-1+\ee}}+
		\eta  
		\big(
		\|t^{\gamma_4^\ee} \nabla d^n\|_{L^{\infty}_tL^{3q_1}_x}+ \\&\quad+
		\|t^{\gamma_2^\ee} \nabla d^{n-1} \|_{L^{\infty}_t L^{q_3}_x}+
		\|t^{\beta_2^\ee}  \nabla^2 d^{n-1} \|_{L^{\infty}_t L^{q_2}_x}+
		\|t^{\gamma_1^\ee} \nabla d^{n-1} \|_{L^{2r}_t L^{q_3}_x}+
		\|t^{\beta_1^\ee}  \nabla^2 d^{n-1} \|_{L^{2r}_t L^{q_2}_x}
		\big).
	\end{split}	
	\end{equation*}
	Applying the induction hypotheses we obtain \eqref{smooth_prop3_induction} at least for the terms 
	concerning $\nabla d^n$. Moreover, arguing as for proving \eqref{prop2_estimate_nabla2un} we get
	\begin{equation*}
	\begin{split}
		\|t^{\alpha_1^\ee} \nabla^2 u^n \|_{L^{2r}_t L^{q_1}_x}&+
		\|t^{\alpha_2^\ee} \nabla^2 u^n \|_{L^{r}_t L^{q_1}_x}\lesssim 
		\|t^{\alpha_1^\ee} F^n\|_{L^{2r}_t L^{q_1}_x}+\\
		&+
		\|t^{\alpha_1^\ee} u^{n-1}\cdot\nabla u^n\|_{L^{2r}_t L^{q_1}_x}+
		\|t^{\alpha_2^\ee} F^n\|_{L^{r}_t L^{p_1}_x}+
		\|t^{\alpha_2^\ee} u^{n-1}\cdot\nabla u^n\|_{L^{r}_t L^{q_1}_x}, 
	\end{split}
	\end{equation*}
	and as for proving \eqref{prop2_estimate_un_nablaun} 
	\begin{equation*}
		\|t^{\gamma_1^\ee}u^n\|_{L^{2r}_t L^{q_3}_x} + 
		\|t^{\gamma_2\ee}u^n\|_{L^{\infty}_t L^{q_3}_x} +
		\|t^{\beta_1^\ee}\nabla u^n\|_{L^{2r}_t L^{q_2}_x}+
		\|t^{\beta_2^\ee}\nabla u^n\|_{L^{\infty}_t L^{q_2}_x}\lesssim
		\|t^{\alpha_1^\ee} F^n\|_{L^{2r}_t L^{q_1}_x}.
	\end{equation*}
	From the definition of $F^n$ and by the induction hypotheses, we get \eqref{smooth_prop3_induction} also 
	for the terms concerning $u^n$. Since by \eqref{prop1_Pin_explicit_formula} $\nabla \Pi^n$ has the same 
	regularity of $\nabla^2 u^n$, we finally obtain \eqref{smooth_prop3_induction}.

	\noindent Let us observe that, by Remark \ref{remark1}, for every $T>0$ there exists a positive 
	$\hat{C}(T)>0$ such that
	\begin{equation*}
		\|\nabla u^n\|_{L^1_T L^\infty_x}\lesssim 
		\hat{C}(T)\|(u_0,\,\nabla d_0)\|_{\BB_{p,r}^{\frac{N}{p}-1+\ee}}.
	\end{equation*}
	
	\noindent To conclude the proof we want to show that $(a^n,\,d^n,\,u^n)_\NN$ is a Cauchy sequence in 
	the considered spaces. The strategy is similar to the last part of Theorem 
	\ref{Theorem_solutions_smooth_dates}. Denoting $\delta u^n:=u^{n+1}-u^n$ and so on for 
	$\delta d^n$, $\delta a^n$ and $\delta \Pi^n$, for all $T>0$ we define
	\begin{equation*}
		\delta U^n(T):=\|(\delta u^n,\,\nabla \delta d^n,\,\nabla \Pi^n)\|_{\Y_{r,T}}+
		\|\delta d^n\|_{L^\infty_TL^\infty_x}+
		\|t^{\frac{1}{2}}\delta u^n\|_{L^{\infty}_T L^\infty_x}.
	\end{equation*} 
	We want to prove that $\sum_{n\in \NN}\delta U^n(T)$ is finite. Let us consider $\delta a^n$ which 
	is solution of \eqref{prop1_eq_delta_an}. By standard estimates for the transport equation and 
	by \eqref{prop2_estimate_nabla_an}, we obtain 
	\begin{equation}\label{prop2_estimate_delta_an}
		\| \delta a^n(t)\|_{L^\infty_x}\leq 
		\int_0^t \|\delta u^{n-1}(s)\|_{L^\infty_x} \|\nabla a^n(s)\|_{L^\infty_x}\dd s\leq 
		\bar{C}(T)\Big(\int_0^T \frac{1}{s^{\frac{3}{4}}}\dd s\Big)^{\frac{2}{3}}
		\|\delta U^{n-1}(t)\|_{L^3_T}\|\nabla a_0\|_{L^\infty_x},
	\end{equation}
	where $\bar{C}(T)=\exp\{
	\hat{C}(T)\|(u_0,\,\nabla d_0)\|_{\BB_{p,r}^{N/p-1+\ee}} \}$.
	Considering $\delta d^n$, we recall that is solution of \eqref{prop1_eq_delta_dn}. Hence, by Lemma 
	\ref{Lemma6} with $\sigma = 2\gamma_1$ and $q=p_3/2$, we get
	\begin{equation*}
		\| \delta d^n(t) \|_{L^\infty_x}\leq \|s^{2\gamma_1} \delta H^n \|_{L^{r}(0,t;L^{\frac{p_3}{2}}_x)}
		\lesssim \eta \delta U^{n-1}(T) + \eta^2 \|\delta d^n\|_{L^\infty_T L^\infty_x},
	\end{equation*}
	for every $t\in (0,T)$. Taking the sup on $t\in (0,T)$ we deduce 
	\begin{equation*}
		\|\delta d^n\|_{L^\infty_TL^\infty_x}\lesssim \eta \delta U^{n-1}(T).
	\end{equation*}
	Moreover, by Lemma \ref{Lemma4b}, Lemma \ref{Lemma5b}, Lemma \ref{Lemma6} and Lemma \ref{Lemma7}, we 
	obtain
	\begin{equation*}
	\begin{split}
		\|t^{\beta_1} &\nabla^2 \delta d^n \|_{L^{2r}_T L^{p_2}_x}+ 
		\|t^{\beta_2}  \nabla^2 \delta d^n\|_{L^{\infty}_T L^{p_2}_x} +
		\|t^{\gamma_1} \nabla 	\delta d^n \|_{L^{2r}_T L^{p_3}_x}+
		\|t^{\gamma_2} \nabla \delta d^n \|_{L^{\infty}_T L^{p_3}_x}+\\ 
		&+
		\|t^{\gamma_3} \nabla \delta d^n \|_{L^{2r}_T L^{3p_1}_x}+
		\|t^{\gamma_4} \nabla \delta d^n \|_{L^{\infty}_T L^{3p_1}_x}+
		\| \nabla \delta d^n\|_{L^{\infty}_T W^{1,\infty}_x}\lesssim\\
		&\quad\lesssim
		\|t^{\alpha_1}\nabla \delta H^n\|_{L^{2r}_TL^{p_1}_x} + 
		\|t^{\gamma_1}\nabla \delta H^n\|_{L^{2r}_TL^{p_3}_x}
		\lesssim \eta\big(\delta U^{n-1}(T) + \delta U^n(T)\big)
	\end{split}
	\end{equation*}
	Recalling \eqref{prop1_estimates_delta_un}, \eqref{prop1_Pin_explicit_formula} and still using 
	Lemmas \ref{Lemma4b}, \ref{Lemma5b}, \ref{Lemma6} and \ref{Lemma7}, we finally obtain
	\begin{equation*}
		\delta U^n(T)\lesssim \eta \delta U^{n-1}(T) + \eta\| \delta a^n\|_{L^\infty_TL^\infty_x},
	\end{equation*}
	and by \eqref{prop2_estimate_delta_an} we finally obtain. 
	\begin{equation*}
		\delta U^n(T)\lesssim \eta \delta U^{n-1}(T) + 
		\bar{C}(T)T^{\frac{1}{6}}\|\delta U^{n-1}\|_{L^3_T}\|\nabla a_0\|_{L^\infty_x}.
	\end{equation*}
	Such inequality is strictly similar to \eqref{estimate_deltaUn}, hence we can conclude 
	the proof of the proposition arguing exactly as in the last part of the proof of proposition 
	\ref{Theorem_solutions_smooth_dates}.
\end{proof}
\begin{section}{Existence of a Global Solution}

\noindent Let us now tackle the proof to the existence part of our main results, namely Theorem \ref{Main_Theorem} and Theorem \ref{Main_Theorem_2}.
Thanks to the dyadic partition we regularize the initial velocity $u_0$ and the initial molecular orientation $d_0$, while we regularize the initial density $a_0$ by a family of mollificators. 
The key is to use the existence results and the estimates of the previous section, constructing a family of solutions for \eqref{Navier_Stokes_system} with the regularized initial data. Due to the low regularity of $a_0$, it is not possible to prove the strong convergence of such approximate solutions. Hence, we shall focus on a compactness method, along the same line of 
\cite{MR3250369} and \cite{MR3056619}.

\vspace{0.2cm}
\noindent Let $(\chi_n)_\NN$ be a family of mollifiers, we define $a_{0,n}:= \chi_n*a_0$, for every $n\in\NN$. $a_{0,n}$ belongs to $W^{1,\infty}_x$ and its $L^\infty_x$-norm is bounded by $\|a_0\|_{L^\infty_x}$. Moreover, $(a_{0,n})_\NN$ weak* converges to $a_0$ up to a subsequence (which we still denote by $(a_{0,n})_\NN$). Since $d_0$ belongs to $L^\infty_x$, which is a subset of 
$\BB_{\infty,\infty}^{0}$, and $u_0$ belongs to $\BB_{p,r}^{N/p-1}$, we cut the low and the high frequencies in the following way:
\begin{equation*}
	u_{0,n}:=\sum_{|k|\leq n}\dot{\Delta}_k u_0,\quad\quad
	d_{0,n}:=\sum_{|k|\leq n}\dot{\Delta}_k d_0.
\end{equation*}
Each term $d_{0,n}$ belongs to $L^\infty_x$ with norm bounded by 1. Moreover $u_{0,n}$ and $\nabla d_{0,n}$ belong to $\BB^{s}_{p,1}$ for every real number $s$. In addition, the smallness condition \eqref{smallness_condition} is still valid for $(a_{0,n},\,u_{0,n},\,\nabla d_{0,n})$. 

\begin{proof}[Proof of Theorem \ref{Main_Theorem}]
	As already pointed out, $u_{0,n}$ and $\nabla d_{0,n}$ belong to $\BB^{s}_{p,1}$, 
	for every real number $s$, in particular for $s=N/p-1$ and $s=N/p+1$. The hypotheses of 
	Proposition \ref{Theorem_solutions_smooth_dates} are fulfilled, hence it determines  
	$(u^n,\,d^n,\,a^n)$ solution of \eqref{Navier_Stokes_system} with $u_{0,n}$, $d_{0,n}$ and 
	$a_{0,n}$ as initial data. Furthermore we get the following uniform estimates for the norms 
	of such solutions: 
	\begin{equation*}
		\|(u^n,\,\nabla d^n,\,\nabla \Pi^n)\|_{\X_r} 
		\lesssim \eta
	\end{equation*}
	for every $n\in\NN$. By these inequalities and the momentum equation of 
	\eqref{Navier_Stokes_system}, $(\partial_t u^n)_\NN$ is a bounded sequence in 
	$L^r_t L^{Nr/(3r-2)}_x$ and $(\partial_t d^n)_\NN$ is a bounded sequence in 
	$L^r_t L^{Nr/(2r-2)}_x$. Thus, applying Ascoli-Arzela Theorem, we conclude that there exists 
	a subsequence of $(u^n,\,d^n,\,a^n)_\NN$ (which we still denote by $(u^n,\,d^n,\,a^n)_\NN$) 
	and some $(u,\,d,\,a,\,\nabla \Pi)$ with $a,\,d\in L^\infty_{t,x}$ and $(u,\,\nabla d,\,\nabla\Pi)\in \X_r$ 
	such that
	\begin{equation*}
	\begin{alignedat}{3}
		(a^n,\, d^n) &\rightharpoonup (a,\,d)  &&\quad\text{weak}\;*\text{in}\; L^{\infty}_{t,x},\\
		\nabla d^n &\rightharpoonup \nabla d  &&\quad\text{weakly}\; \text{in}\; 
		L^{3r}_t L^{\frac{3Nr}{3r-2}}_x,\,L^{2r}_tL^{\frac{Nr}{r-1}}_x,\, L^2_t L^\infty_x,\\
		\nabla( u^n,\,\nabla d^n) & \rightharpoonup \nabla(u,\,\nabla d)&& \quad
		\text{weakly}\; \text{in}\; 
		L^{2r}_tL^{\frac{Nr}{2r-1}}_x,\, L^r_t L^{\frac{Nr}{r-1}},\\ 
	\end{alignedat}
	\end{equation*}
	with in addition
	\begin{equation*}
		\nabla( \nabla u^n,\, \Pi^n) \rightharpoonup \nabla(\nabla u,\, \Pi)
		\quad\text{weakly}\;\text{in}\; L^{r}_{t}L^\frac{Nr}{3r-2}_x
	\end{equation*}
	and
	\begin{equation*}
		u^n \rightarrow u \quad\text{strongly in}\; 
		L^r_{t,loc} L^{\frac{Nr}{r-1}-\bar{\ee}}_{x,loc},
	\end{equation*}
	for all positive $\bar{\ee}$ small enough. 
	The last strongly convergence is due to an interpolation result, observing that, for every 
	$T>0$, the sequence $(u_n-e^{t\Delta}u_{0,n})_\NN$ is uniformly 
	bounded and equicontinuous in $C([0,T],L^{Nr/(3r-2)}_x)$ and moreover 
	$(e^{t\Delta}u_{0,n})_\NN$ converges to $e^{t\Delta}u_0$ strongly in $L^r_t L^{Nr/(r-1)}_x$ 
	(since $(u_{0,n})_\NN$ converges to $u_0$ strongly in $\BB_{p,r}^{N/r-1}$).
	We deduce that $u^n\cdot \nabla d^n$ and $u^n\cdot \nabla u^n$ converge to $u\cdot\nabla 
	d$ and $u\cdot \nabla u$ respectively. Then, it is sufficient to prove that $a^n(\Delta u^n + 
	\nabla \Pi^n) $ 
	converge to $a(\Delta u + \nabla\Pi) $ in the distributional sense, in order to conclude that 
	$(u,\,d,\,a)$ is a solution for \eqref{Navier_Stokes_system} with initial data $(u_0,\,d_0,
	\,a_0)$. Toward this, we shall follow \cite{MR3250369} and \cite{MR3056619}, proving that 
	$(a^n)_\NN$ strongly converges to $a$ in $L^m_{loc}(\RR_+\times\RR^N)$ for any $m<\infty$. 
	thanks to the transport equation of \eqref{Navier_Stokes_system} we have
	\begin{equation*}
		\partial_t (a^n)^2 + u^n\cdot \nabla (a^n)^2 = 0,
	\end{equation*}
	which yields
	\begin{equation*}
		\partial_t \omega + u \cdot \omega = 0,
	\end{equation*}
	where $\omega$ is the weak $*$ limit of $((a^n)^2)_\NN$ (up to a subsequence). Moreover, 
	by a mollifying method as that in \cite{MR1022305}, we infer that
	\begin{equation*}
		\partial_t a^2 + \Div (u\,a^2) = 0.
	\end{equation*}
	Thus
	\begin{equation*}
	\begin{cases}
		\partial_t (a^2-\omega) + \Div \{u(a^2-\omega)\}=0	&\RR_+\times\RR^N\\
		(a^2-\omega)|_{t=0}=0,
	\end{cases}
	\end{equation*}
	and from the uniqueness of the Transport equation (see \cite{MR1022305}) we conclude that 
	$a^2-\omega=0$ almost everywhere. We deduce that $(\|a^n\|_{L^2(\Omega)})_\NN$ converges to 
	$\|a\|_{L^2(\Omega)}$, for every $\Omega$ bounded subset of $\RR_+\times\RR^N$, hence 
	$(a^n)_\NN$ strongly converges to $a$ in $L^2(\Omega)$. By interpolation, we deduce that 
	$(a^n)_\NN$ strongly converges to $a$ in $L^m_{loc}(\RR_+\times\RR^N)$ for any $m<\infty$ 
	and this completes the proof to the existence part of Theorem \ref{Main_Theorem}.
\end{proof}

\begin{proof}[Proof of Theorem \ref{Main_Theorem_2}]
	We proceed along the same line of the previous prove, using Proposition 
	\ref{Theorem_solutions_smooth_dates_2} instead of Proposition 
	\ref{Theorem_solutions_smooth_dates}. We get the following uniform estimates for the 
	sequence of the approximate solutions:
	\begin{equation*}
		\|(u^n,\,\nabla d^n,\,\nabla\Pi^n)\|_{\Y_r}		
		\lesssim\eta.
	\end{equation*}
	Since $\alpha_2 r'<1$, $(\nabla^2 u^n,\nabla\Pi^n )_\NN = (t^{-\alpha_2 }t^{\alpha_2}
	(\nabla^2 u^n,\,\nabla \Pi^n)_\NN$ is 
	uniformly bounded in $L^{\tau_1}_TL^{p_1}_x$, where $\tau_1$ belongs to $(1,r/(1+\alpha_2r)\,)$ 
	and $T>0$. 
	Similarly $(\nabla u^n, \nabla^2 d^n)_\NN$ and $(u^n,\,\nabla d^n)_\NN$ are uniformly bounded 
	in $L^{\tau_2}_TL^{p_2}_x$ and $L^{\tau_3}_TL^{p_3}_x$ respectively, where $\tau_2\in 
	(1,2r/(1+\beta_1 2r)\,)$ and $\tau_3\in (1,\,2r/(1+\gamma_12r)\,)$. It is not restrictive to 
	choose $\tau_2$ and $\tau_3$ such that $1/\tau_4:=1/\tau_2+1/\tau_3$ is less than $1$. 
	Hence $(\partial_t u^n)_\NN$ is uniformly bounded in $L^{\tau_1}(0,T; L^p_1)$ which yields that 
	$(u^n-e^{t\Delta}u_{0,n})_\NN$ is uniformly bounded and equicontinuous in $C([0,T],L^{p_1}_x)$.
	Moreover $(e^{t\Delta}u_{0,n})_\NN$ converges to $e^{t\Delta}u_0$ in $L^{\tau_3}(0,T;L^{p_3})$.
	Hence, by Ascola-Arzela Theorem, we conclude that, up to extraction, the sequence $(u^n,\,d^n,\,
	a^n,\,\nabla \Pi^n)_\NN$ converges to some $(u,d,\,a,\,\nabla \Pi)_\NN$ such that $a,\,d$ belong 
	to $L^\infty_{t,x}$ and $(u,\,\nabla d,\,\nabla \Pi)\in \Y_r$. The convergence is in the following sense:
	\begin{equation*}
	\begin{alignedat}{3}
		(a^n,\, d^n) &\rightharpoonup (a,\,d)  &&\quad\text{weak}\;*\text{in}\; 
		L^{\infty}_{t,loc},\\
		\nabla d^n &\rightharpoonup \nabla d  &&\quad\text{weakly}\; \text{in}\; 
		L^{\tau_3}_{t,loc} L^{p_3}_x,\\
		\nabla( u^n,\,\nabla d^n) & \rightharpoonup \nabla(u,\,\nabla d)&& \quad
		\text{weakly}\; \text{in}\; 
		L^{\tau_2}_{t,loc}L^{p_2}_x, \\ 
	\end{alignedat}	
	\end{equation*}
	with in addition
	\begin{equation*}
		\nabla( \nabla u^n,\, \Pi^n) \rightharpoonup \nabla(\nabla u,\, \Pi)
		\quad\text{weakly}\;\text{in}\; L^{\tau_1}_{t,loc}L^{p_1}_x
	\end{equation*}
	and
	\begin{equation*}
		u^n \rightarrow u \quad\text{strongly in}\; 
		L^{\tau_3}_{t,loc} L^{p_3-\ee}_{x,loc},
	\end{equation*}
	for all positive $\ee$ small enough. 
	Finally we can repeat the argument at the end of the proof of Theorem \ref{Main_Theorem}, 
	concluding the existence part of Theorem \ref{Main_Theorem_2}.
\end{proof}
\end{section}
\section{Lagrangian Coordinates}

\noindent The uniqueness result is basically based on the Lagrangian coordinates concept. The key is to rewrite system \eqref{Navier_Stokes_system} under such coordinates, obtaining a new formulation which allows the uniqueness in the functional framework of the main Theorems. This strategy has already been treated in \cite{MR2957705} on a restrictive family of \eqref{Navier_Stokes_system}, namely the incompressible Navier-Stokes equations with variable density in the whole space. We claim to extend it to the general simplified Ericksen-Leslie system. Before going on, in this section we recall some mainly results concerning the Lagrangian coordinates.

\vspace{0.2cm}
\noindent Let $T\in (0,\infty]$, we consider a vector field $u$ in $L^1_T\mathcal{L}ip_x$. The flow $X$ of $u$ is defined as the solution of the following ordinary differential equation:
\begin{equation*}
\begin{cases}
	\partial_t X(t,y) = u(t,X(t,y))	&(t,y)\in \RR_+\times \RR^n,\\
	X(0,y) = y							&y\in \RR^n.
\end{cases}
\end{equation*}
The unique solution is granted by Cauchy-Lipschitz Theorem. Defining $v(t,y):=u(t,X(t,y))$ we get the following relation between the Eulerian coordinates $x$ and the Lagrangian coordinates $y$: 
\begin{equation*}
	x = X(t,y) = y+\int_0^tv(s,y)\dd s.
\end{equation*}
Furthermore, fixing $t'\in\RR$, let $\tilde{X}=\tilde{X}(t',t,x)$ be the unique solution of
\begin{equation*}
\begin{cases}
	\partial_t \tilde{X}(t',t,x) = u(t,X(t',t,x))	&(t,x)\in (t',\infty)\times \RR^n,\\
	\tilde{X}(t',t', x) = x							&y\in \RR^n.
\end{cases}
\end{equation*}
Then $Y=Y(t,x)=\tilde{X}(-t,0,x)$ is the inverse map of $X$. Setting $D:=\,^t\nabla$, we get 
$A(t,y):=(D_yX)^{-1}(t,y) = D_xY(t,X(t,y))$ and moreover
\begin{equation*}
	\|A(t)-\Id\|_{L^\infty_x}\leq \int_0^t \|D_y v(s)\|_{L^\infty_x}\dd s.
\end{equation*} 
Assuming that $u$ has $L^1_T\mathcal{L}ip_x$-norm small enough, we obtain that the right-hand side of the previous inequality is less than 1. Thus $A(t,y)$ is determined by
\begin{equation*}
	A(t,y)=D_xY(t,X(t,y))=(\Id + ( D_yX(t,y)-\Id))^{-1} = \sum_{k=-\infty}^{\infty}(-1)^k
	\Big( \int_0^tD_yv(s,y)\dd s \Big)^k.
\end{equation*}
Furthermore 
\begin{equation*}
\begin{array}{cc}
	(\nabla_x u)(t,X(t,y)) = \,^{t}A(t,y)\nabla_yv(t,y),		
	&(\Div_x u)(t,X(t,y)) = \Div_y\{A(t,y)v(t,y)\},\\
\end{array}
\end{equation*}
Setting $b(t,y):=a(t,X(t,y))$, $\omega(t,y):=d(t,X(t,y)$, $P(t,y):=\Pi(t,X(t,y))$ and moreover
$h(t,y) := (\nabla_x d )(t, X(t,y))=\,^tA(t,y)\nabla_y\omega(t,y) $, system \eqref{Navier_Stokes_system} becomes 
\begin{equation*}
\begin{cases}
	\,\partial_t b =	 0 											&\RR_+\times\RR^N,\\
	\,\partial_t v -\Div_y\{\,A\,^tA\nabla_y v\} + \,^tA\nabla_yP
	=-\Div_y\{\,A h\odot h\}										&\RR_+\times\RR^N,\\
	\,\partial_t \omega - \,^tA : \nabla h = |h|^2\omega			&\RR_+\times\RR^N,\\								
	\,\Div (\,^tA\, v ) = 0 										&\RR_+\times\RR^N,\\
	h = \,^t A \nabla_y \omega										&\RR_+\times\RR^N,\\
	\;(v,b, \omega)_{|t=0} = (v_0,\,b_0, \, \omega_0)				& \;\;\quad \quad\RR^N,\\
\end{cases}
\end{equation*}
which is the Lagrangian formulation. Moreover, taking the derivative in $x$ to the third equation of \eqref{Navier_Stokes_system}
\begin{equation*}
	\partial_t \nabla d +u\cdot \nabla^2 d+ \nabla u\cdot \nabla d -\Delta \nabla d = 
	2\nabla d \cdot \nabla ^2 d + |\nabla d|^2\nabla d,
\end{equation*}
thus, $h$ is solution of
\begin{equation*}
	\partial_t h + (\,^tA\nabla_y v)\cdot \nabla_y h - \Div_y (A \nabla_y h) = 2h\cdot \nabla_y h\, \omega + |h|^2h.
\end{equation*}

\section{Uniqueness}

\noindent This section is devoted to the proof of Theorem \ref{Main_Theorem_3} and Theorem \ref{Main_Theorem_4}. For $i=1,2$, let $(u_i,\,d_i,\,a_i)$ be two solutions of \eqref{Navier_Stokes_system} satisfying the condition of Theorem \ref{Main_Theorem_2}. Let $X_i$ be the flow generated by $u_i$, for $i=1,2$, and $(v_i,\,\omega_i, b_i)$ the Lagrangian formulations of the solutions. At fist, let us observe that $b_1\equiv b_2\equiv b_0$, thus setting $\delta v := v_1-v_2$, $\delta \omega := \omega_1 - \omega_2$ and $\delta P:= P_1-P_2$, we observe that $(\delta v,\,\delta \omega\,\delta h,\,\delta P)$ is solution for
\begin{equation}\label{Navier_Stokes_Lagrangian_coordinates_difference}
\begin{cases}
	\,\partial_t \delta v - \Delta \delta v + \nabla \delta P =  b_0(\Delta \delta v - \nabla \delta P)
	+\delta f_1+\delta f_2+ \delta f_3											&\RR_+\times\RR^N,\\
	\partial_t \delta \omega = \delta f_4 +\delta f_5							&\RR_+\times\RR^N,\\
	\partial_t \delta h - \Delta \delta h = \delta f_6+\delta f_7+\delta f_8	&\RR_+\times\RR^N,\\
	\Div\{\delta v\} =\delta g													&\RR_+\times\RR^N,\\
	\partial_t \delta g = \delta R												&\RR_+\times\RR^N,\\
	(\delta v,\,\delta \omega, \,\delta h)_{|t=0}=(0,0,0)						& \;\;\quad \quad\RR^N,\\
	\end{cases}
\end{equation} 
where
\begin{equation*}
\begin{aligned}
	\delta f_1&:=(1+a_0)[(\Id-\,^tA_2)\nabla\delta P -\delta A \nabla P_1],\\
	\delta f_2&:= (1+a_0)\Div\{ (A_2\,^tA_2 -\Id)\nabla \delta v +(A_2\,^tA_2 -A_1\,^tA_1)\nabla v_1\},\\
	\delta f_3&:=\Div\big\{
	\delta A (h_2 \odot \,h_2)+A_1 (\delta h \odot \,h_2)+	A_1\, (h_1 \odot \,\delta h)\big\},\\
	\delta f_4&:=\delta h\cdot h_2 \omega_2 + h_1\cdot \delta h\,\omega_2 + |h_1|^2\delta \omega,\\
	\delta f_5&:= \,^t\delta A : \nabla h_2 + \,^t A_1 : \nabla \delta h,
\end{aligned}
\end{equation*}
and
\begin{equation*}
\begin{aligned}
	\delta f_6&:=-\,^t\delta A \nabla v_2\cdot h_2 -\,^tA_1\nabla \delta v\cdot h_2-\,^tA_1
	\nabla v_1\cdot \delta h,\\
	\delta f_7&:= \Div\big\{(\,^tA_2-\Id)\nabla\delta h +\,^t\delta A \nabla h_1\big\},\\
	\delta f_8&:= 2\delta h\cdot \nabla h_2\, \omega_2 + 2h_1 \cdot \nabla \delta h\, \omega_2 + 
	h_1 \cdot \nabla h_1\delta \omega,\\
	\delta g&:= (\Id-\,^tA_2): \nabla\delta v + \,^t\delta A : \nabla v_1,\\
	\delta R&: =  \partial_t \big[ (\Id -A_2)\delta v \big] - \partial_t \big[\delta A v_1 \big].
\end{aligned}
\end{equation*}
\noindent In what follows, we will use repeatedly the following identity:
\begin{equation}\label{uniqueness_formula_deltaA}
	\delta A(t) =\Big( \int_0^t D\delta v(\tau)\dd \tau\Big)\Big(\sum_{k\geq 1} \sum_{0\leq j < k}
	C_1^j(t) C_2^{k-1-j}(t)\Big),
\end{equation}
with
\begin{equation*}
	C_i(t):=\int_0^t Dv_i(\tau),\quad \text{for}\quad
	i=1,\,2.
\end{equation*}

\subsection{Uniqueness: the smooth case}

\noindent Let us assume that $1<p<Nr/(3r-2)$, $\ee\in (0,1]$ and $r\in (1, 2/(2-\ee)$. 
We suppose our initial data $(u_0,\,\nabla d_0)$ to be in $\BB_{p,r}^{N/p-1+\ee}\cap \BB_{p,r}^{N/p-1}$ and we want to prove that the solution for \eqref{Navier_Stokes_system_1}, given by Theorem \ref{Main_Theorem}, is unique. First, let us observe that our solution belongs to the functional framework of Theorem \ref{Main_Theorem_3}, thanks to proposition \ref{Theorem_solutions_smooth_dates}.  
Now, let us tackle the proof of the uniqueness. We need the following Lemma
\begin{lemma}\label{Lemma2_uniq_estimate}
	Let $T>0$ and let us assume that $f$, $\nabla g$, and $R$ belong to $L^r_T L^{Nr/(3r-2)}_x$. Then
	\begin{equation*}
	\begin{cases}
		\,\partial_t v - \Delta v + \nabla P = f		&(0,T)\times\RR^N,\\
		\,\Div v = g									&(0,T)\times\RR^N,\\
		\,\partial_t g = \Div\, R							&(0,T)\times\RR^N,\\
		v_{|t=0}=0										&\quad\quad\quad\;\;\RR^N,
	\end{cases}
	\end{equation*} 
	admits a unique solution such that
	\begin{equation*}
	\begin{split}
		\|v							\|_{L^{2r}_T L^{\frac{Nr}{r-1}}_x} 		+ 
		\|v							\|_{L^{2}_T L^{\infty}_x}				&+
		\|\nabla v					\|_{L^{2r}_T L^{\frac{Nr}{2r-1}}_x} 	+
		\|\nabla v					\|_{L^{2r}_T L^{\frac{Nr}{2(r-1)}}_x} 	+\\&+
		\|(\partial_t v,\, \nabla^2 v,\,\nabla \Pi)	\|_{L^r_T	 L^{\frac{Nr}{3r-2}}_x}
		\lesssim 
		\|(f,\,\nabla g,\,R)\|_{L^r_T	 L^{\frac{Nr}{3r-2}}_x}.
	\end{split}
	\end{equation*}
\end{lemma}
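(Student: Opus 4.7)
The plan is to reduce to the divergence-free heat equation by absorbing the inhomogeneous divergence constraint into an explicit corrector. Set $\bar v(t,\cdot) := -\nabla(-\Delta)^{-1}g(t,\cdot)$, so that $\Div\bar v = g$ and, because $v_{|t=0}=0$ forces $g_{|t=0}=0$, we have $\bar v_{|t=0}=0$. Using the fifth equation $\partial_t g = \Div R$ and the identity $\nabla(-\Delta)^{-1}\Div = \Id - \PP$, where $\PP$ is the Leray projector, one obtains $\partial_t \bar v = (\PP-\Id)R$, while $\Delta\bar v = -\nabla g$. Hence $w := v-\bar v$ is divergence free, vanishes at $t=0$, and satisfies
\begin{equation*}
    \partial_t w - \Delta w + \nabla P = f + (\Id-\PP)R + \nabla g.
\end{equation*}
Applying $\PP$ kills the last two terms (both are gradients) and yields a genuine heat equation $\partial_t w - \Delta w = \PP f$ with zero initial data, while comparing the two equations isolates the pressure as $\nabla P = (\Id-\PP)f - (\Id-\PP)R + \nabla g$. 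This already controls $\nabla P$ in $L^r_T L^{Nr/(3r-2)}_x$ by boundedness of $\PP$ on Lebesgue spaces.

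For the corrector $\bar v$, Riesz-transform boundedness gives $\|\nabla^2\bar v\|_{L^r_T L^{Nr/(3r-2)}_x}\lesssim\|\nabla g\|_{L^r_T L^{Nr/(3r-2)}_x}$ and $\|\partial_t\bar v\|_{L^r_T L^{Nr/(3r-2)}_x}\lesssim\|R\|_{L^r_T L^{Nr/(3r-2)}_x}$; the lower-order norms $\|\bar v\|_{L^{2r}_T L^{Nr/(r-1)}_x}$, $\|\bar v\|_{L^2_T L^\infty_x}$ and $\|\nabla\bar v\|_{L^{2r}_T L^{Nr/(2r-1)}_x}$, $\|\nabla\bar v\|_{L^r_T L^{Nr/(2(r-1))}_x}$ are recovered by writing $\bar v$ (resp.\ $\nabla\bar v$) as a Riesz potential of $\nabla g$ (resp.\ $g$) and invoking the Hardy-Littlewood-Sobolev Theorem~\ref{HLS_Theorem}. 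For $w$, which solves the heat equation with forcing $\PP f$ and zero Cauchy datum, the maximal regularity Theorem~\ref{Maximal_regularity_theorem} delivers the $\partial_t w$ and $\nabla^2 w$ bounds in $L^r_T L^{Nr/(3r-2)}_x$, while Lemma~\ref{Lemma2} applied to $\mathcal B(\PP f)$ and Lemma~\ref{Lemma3} applied to $\mathcal C(\PP f)$ yield the remaining mixed-norm estimates, upon verifying the scaling identities \eqref{condition_lemma2}, \eqref{condition_lemma3} and the strict inequalities \eqref{condition_lemma2a}, \eqref{condition_lemma3a}: for each target pair $(r_2,q_2)\in\{(2r,Nr/(r-1)),(2,\infty),(2r,Nr/(2r-1)),(r,Nr/(2(r-1)))\}$ paired with the source $(r_1,q_1)=(r,Nr/(3r-2))$, both conditions reduce to straightforward arithmetic in $r\in(1,2)$.

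Uniqueness is an immediate consequence of linearity: the difference of two solutions satisfies the same system with $f=\nabla g=R=0$, and the estimate above forces this difference to vanish. Existence is constructive, as $w$ is produced by the Duhamel formula $w(t)=\mathcal C(\PP f)(t)$ and $\bar v$ is produced by the explicit formula $\bar v(t)=-\nabla(-\Delta)^{-1}g(t)=-\int_0^t(\Id-\PP)R(s)\dd s$; the candidate $v=\bar v+w$ satisfies every equation of the system by construction. The only genuine obstacle is the bookkeeping of the Lebesgue exponents ensuring that the Hardy-Littlewood-Sobolev and smoothing lemmas apply at each norm appearing on the left-hand side, but since the target norms are precisely those dictated by the scaling of $\dot B^{N/p-1}_{p,r}$ with $p=Nr/(3r-2)$, all index relations close exactly.
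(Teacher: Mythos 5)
Your reduction of the divergence--free part is fine: $w$ solves $\partial_t w-\Delta w=\PP f$ with $w(0)=0$, and Theorem \ref{Maximal_regularity_theorem}, Lemma \ref{Lemma2} and Lemma \ref{Lemma3} applied to $\Cc(\PP f)$, $\Bb(\PP f)$ do give every norm on the left-hand side for $w$ (the index checks you cite close for $r\in(1,2)$). The genuine gap is in the corrector $\bar v=-\nabla(-\Delta)^{-1}g$: the lower-order norms $\|\bar v\|_{L^{2r}_T L^{Nr/(r-1)}_x}$, $\|\bar v\|_{L^{2}_T L^{\infty}_x}$ and $\|\nabla\bar v\|_{L^{2r}_T L^{Nr/(2r-1)}_x}$ cannot be recovered by viewing $\bar v$ as a spatial Riesz potential of $\nabla g$ and invoking Theorem \ref{HLS_Theorem}. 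First, these norms have time integrability $2r$ (resp.\ $2$), strictly higher than the exponent $r$ assumed on $f$, $\nabla g$, $R$; a fixed-time spatial estimate cannot raise integrability in $t$ --- in this lemma the gain in time comes precisely from the time convolution with the heat kernel, i.e.\ from Lemmas \ref{Lemma2}--\ref{Lemma3}, which you only apply to $w$. Second, even the spatial arithmetic fails: up to Riesz transforms $\bar v(t)$ is $|\cdot|^{-(N-2)}*\nabla g(t)$, and with $\nabla g(t)\in L^{Nr/(3r-2)}_x$ the Hardy--Littlewood--Sobolev relation would force a target exponent $q$ with $1/q=(3r-2)/(Nr)-2/N=(r-2)/(Nr)\le 0$, out of range in the regime $1<r<2$ where the lemma is used; likewise $(-\Delta)^{-1}\nabla$ does not map $L^{Nr/(3r-2)}_x$ into $L^\infty_x$. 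The alternative representation $\bar v(t)=\int_0^t(\Id-\PP)R(s)\,\dd s$ only yields $L^\infty_T L^{Nr/(3r-2)}_x$-type control, again not the listed norms, and interpolation with the top-order bounds does not reach the exponent $2r$ in time. (Your bounds for $\nabla^2\bar v$, $\partial_t\bar v$ and for $\nabla P$ are correct.)

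The gap is repairable, but the repair essentially collapses your argument onto the paper's proof: since $\partial_t\bar v-\Delta\bar v=(\Id-\PP)R-\nabla g$ with $\bar v(0)=0$, one can represent $\bar v$ by Duhamel and run it through Theorem \ref{Maximal_regularity_theorem} and Lemmas \ref{Lemma2}, \ref{Lemma3} as well; equivalently, and more directly, the paper takes the divergence of the momentum equation, solves $\nabla P$ in terms of $f$, $R$, $\nabla g$ via Riesz operators (bounded on $L^{Nr/(3r-2)}_x$), and then writes the \emph{whole} of $v$ as $v(t)=\int_0^t e^{(t-s)\Delta}(f-\nabla P)(s)\,\dd s$, so that all left-hand norms follow from the heat-kernel lemmas at once, with no corrector needed. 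Finally, some harmless sign slips: $\nabla(-\Delta)^{-1}\Div=\PP-\Id$, not $\Id-\PP$, hence $\partial_t\bar v=(\Id-\PP)R$ and $\Delta\bar v=+\nabla g$; these do not affect any estimate, and your final formula for $\nabla P$ is in fact the correct one.
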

\begin{proof}
	Applying $-\Div$ to the first equation, we get
	\begin{equation*}
		-\Delta P = \Div\, \{R-f-\nabla g\},
	\end{equation*}
	which yields $\nabla P = RR\cdot \{R-f-\nabla g\}$.  Hence, $\|\nabla P\|_{L^q_x}\leq \|R,\,f,\,\nabla g 
	\|_{L^q_x}$, for every $q\in (1,\infty)$. Moreover, $v$ is determined by
	\begin{equation*}
		v(t)= \int_0^t e^{(t-s)\Delta}(f-\nabla P)(s)\dd s,
	\end{equation*}
	thus by Lemmas \ref{Lemma2}, \ref{Lemma3} and Theorem \ref{Maximal_regularity_theorem}, we 
	obtain the required estimate.
\end{proof}
\noindent Thus, recalling system \eqref{Navier_Stokes_Lagrangian_coordinates_difference}, we get
\begin{equation*}
\begin{split}
	\|\delta v							\|_{L^{2r}_T L^{\frac{Nr}{r-1}}_x} 	    + 
	\|\delta v							\|_{L^{2}_T L^{\infty}_x}		 	   &+	
	\|\nabla \delta v					\|_{L^{2r}_T L^{\frac{Nr}{2r-1}}_x} 	+
	\|\nabla \delta v					\|_{L^{2r}_T L^{\frac{Nr}{2(r-1)}}_x} 	+
	\|\delta v					\|_{L^{2}_T L^{\infty}_x} 	+\\&+
	\|(\partial_t\delta  v,\, \nabla^2\delta  v,\,\nabla\delta  \Pi)	\|_{L^r_T	 L^{\frac{Nr}{3r-2}}_x}
	\lesssim 
	\|(\delta f_1,\,\delta f_2,\,\delta f_3,\,\nabla \delta g, \delta R)\|_{L^r_T L^{\frac{Nr}{3r-2}}_x},
\end{split}
\end{equation*}
where we have also used that  $\|b_0\|_{L^\infty_x}=\|a_0\|_{L^\infty_x}\leq \eta$. Furthermore by the second equation of \eqref{Navier_Stokes_Lagrangian_coordinates_difference}, we get $\delta \omega
\in L^\infty_T L^{Nr/2(r-1)}_x$ and
\begin{equation*}
	\|\delta \omega\|_{L^\infty_T L^{\frac{Nr}{2(r-1)}}_x}\lesssim 
	\|(\delta f_4,\,\delta f_5)\|_{L^1_T L^\frac{Nr}{2(r-1)}_x},
\end{equation*}
and by Theorem \ref{Maximal_regularity_theorem}, Lemmas \ref{Lemma2} and \ref{Lemma3}, we get
\begin{equation*}
\begin{split}
	\|\delta h								\|_{L^{ 2r }_T L^{ \frac{Nr}{r-1}   }_x}  &+
	\|\delta h								\|_{L^{ 3r }_T L^{ \frac{3Nr}{3r-2} }_x}   +
	\|\delta h								\|_{L^{ 2  }_T L^{ \infty           }_x}   +
	\|\nabla \delta h						\|_{L^{ 2r }_T L^{ \frac{Nr}{2r-1}  }_x}   +\\&+
	\|\nabla \delta h						\|_{L^{ 2r }_T L^{ \frac{Nr}{2(r-1)}}_x}   +
	\|\nabla^2 \delta h						\|_{L^{ r  }_T L^{ \frac{Nr}{3r-2}  }_x}
	\lesssim
	\|(\delta f_6,\,\delta f_7,\,\delta f_8)\|_{L^{ r  }_T L^{ \frac{Nr}{3r-2}  }_x}.
\end{split}
\end{equation*}

\noindent Summarizing the previous inequality, we need to control the right-hand side of
\begin{equation*}
\begin{split}
	\|(\delta& v,\,\delta h,\,\delta P)\|_{\X_{r,T}} + \|\partial_t v\|_{L^r_T L^{\frac{Nr}{3r-2}_x}}
	+ \|\delta \omega\|_{L^\infty_T L^{\frac{Nr}{2(r-1)}}_x}
	\lesssim \\&
	\lesssim
	\|(\delta f_1,\,\delta f_2,\,\delta f_3,\,\delta f_6,\, \delta f_6\, \delta f_7,\, \delta f_8,\,
	\nabla \delta g, \delta R)\|_{L^r_T L^{\frac{Nr}{3r-2}}_x}+
	\|(\delta f_4,\,\nabla \delta f_5,\,\delta f_6)\|_{L^1_T L^\frac{Nr}{2(r-1)}_x}.
\end{split}
\end{equation*}

\noindent We are going to estimate each of these terms step by step. Moreover, in what follows we will use that
\begin{equation*}
\begin{aligned}
	\|(\nabla^2 v_i,\,\nabla^2 h_i,\nabla P_i)\|_{L^r	  _T  L^N                         _x}&\lesssim 
	\|(\nabla^2 v_i,\,\nabla^2 h_i,\nabla P_i)\|_{L^r     _T  L^{\frac{Nr}{(3-\ee)r-2}}   _x}+
	\|(\nabla^2 v_i,\,\nabla^2 h_i,\nabla P_i)\|_{L^r     _T  L^{\frac{Nr}{3r-2}}         _x}<\infty,\\		
	\|             \nabla (v_i,\,h_i)         \|_{L^{2r}  _T  L^{Nr}                      _x}&\lesssim
	\|             \nabla (v_i,\,h_i)         \|_{L^{2r}  _T  L^{\frac{Nr}{2r-1}}         _x}+			
	\|             \nabla (v_i,\,h_i)         \|_{L^{2r}  _T  L^{\frac{Nr}{(2-\ee)r-1}}   _x}<\infty,\\
	\|					\nabla A_i			  \|_{L^\infty_T  L^{\frac{Nr}{(3-\ee)r-2}}   _x}&\lesssim
	\|					\nabla A_i			  \|_{L^\infty_T  L^N						  _x}+
	\|					\nabla A_i			  \|_{L^\infty_T  L^{\frac{Nr}{3r-2}}         _x}<\infty,
\end{aligned}
\end{equation*}
for $i=1,2$.

\noindent	
\textit{Bounds for $\delta f_1$}. From the definition of $\delta f_1$, we readily get
\begin{equation*}
	\|\delta f_1\|_{L^{r}_TL^{\frac{Nr}{3r-2}}_x}\lesssim 
	\|\Id-\,^tA_2\|_{L^{\infty}_TL^{\infty}_x}
	\|\nabla \delta P\|_{L^{r}_TL^{\frac{Nr}{3r-2}}_x} + 
	\|\delta A\|_{L^\infty_T L^{\frac{Nr}{2(r-1)}}_x}\|\nabla P\|_{L^{2r}_TL^{N}_x}, 
\end{equation*}
where $Nr/(2r-2)$ is the Lebesgue exponent in the critical Sobolev embedding
\begin{equation*}
	W^{1,\frac{Nr}{3r-2}}_x\hookrightarrow L^{\frac{Nr}{2(r-1)}}_x.
\end{equation*}
Consequently, because $T<1$, recalling \eqref{uniqueness_formula_deltaA}, we obtain
\begin{equation*}
	\|\delta f_1\|_{L^{r}_TL^{\frac{Nr}{3r-2}}_x}\lesssim 
	\|\nabla v_2\|_{L^{\frac{2}{2-\ee}}_T L^{\infty}_x}
	\|\nabla \delta P\|_{L^{r}_TL^{\frac{Nr}{3r-2}}_x} + 
	\|\nabla^2 \delta v\|_{L^r_TL^{\frac{Nr}{3r-2}}_x}
	\|\nabla P\|_{L^{r}_TL^{\frac{Nr}{3r-2}}_x}.
\end{equation*}
Thus there exists a continuous function $t\rightarrow \chi_1(t)$, which goes to $0$ for $t\rightarrow 0$ and
\begin{equation}\label{uniqueness1_delta_f_1}
	\|\delta f_1\|_{L{r}_TL^{\frac{Nr}{3r-2}}_x}\lesssim \chi_1(T)
	\|(\delta v,\,\delta h,\,\nabla \delta P)\|_{\X_{r,T}}.
\end{equation}

\noindent	
\textit{Bounds for $\delta f_2$.}  From the definition of $\delta f_2$  and observing that 
$A_2\,\,^tA_2-A_1\,\,^tA_1=\delta A\,^t A_2 +  A_1\,^t \delta A$, we deduce that
\begin{equation*}
\begin{split}
	\|&\delta f_2\|_{L^{r}_TL^{\frac{Nr}{3r-2}}_x}\lesssim
	\|\nabla(A_2\,^tA_2)\|_{L^{\infty}_T L^{N}_x} 
	\|\nabla \delta v\|_{L^{r}_T L^{\frac{Nr}{2(r-1)}}_x} + 
	\|A_2\,^tA_2\|_{L^{\infty}_TL^\infty_x} \|\nabla^2 \delta v\|_{L^{r}_TL^{\frac{Nr}{3r-2}}_x} +\\&+
	\|\delta A\|_{L^\infty_TL^{\frac{Nr}{2(r-1)}}_x}  
	(\|\nabla A_2\|_{L^{\infty}_T L^{N}_x} + \|\nabla A_1 \|_{L^{\infty}_T L^{N}_x})
	\|\nabla v_1\|_{L^{\frac{2}{2-\ee}}_TL^\infty_x} +
	\|\nabla \delta A\|_{L^\infty_TL^{\frac{Nr}{3r-2}}_x}  
	(\| A_2\|_{L^{\infty}_T L^{\infty}_x} +\\&+ \| A_1 \|_{L^{\infty}_T L^{\infty}_x})
	\|\nabla v_1\|_{L^{\frac{2}{2-\ee}}_TL^\infty_x}+
	\|\delta A\|_{L^\infty_TL^{\frac{Nr}{2(r-1)}}_x}  
	(\| A_2\|_{L^{\infty}_T L^{\infty}_x} + \| A_1 \|_{L^{\infty}_T L^{\infty}_x})
	\| \nabla^2 v_1\|_{L^{r}_TL^{N}_x}
\end{split}
\end{equation*}
Hence, there exists a continuous function $t\rightarrow \chi_2(t)$ which goes to $0$ for $t\rightarrow 0$, 
such that 
\begin{equation}\label{uniqueness1_delta_f_2}
	\|\delta f_2\|_{L^{r}_TL^{\frac{Nr}{3r-2}}_x}\lesssim 
	 \chi_2(T)\|(\delta v,\, \delta  h,\,\nabla \delta P)\|_{\X_{r,T}}.
\end{equation} 

\noindent
\textit{Bounds for $\delta f_3$} From the definition of $\delta f_3$ we get
\begin{equation*}
\begin{split}
	\|\delta f_3&\|_{L^{r}_T L^{\frac{Nr}{3r-2}}_x}\lesssim \\
	&\quad\;
	\|\nabla \delta A \|_{L^\infty_T L^{\frac{Nr}{3r-2}}_x} 
	\|h_1\|_{L^{\frac{2}{2-\ee}}_T L^{\infty}_x}^2+
	\|\delta A \|_{L^\infty_T L^{\frac{Nr}{2r-1}}_x} 
	\|\nabla h_1\|_{L^{\frac{2}{2-\ee}}_T L^{\infty}_x}
	\|h_1\|_{L^{2r}_T L^{\frac{Nr}{r-1}}_x}+\\&+
	\|\nabla A_1\|_{L^\infty_TL^N_x}
	\|\nabla \delta h\|_{L^{2r}_TL^{\frac{Nr}{2r-1}}_x} 
	\|h_2\|_{L^{2r}_T L^{\frac{Nr}{r-1}}_x}+
	\|\nabla A_1\|_{L^\infty_TL^N_x}
	\|\delta h\|_{L^{2r}_TL^{\frac{Nr}{r-1}}_x} 
	\|h_2\|_{L^{2r}_T L^{\frac{Nr}{r-1}}_x}+
	\\&+
	\|A_1\|_{L^\infty_TL^\infty_x}
	\|\nabla \delta h\|_{L^{2r}_TL^{\frac{Nr}{2r-1}}_x} 
	\|h_2\|_{L^{2r}_T L^{\frac{Nr}{r-1}}_x} +
	\|A_1\|_{L^\infty_TL^\infty_x}
	\|\delta h\|_{L^{2r}_TL^{\frac{Nr}{r-1}}_x} 
	\|\nabla h_2\|_{L^{2r}_T L^{\frac{Nr}{2r-1}}_x} +\\&+	
	\|\nabla A_1\|_{L^\infty_TL^N_x}
	\|\nabla \delta h\|_{L^{2r}_TL^{\frac{Nr}{2r-1}}_x} 
	\|h_1\|_{L^{2r}_T L^{\frac{Nr}{r-1}}_x}+
	\|\nabla A_1\|_{L^\infty_TL^N_x}
	\|\delta h\|_{L^{2r}_TL^{\frac{Nr}{2r-1}}_x} 
	\|h_1\|_{L^{2r}_T L^{\frac{Nr}{r-1}}_x}+
	\\&+
	\|A_1\|_{L^\infty_TL^\infty_x}
	\|\nabla \delta h\|_{L^{2r}_TL^{\frac{Nr}{2r-1}}_x} 
	\|h_1\|_{L^{2r}_T L^{\frac{Nr}{r-1}}_x} +
	\|A_1\|_{L^\infty_TL^\infty_x}
	\|\delta h\|_{L^{2r}_TL^{\frac{Nr}{r-1}}_x} 
	\|\nabla h_1\|_{L^{2r}_T L^{\frac{Nr}{r-2}}_x}\end{split}
\end{equation*}
Hence, there exists $\chi_3(t)$ such that
\begin{equation}\label{uniqueness1_delta_f_3}
	\|\delta f_3\|_{L^{r}_T L^{\frac{Nr}{3r-2}}_x}\lesssim 
	\chi_3(T)\|(\delta v,\, \delta h,\,\nabla \delta P)\|_{\X_{r,T}}.
\end{equation}

\noindent
\textit{Bounds for $\delta f_6$} From the definition of $\delta f_6$ we get
\begin{equation*}
\begin{split}
	\|&\delta f_6\|_{L^{r}_TL^{\frac{Nr}{3r-2}}_x}\lesssim 
	\|\delta A \|_{L^\infty_TL^{\frac{Nr}{2r-1}}_x}
	\|\nabla v_2\|_{L^{\frac{2}{2-\ee}}_TL^\infty_x}
	\|h_2\|_{L^{2r}_TL^{\frac{Nr}{r-1}}_x} +\\& +
	\|A_1\|_{L^\infty_TL^\infty_x} 
	\|\nabla \delta v\|_{L^{2r}_TL^{\frac{Nr}{2r-1}}_x}
	\|h_2\|_{L^{2r}_TL^{\frac{Nr}{r-1}}_x}
	+
	\|A_1\|_{L^\infty_TL^\infty_x} 
	\|\nabla  v_1\|_{L^{2r}_TL^{\frac{Nr}{r-1}}_x}
	\|\delta  h\|_{L^{2r}_TL^{\frac{Nr}{r-1}}_x},
\end{split}
\end{equation*}
Thus
\begin{equation}\label{uniqueness1_delta_f_6}
	\|\delta f_6\|_{L^{r}_TL^{\frac{Nr}{3r-2}}_x}\lesssim
	\chi_6(T)\|(\delta v,\, \delta h,\,\nabla \delta P)\|_{\X_{r,T}},
\end{equation}
for an opportune continuous function $\chi_7(t)$ which goes to $0$ when $t\rightarrow 0$.

\noindent	
\textit{Bounds for $\delta f_7$}. From the definition of $\delta f_7$ we have
\begin{equation*}
\begin{split}
	\|\delta f_7  \|_{L^{r}_TL^{\frac{Nr}{3r-2}}_x}\lesssim 
	\|\nabla A_2\|_{L^\infty_T L^N_x}&
	\|\nabla^2 \delta h\|_{L^{r}_T L^{\frac{Nr}{3r-2}}_x}+
	\|A_2-\Id\|_{L^\infty_T L^\infty_x}
	\|\nabla^2 \delta h \|_{L^{r}_T L^{\frac{Nr}{3r-2}}_x}+ \\&+
	\|\nabla \delta A \|_{L^\infty_TL^{\frac{Nr}{3r-2}}_x} \|\nabla h_1\|_{L^{\frac{2}{2-\ee}}_TL^{\infty}_x} + 
	\|\delta A\|_{L^\infty_T L^{\frac{Nr}{2(r-1)}}_x}\|\nabla^2 h_1\|_{L^{2r}_T L^N_x},
\end{split}
\end{equation*}
which yields that there exists a continuous function $\chi_7(t)\geq 0$, with $\chi_7(0)=0$, such that
\begin{equation}\label{uniqueness1_delta_f_7}
	\|\delta f_7  \|_{L^{r}_TL^{\frac{Nr}{3r-2}}_x}\lesssim \chi_7 (T)
	\|(\delta v,\, \delta h,\,\nabla \delta P)\|_{\X_{r,T}}.
\end{equation}

\noindent \textit{Bounds for $\delta f_8$}. From the definition of $\delta f_8$ we get 
\begin{equation*}
\begin{split}
	\|\delta f_8&\|_{L^{r}_TL^{\frac{Nr}{3r-2}}_x}\lesssim  
	\|\delta h\|_{L^{2r}_TL^{\frac{Nr}{r-1}}_x}
	\|\nabla h_2 \|_{L^{2r}_T L^{\frac{Nr}{2r-1}}_x}
	\|\omega_2\|_{L^\infty_TL^\infty_x} +\\&+ 
	\|h_1\|_{L^{2r}_TL^{\frac{Nr}{r-1}}_x}
	\|\nabla \delta h \|_{L^{2r}_T L^{\frac{Nr}{2r-1}}_x} 
	\|\omega_2\|_{L^\infty_TL^\infty_x} +
	\|h_1\|_{L^{2r}_T L^{\frac{Nr}{r-1}}_x}\| \nabla h_1 \|_{L^{2r}_T L^{Nr}_x}
	\|\delta \omega\|_{L^\infty_TL^\frac{Nr}{2(r-1)}_x} 	 
\end{split}
\end{equation*}
which yields that there exists a continuous function $\chi_8(t)\geq 0$, with $\chi_8(0)=0$, such that
\begin{equation}\label{uniqueness1_delta_f_8}
	\|\delta f_8  \|_{L^{r}_TL^{\frac{Nr}{3r-2}}_x}\lesssim \chi_9 (T)\Big\{
	\|(\delta v,\, \delta h,\,\nabla \delta P)\|_{\X_{r,T}} +
	\|\delta \omega\|_{L^\infty_TL^\frac{Nr}{2(r-1)}_x}
	\Big\}.
\end{equation}

\noindent	
\textit{Bounds for $\nabla \delta g $}. By the definition of $\delta g$ we get
\begin{equation*}
\begin{split}
	\|\nabla \delta g\|_{L^{r}_TL^{\frac{Nr}{3r-2}}_x} \lesssim 
	\|\nabla A_2&\|_{L^\infty_TL^N_x}\|\nabla \delta v\|_{L^{r}_T L^{\frac{Nr}{2(r-1)}}_x} + 
	\|\Id-A_2\|_{L^\infty_TL^\infty_x} \|\nabla^2 \delta v\|_{L^{r}_TL^{\frac{Nr}{3r-2}}_x} + \\&+
	\|\nabla \delta A\|_{L^{\infty}_TL^{\frac{Nr}{3r-2}}_x}\|\nabla v_1\|_{L^{\frac{2}{2-\ee}}_TL^\infty_x} + 
	\|\delta A\|_{L^\infty_T L^{\frac{Nr}{2(r-1)}}_x}\|\nabla^2 v_1\|_{L^{r}_T L^N_x}.
\end{split}
\end{equation*}
We deduce that there exists a continuous function $\chi_g(t)$ with $\chi_g(0)=0$ such that
\begin{equation}\label{uniqueness1_nabla_delta_g}
	\|\nabla \delta g\|_{L^{r}_TL^{\frac{Nr}{3r-2}}_x} \lesssim 
	\chi_g(T)\|(\delta v,\, \delta h,\,\nabla \delta P)\|_{\X_{r,T}}.
\end{equation} 

\noindent
\textit{Bounds for $\delta R $}. From the definition of $\delta R$ we have
\begin{equation*}
\begin{split}
	\|\delta R\|_{L^{r}_TL^{\frac{Nr}{3r-2}}_x}\lesssim 
	\|\nabla v_2&\|_{L^{2r}_TL^{\frac{Nr}{2r-1}}_x}\|\delta v\|_{L^{2r}_TL^{\frac{Nr}{r-1}}_x}+
	\|\Id-A_2\|_{L^{\infty}_T L^\infty_x}\|\partial_t \delta v\|_{L^{r}_TL^{\frac{Nr}{3r-2}}_x}+\\&+
	\|\nabla\delta v\|_{L^{2r}_TL^{\frac{Nr}{2r-1}}_x}\|v_1\|_{L^{2r}_TL^{\frac{Nr}{r-1}}_x}
	+\|\delta A \|_{L^{\infty}_T L^{\frac{Nr}{2(r-1)}}_x}\| \partial_t v_1 \|_{L^r_TL^N_x},
\end{split}
\end{equation*}
Thus, there exists  a continuous function $\chi_R(t)$ with $\chi_R(0)=0$ such that
\begin{equation}\label{uniqueness1_delta_R}
	\|\delta R\|_{L^{r}_TL^{\frac{Nr}{3r-2}}_x}\lesssim  
	\chi_R(T)\Big\{\|(\delta v,\, \delta h,\,\nabla \delta P)\|_{\X_{r,T}}+ 
	\|\partial_t\delta v\|_{L^{r}_TL^{\frac{Nr}{3r-2}}_x}\Big\}.
\end{equation}

\noindent	
\textit{Bounds for $\delta f_4 $}. From the definition of $\delta f_4$ it follows
\begin{equation*}
\begin{split}
	\|\delta f_4 \|_{L^{1}_T  L^{\frac{Nr}{2(r-1)}}_x}
	&\lesssim 
	\|\delta	h		\|_{L^{2r}_T L^{\frac{Nr}{r-1}}_x}
	\| (h_1,\,h_2)	\|_{L^{2r}_T L^{\frac{Nr}{r-1}}_x} 
	\| \omega_2\|_{L^\infty_T L^\infty_x}	+\\
	&+\| h_1\|_{L^{2}_T L^{\infty}_x}^2
	\|\delta \omega\|_{L^\infty_T L^\frac{Nr}{2(r-1)}_x}.
\end{split}
\end{equation*}
Therefore, we obtain
\begin{equation}\label{uniqueness1_delta_f_4}
	\|\delta 	f_4 \|_{L^{1}_T  L^{\frac{Nr}{2(r-1)}}_x}\lesssim \chi_4 (T)\Big\{
	\|(\delta v,\, \delta h,\,\nabla \delta P)\|_{\X_{r,T}} +
	\|\delta \omega\|_{L^\infty_TL^\frac{Nr}{2(r-1)}_x}
	\Big\}.
\end{equation}

\noindent	
\textit{Bounds for $\delta f_5 $}. From the definition of $\delta f_5$ it follows
\begin{equation*}
\begin{split}
	\|\delta f_5 \|_{L^{1}_T  L^{\frac{Nr}{2(r-1)}}_x}
	\lesssim 
	\|\delta A\|_{L^\infty_T L^{\frac{Nr}{2(r-1)}}_x}
	\|\nabla h_2\|_{L^{\frac{2}{2-\ee}}_t L^\infty_x} + 
	 t^{1-\frac{1}{r}}\|A_1\|_{L^\infty_T L^\infty_x}
	 \|\nabla \delta h\|_{L^{r}_t L^{\frac{Nr}{2(r-1)}}_x} 
\end{split}
\end{equation*}
Therefore, we obtain
\begin{equation}\label{uniqueness1_delta_f_5}
	\|\delta 	f_5 \|_{L^{1}_T  L^{\frac{Nr}{2(r-1)}}_x}\lesssim \chi_5 (T)
	\|(\delta v,\, \delta h,\,\nabla \delta P)\|_{\X_{r,T}}.
\end{equation}

\noindent Summarizing 
\eqref{uniqueness1_delta_f_1},
\eqref{uniqueness1_delta_f_2},
\eqref{uniqueness1_delta_f_3},
\eqref{uniqueness1_delta_f_6},
\eqref{uniqueness1_delta_f_7},
\eqref{uniqueness1_delta_f_8},
\eqref{uniqueness1_nabla_delta_g},
\eqref{uniqueness1_delta_R},
\eqref{uniqueness1_delta_f_4}
 and 
\eqref{uniqueness1_delta_f_5},
we deduce that there exists a continuous function $\chi (t)=\sum_i \chi_i(t)$ which assume $0$ for $t=0$, 
such that 
\begin{equation*}
\begin{split}
	\|(\delta v,\,\delta h,\,\delta P)&\|_{\X_{r,T}} + \|\partial_t v\|_{L^r_T L^{\frac{Nr}{3r-2}_x}}
	+ \|\delta \omega\|_{L^\infty_T L^{\frac{Nr}{2(r-1)}}_x}
	\lesssim \\&\lesssim
	\chi(T)
	\Big\{\,
		\|(\delta v,\,\delta h,\,\delta P)\|_{\X_{r,T}} + 
		\|\partial_t v\|_{L^r_T L^{\frac{Nr}{3r-2}_x}}  +  
		\|\delta \omega\|_{L^\infty_T L^{\frac{Nr}{2(r-1)}}_x}\,
	\Big\},
\end{split}
\end{equation*}
which yields the uniqueness of the solution to \eqref{Navier_Stokes_system} on a sufficiently small interval. 
Then uniqueness part can be completed by a bootstrap method. 

\subsection{Uniqueness: the general case}

\noindent Now let us consider the general case $1<p<N$, $\ee\in (0,\min\{1/r,\,1-1/r,\,N/p-1\}]$ and our initial data $(u_0,\,\nabla d_0)$ in $\BB_{p,r}^{N/p-1+\ee}\cap \BB_{p,r}^{N/p-1}$. We want to prove that the solution for \eqref{Navier_Stokes_system_1}, given by Theorem \ref{Main_Theorem_2}, is unique. Let us observe that our solution belongs to the functional framework of Theorem \ref{Main_Theorem_3}, thanks to proposition \ref{Theorem_solutions_smooth_dates_2}. We also recall Remark \ref{remark1} for the Lispschitz-estimates and suppose $T<1$. In order to prove the uniqueness we need the following Lemma
\begin{lemma}
	Let $\alpha_1,\,\beta_i,\,\gamma_j$ and $p_1,\,p_2,\,p_3$ be defined by Theorem \ref{Main_Theorem_2}, 
	for $i=1,2$ and $j=1,2,3$. If $t^{\alpha_1}f$, $t^{\alpha_1}\nabla g$ and $t^{\alpha_1}R$ belong to 
	$L^{2r}_TL^{p_1}_x$, then
	\begin{equation*}
	\begin{cases}
		\,\partial_t v - \Delta v + \nabla P = f		&(0,T)\times\RR^N,\\
		\,\Div v = g									&(0,T)\times\RR^N,\\
		\,\partial_t g = \Div\, R							&(0,T)\times\RR^N,\\
		v_{|t=0}=0										&\quad\quad\quad\;\;\RR^N,
	\end{cases}
	\end{equation*}
	has a unique solution such that
	\begin{equation}\label{Lemma1_uniq_estimate}
	\begin{split}
		\|t^{\gamma_1}v\|_{L^{2r}_TL^{p_3}_x} &+ 
		\|t^{\gamma_2}v\|_{L^{\infty}_TL^{p_3}_x} +
		\|t^{\gamma_3}v\|_{L^{2r}_TL^{3p_1}_x} +
		\|t^{\gamma_4}v\|_{L^{\infty}_TL^{3p_1}_x} + 
		\|t^{\beta_1}v\|_{L^{2r}_TL^{p_2}_x}+ \\ &+
		\|t^{\beta_2}v\|_{L^{\infty}_TL^{p_2}_x}+
		\|t^{\alpha_1}(\partial_t v,\,\nabla^2 v,\,\nabla P)\|_{L^{2r}_T L^{p_1}_x}
		\lesssim 
		\|t^{\alpha_1}(f,\,\nabla g,\,R)\|_{L^{2r}_T L^{p_1}_x}
	\end{split}
	\end{equation}
\end{lemma}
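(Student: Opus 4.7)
The plan is to reduce the coupled system to a heat equation for $v$ by first solving for the pressure explicitly, then invoking the regularizing effects for the heat kernel established in the preliminaries.

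Applying the divergence operator to the momentum equation and using $\partial_t g = \Div R$ together with $\Delta(\Div v) = \Delta g$, we obtain
$$-\Delta P = \Div R - \Div f - \Delta g,$$
so that $\nabla P = -\mathcal{R}\otimes\mathcal{R}(f-R) + \nabla g$, where $\mathcal{R}$ denotes the Riesz transform. Since $\mathcal{R}$ is bounded on $L^{p_1}_x$ for $1<p_1<\infty$, this yields the pointwise in time bound
$$\|\nabla P(t)\|_{L^{p_1}_x} \lesssim \|f(t)\|_{L^{p_1}_x} + \|R(t)\|_{L^{p_1}_x} + \|\nabla g(t)\|_{L^{p_1}_x},$$
and multiplying by $t^{\alpha_1}$ then taking the $L^{2r}_T$ norm gives
$$\|t^{\alpha_1}\nabla P\|_{L^{2r}_T L^{p_1}_x} \lesssim \|t^{\alpha_1}(f,\nabla g,R)\|_{L^{2r}_T L^{p_1}_x}.$$

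With $\nabla P$ controlled, the velocity solves the heat equation $\partial_t v - \Delta v = f - \nabla P$ with zero initial data, hence $v(t) = \int_0^t e^{(t-s)\Delta}(f-\nabla P)(s)\,ds$. Because the constraints $\max\{p, Nr/(2r-1)\} < p_1 < N$ translate into $0 < \alpha_1 < 1 - 1/(2r)$, the weighted maximal regularity Theorem \ref{Maximal_regularity_Thm_weight_time}, applied with $\alpha = \alpha_1$, $\bar{r}=2r$ and $q=p_1$, controls $\partial_t v$ and $\nabla^2 v$ in $L^{2r}_T L^{p_1}_x$ with weight $t^{\alpha_1}$. The bounds for $\nabla v$ in the spaces $L^{2r}_T L^{p_2}_x$, $L^{\infty}_T L^{p_2}_x$, $L^{2r}_T L^{p_3/2}_x$, $L^{\infty}_T L^{p_3/2}_x$ with the respective weights $t^{\beta_i}$ follow from Lemma \ref{Lemma5b} applied with $q = p_1$ and $\bar q \in \{p_2, p_3/2\}$ to the forcing $f-\nabla P$; similarly, the norms of $v$ itself in $L^{2r}_T L^{p_3}_x$, $L^{\infty}_T L^{p_3}_x$, $L^{2r}_T L^{3p_1}_x$, $L^{\infty}_T L^{3p_1}_x$ with weights $t^{\gamma_j}$ come from Lemma \ref{Lemma4b} with $q = p_1$ and $\tilde q \in \{p_3, 3p_1\}$.

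Existence is now by construction: define $P$ through the Riesz-transform formula above and let $v$ be given by the Duhamel integral. A direct calculation shows that $\Div v - g$ solves the homogeneous heat equation with zero initial data (note that $g(0)=0$ follows from $\partial_t g = \Div R$ and the compatibility with $\Div v(0) = 0$), hence $\Div v = g$ automatically. Uniqueness is immediate from the linearity of the system: the difference of two solutions satisfies the same problem with vanishing data, and the a priori estimate \eqref{Lemma1_uniq_estimate} forces it to vanish. The main technical point is the bookkeeping verification that the exponents $(\alpha_1, \beta_i, \gamma_j)$ assigned to $(p_1, p_2, p_3)$ by Theorem \ref{Main_Theorem_2} satisfy the scaling identities required by Lemmas \ref{Lemma4b} and \ref{Lemma5b} and that $\alpha_1$ lies in the admissible window for Theorem \ref{Maximal_regularity_Thm_weight_time}, so that all the weighted heat-kernel estimates can indeed be applied.
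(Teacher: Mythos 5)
Your proposal is correct and follows essentially the same route as the paper: applying $\Div$ to the momentum equation to express $\nabla P$ through Riesz transforms (hence $\|t^{\alpha_1}\nabla P\|_{L^{2r}_TL^{p_1}_x}\lesssim\|t^{\alpha_1}(f,\nabla g,R)\|_{L^{2r}_TL^{p_1}_x}$), writing $v$ by Duhamel for the heat equation with forcing $f-\nabla P$, and invoking the weighted estimates (Theorem \ref{Maximal_regularity_Thm_weight_time}, Lemmas \ref{Lemma4b} and \ref{Lemma5b}), which is exactly the weighted analogue of the paper's proof of Lemma \ref{Lemma2_uniq_estimate} that the paper refers to. Your remarks on the compatibility $g(0)=0$ (satisfied in the application since $\delta A$ vanishes at $t=0$) and on reading the $\beta_i$-terms as norms of $\nabla v$ are consistent with the intended statement, so no gap.
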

\begin{proof}
	The proof is basically equivalent to the one of Lemma \ref{Lemma2_uniq_estimate}
\end{proof}

\noindent
By \eqref{Navier_Stokes_Lagrangian_coordinates_difference} and the previous Lemma, it follows that 
\begin{equation*}
\begin{split}
	&\|t^{\beta_1}\nabla \delta v\|_{L^{2r}_TL^{p_2}_x}+
	\|t^{\beta_2}\nabla \delta v\|_{L^{\infty}_TL^{p_2}_x}+
	\|t^{\gamma_1}\delta v\|_{L^{2r}_TL^{p_3}_x} + 
	\|t^{\gamma_2}\delta v\|_{L^{\infty}_TL^{p_3}_x} +
	\|t^{\gamma_3}\delta v\|_{L^{2r}_TL^{3p_1}_x} +\\ &+
	\|t^{\gamma_4}\delta v\|_{L^{\infty}_TL^{3p_1}_x} +
	\|t^{\alpha_1}(\partial_t\delta  v,\,\nabla^2\delta  v,\,\nabla\delta  P)\|_{L^{2r}_T L^{p_1}_x}
	\lesssim 
	\|t^{\alpha_1}(\delta f_1,\,\delta f_2,\,\delta f_3,\,\nabla\delta  g,\,\delta R)
	\|_{L^{2r}_T L^{p_1}_x},
\end{split}
\end{equation*}
where we have also used that $\|b_0\|_{L^\infty_x}=\|a_0\|_{L^\infty_x}\leq \eta$.
Furthermore, by the second equation of \eqref{Navier_Stokes_Lagrangian_coordinates_difference} we get 
$t^{\alpha_1}\delta \omega\in L^{\infty}_T L^{p_1^*}_x$, where $p_1^*=p_1N/(N-p_1)$ is the Lebesgue exponent in the critical Sobolev embedding
\begin{equation*}
	W^{1,p_1}_x\hookrightarrow L^{p_1^*}_x.
\end{equation*}
Moreover
\begin{equation*}
	\|t^{\alpha_1}\delta \omega\|_{L^{\infty}_T L^{p_1^*}_x}\lesssim 
	\|t^{\alpha_1}(\delta f_4,\,\delta f_5)\|_{L^1_T L^{p_1^*}_x}.
\end{equation*}

\noindent By Theorem \ref{Maximal_regularity_Thm_weight_time}, Lemma \ref{Lemma4b} and Lemma \ref{Lemma5b}, we get
\begin{equation*}
\begin{split}
	\|t^{\gamma_1}&\delta h\|_{L^{2r}_TL^{p_3}_x} + 
	\|t^{\gamma_2}\delta h\|_{L^{\infty}_TL^{p_3}_x} +
	\|t^{\gamma_3}\delta h\|_{L^{2r}_TL^{3p_1}_x} +
	\|t^{\gamma_4}\delta h\|_{L^{\infty}_TL^{3p_1}_x} + \\ &+
	\|t^{\beta_1}\nabla\delta  h\|_{L^{2r}_TL^{p_2}_x} +
	\|t^{\beta_2}\nabla\delta  h\|_{L^{\infty}_TL^{p_2}_x}	+
	\|t^{\alpha_1}\nabla^2 \delta h\|_{L^{2r}_TL^{p_1}_x}
	\lesssim 
	\|t^{\alpha_1}(\delta f_6,\,\delta f_7,\,\delta f_8)\|_{L^{2r}_T L^{p_1}_x}.
\end{split}
\end{equation*}

\noindent
Summarizing the last inequalities, we deduce that we have to control the right-hand side of
\begin{equation}\label{uniqueness_what_we_need_to_estimate}
\begin{split}
	\|(\delta v,\, \delta h, \nabla \delta &P)\|_{\Y_{r,T}} + 
	\|t^{\alpha_1}\partial_t\delta v\|_{L^{2r}_TL^{p_1}_x}+
	\|t^{\alpha_1}\delta \omega\|_{L^\infty_TL^{p_1^*}_x}
	\lesssim 
	\|t^{\alpha_1}(\delta f_1,\,\delta f_2,\,\nabla\delta f_3)
	\|_{L^{2r}_TL^{p_1}_x}+\\ &+
	\|t^{\alpha_1}(\delta f_6,\, \delta f_7,\,\delta f_8,\nabla \delta g,\,\delta R)\|_{L^{2r}_TL^{p_1}_x}+ 
	\|t^{\alpha_1}(\delta f_4,\,\delta f_5)\|_{L^1_T L^{p_1^*}_x}.
\end{split}
\end{equation}
Let us now estimate the right-hand side of \eqref{uniqueness_what_we_need_to_estimate} term by term.  
\begin{remark}
	In what follows, we will use repeatedly the following estimates:
	\begin{equation*}
	\begin{aligned}
		&t^{\alpha_1}< t^{\alpha_1^\ee}\quad \text{for}\quad t\leq T<1,\\
		&\|\nabla \delta A\|_{L^\infty_T L^{p_1}_x}\lesssim \|\nabla^2 \delta v\|_{L^1_TL^{p_1}_x}
		\lesssim \|t^{\alpha_1}\nabla^2 \delta v\|_{L^{2r}_T L^{p_1}_x},\\
		&\| \delta A\|_{L^\infty_T L^{p_2}_x}\lesssim \|\nabla \delta v\|_{L^1_TL^{p_2}_x}
		\lesssim \|t^{\beta_1}\nabla \delta v\|_{L^{2r}_T L^{p_2}_x}.
	\end{aligned}
	\end{equation*}
	Moreover, if in addition we consider $p_3=\infty$ we get also the following estimate
	\begin{equation*}
	\|t^{\frac{1}{2}-\frac{1}{2r}}h_1\|_{L^{2r}_T L^{\infty}_x} +
	\|t^{\frac{1}{2}} h_1\|_{L^{\infty}_T L^{\infty}_x}<\infty.
	\end{equation*}
\end{remark}

\noindent	
\textit{Bounds for $t^{\alpha_1}\delta f_1$}. From the definition of $\delta f_1$, we readily get
\begin{equation*}
	\|t^{\alpha_1}\delta f_1\|_{L^{2r}_TL^{p_1}_x}\lesssim 
	\|\Id-\,^tA_2\|_{L^{\infty}_TL^{\infty}_x}
	\|t^{\alpha_1}\nabla \delta P\|_{L^{2r}_TL^{p_1}_x} + 
	\|\delta A\|_{L^\infty_T L^{p_1^*}_x}\|t^{\alpha_1}\nabla P\|_{L^{2r}_TL^{N}_x}, 
\end{equation*}
where $p_1^*=p_1N/(N-p_1)$ is the Lebesgue exponent in the critical Sobolev embedding
\begin{equation*}
	W^{1,p_1}_x\hookrightarrow L^{p_1^*}_x.
\end{equation*}
Consequently, recalling \eqref{uniqueness_formula_deltaA} and observing that $L^{N}_x\hookrightarrow L^{p_1}_x\cap L^{q_1}_x$, we obtain
\begin{equation*}
	\|t^{\alpha_1}\delta f_1\|_{L^{2r}_TL^{p_1}_x}\lesssim 
	\|t^{\alpha_1^\ee}\nabla v_2\|_{L^{2r}_T L^{\infty}_x}
	\|t^{\alpha_1}\nabla \delta P\|_{L^{2r}_TL^{p_1}_x} + 
	\|\nabla^2\delta v\|_{L^1_TL^{p_1}_x}
	\|t^{\alpha_1}\nabla P\|_{L^{2r}_TL^{p_1}_x}^\theta
	\|t^{\alpha_1}\nabla P\|_{L^{2r}_TL^{q_1}_x}^{1-\theta},	
\end{equation*}
for $\theta$ determined by $1/N=\theta/p_1 +(1-\theta)/q_1$. We get
\begin{equation*}
\begin{split}
	\|t^{\alpha_1}\delta f_1\|_{L^{2r}_TL^{p_1}_x}
	&\lesssim 
	\|t^{\alpha_1^\ee}\nabla v_2\|_{L^{2r}_T L^{\infty}_x}
	\|t^{\alpha_1}\nabla \delta P\|_{L^{2r}_TL^{p_1}_x} + 
	\|\nabla^2 \delta v\|_{L^1_TL^{p_1}_x}
	\|t^{\alpha_1}\nabla P\|_{L^{2r}_TL^{p_1}_x}^\theta
	\|t^{\alpha_1^\ee}\nabla P\|_{L^{2r}_TL^{q_1}_x}^{1-\theta},\\
	&\lesssim 
	\|t^{\alpha_1^\ee}\nabla v_2\|_{L^{2r}_T L^{\infty}_x}\|t^{\alpha_1}\nabla \delta P\|_{L^{2r}_TL^{p_1}_x} + 
	\|t^{-\alpha_1}\|_{L^{(2r)'}_T}\| t^{\alpha_1} \nabla^2 \delta v\|_{L^{2r}_TL^{p_1}_x}\\
	&\lesssim 
	\|t^{\alpha_1^\ee}\nabla v_2\|_{L^{2r}_T L^{\infty}_x}\|t^{\alpha_1}\nabla \delta P\|_{L^{2r}_TL^{p_1}_x} + 
	T^{1-\alpha_1(2r)'}\| t^{\alpha_1} \nabla^2 \delta v\|_{L^{2r}_TL^{p_1}_x}
\end{split}
\end{equation*}
Thus there exists a continuous function $t\rightarrow \chi_1(t)$, which goes to $0$ for $t\rightarrow 0$ and
\begin{equation}\label{uniqueness2_delta_f1}
	\| t^{\alpha_1}\delta f_1\|_{L{2r}_TL^{p_1}_x}\lesssim \chi_1(T)\|(\delta v,\,
	\delta h,\,\nabla \delta P)\|_{\Y_{r,T}}
\end{equation}

\noindent	
\textit{Bounds for $t^{\alpha_1}\delta f_2$.}  From the definition of $\delta f_2$  and observing that 
$A_2\,\,^tA_2-A_1\,\,^tA_1=\delta A\,^t A_2 +  A_1\,^t \delta A$, we deduce
\begin{equation*}
\begin{split}
	\|&t^{\alpha_1}\delta f_2\|_{L^{2r}_TL^{p_1}_x}\lesssim
	\|\nabla(A_2\,^tA_2)\|_{L^{\infty}_T L^{N}_x} 
	\|t^{\alpha_1}\nabla \delta v\|_{L^{2r}_T L^{p_1^*}_x} + 
	\|A_2\,^tA_2\|_{L^{\infty}_TL^\infty_x} \|t^{\alpha_1}\nabla^2 \delta v\|_{L^{2r}_TL^{p_1}_x} +\\&+
	\|\delta A\|_{L^\infty_TL^{p_1^*}_x}  
	(\|\nabla A_2\|_{L^{\infty}_T L^{N}_x} + \|\nabla A_1 \|_{L^{\infty}_T L^{N}_x})
	\|t^{\alpha_1^\ee}\nabla v_1\|_{L^{2r}_TL^\infty_x} +\\&+
	\|\nabla \delta A\|_{L^\infty_TL^{p_1}_x}  
	(\| A_2\|_{L^{\infty}_T L^{\infty}_x} + \| A_1 \|_{L^{\infty}_T L^{\infty}_x})
	\|t^{\alpha_1^\ee}\nabla v_1\|_{L^{2r}_TL^\infty_x}+\\&+
	\| \delta A\|_{L^\infty_TL^{p_1^*}_x}  
	(\| A_2\|_{L^{\infty}_T L^{\infty}_x} + \| A_1 \|_{L^{\infty}_T L^{\infty}_x})
	\| t^{\alpha_1} \nabla^2 v_1\|_{L^{2r}_TL^{N}_x}
\end{split}
\end{equation*}
Again by $L^{N}_x\hookrightarrow L^{p_1}_x\cap L^{q_1}_x$ and the critical Sobolev embedding, there exists 
a continuous function $t\rightarrow \chi_2(t)$ which goes to $0$ for $t\rightarrow 0$, such that 
\begin{equation}\label{uniqueness2_delta_f2}
	\|t^{\alpha_1}\delta f_2\|_{L^{2r}_TL^{p_1}_x}\lesssim 
	 \chi_2(T)\|(\delta v,\, \delta  h,\,\nabla \delta P)\|_{\Y_{r,T}}.
\end{equation} 

\noindent
\textit{Bounds for $t^{\alpha_1} \delta f_3$} From the definition of $\delta f_3$ we get
\begin{equation*}
\begin{split}
	\|&t^{\alpha_1} \delta f_3\|_{L^{2r}_T L^{p_1}_x}\lesssim
	\|\nabla \delta A \|_{L^\infty_T L^{p_1}_x} 
	\|t^{\frac{1}{2}-\frac{1}{2r}}h_1\|_{L^{2r}_T L^{\infty}_x} 
	\|t^{\frac{1}{2}} h_1\|_{L^{\infty}_T L^{\infty}_x}+ \\&+
	\|\delta A \|_{L^\infty_T L^{p_1^*}_x}\Big( 
	\|t^{\beta_1}\nabla h_1\|_{L^{2r}_T L^{p_2}_x}
	\|t^{\gamma_2}h_1\|_{L^{\infty}_T L^{p_3}_x}+
	\|t^{\beta_1^\ee}\nabla h_1\|_{L^{2r}_T L^{q_2}_x}
	\|t^{\gamma_2^\ee}h_1\|_{L^{\infty}_T L^{q_3}_x}\Big)+\\&+
	\|\nabla A_1\|_{L^\infty_TL^N_x}
	\|t^{\beta_1}\nabla \delta h\|_{L^{2r}_TL^{p_2}_x} 
	\|t^{\gamma_2}h_2\|_{L^\infty_T L^{p_3}_x}+
	\|\nabla A_1\|_{L^\infty_TL^N_x}
	\|t^{\gamma_1} \delta h\|_{L^{2r}_TL^{p_2}_x} 
	\|t^{\beta_2}h_2\|_{L^\infty_T L^{p_3}_x}+
	\\&+
	\|A_1\|_{L^\infty_TL^\infty_x}
	\|t^{\beta_1} \nabla \delta h\|_{L^{2r}_TL^{p_2}_x} 
	\|t^{\gamma_2}h_2\|_{L^\infty_T L^{p_3}_x} +
	\| A_1\|_{L^\infty_TL^\infty_x}
	\|t^{\gamma_1} \delta h\|_{L^{2r}_TL^{p_3}_x} 
	\|t^{\beta_2}\nabla h_2\|_{L^\infty_T L^{p_2}_x} +\\&+	
	\|\nabla A_1\|_{L^\infty_TL^N_x}
	\|t^{\beta_1}\nabla \delta h\|_{L^{2r}_TL^{p_2}_x} 
	\|t^{\gamma_2}h_1\|_{L^\infty_T L^{p_3}_x}+
	\|\nabla A_1\|_{L^\infty_TL^N_x}
	\|t^{\gamma_1} \delta h\|_{L^{2r}_TL^{p_2}_x} 
	\|t^{\beta_2}h_1\|_{L^\infty_T L^{p_3}_x}+
	\\&+
	\|A_1\|_{L^\infty_TL^\infty_x}
	\|t^{\beta_1} \nabla \delta h\|_{L^{2r}_TL^{p_2}_x} 
	\|t^{\gamma_2}h_1\|_{L^\infty_T L^{p_3}_x} +
	\| A_1\|_{L^\infty_TL^\infty_x}
	\|t^{\gamma_1} \delta h\|_{L^{2r}_TL^{p_3}_x} 
	\|t^{\beta_2}\nabla h_1\|_{L^\infty_T L^{p_2}_x}\end{split}
\end{equation*}
Hence, arguing exactly as for \eqref{uniqueness2_delta_f1} and \eqref{uniqueness2_delta_f2}, there exists $\chi_3(t)$ such that
\begin{equation}\label{uniqueness2_delta_f3}
	\|t^{\alpha_1}\nabla \delta f_3\|_{L^{2r}_T L^{p_1}_x}\lesssim 
	\chi_3(T)\|(\delta v,\, \delta h,\,\nabla \delta P)\|_{\Y_{r,T}}.
\end{equation}

\noindent
\textit{Bounds for $t^{\alpha_1}\delta f_6$} From the definition of $\delta f_6$ we get
\begin{equation*}
\begin{split}
	\|t^{\alpha_1} &\delta f_6\|_{L^{2r}_TL^{p_1}_x}\lesssim 
	\|t^{\beta_1}\delta A \|_{L^\infty_TL^{p_2}_x}
	\|t^{\alpha_1^\ee}\nabla v_2\|_{L^{2r}_TL^\infty_x}
	\|t^{\gamma_2}h_2\|_{L^\infty_TL^{p_3}_x} +\\& +
	\|A_1\|_{L^\infty_TL^\infty_x} 
	\|t^{\beta_1}\nabla \delta v\|_{L^{2r}_TL^{p_2}_x}
	\|t^{\gamma_2} h_2\|_{L^{\infty}_TL^{p_3}_x}
	+
	\|A_1\|_{L^\infty_TL^\infty_x} 
	\|t^{\beta_2}\nabla  v_1\|_{L^{2r}_TL^{p_2}_x}
	\|t^{\gamma_3}\delta  h\|_{L^{\infty}_TL^{p_3}_x},
\end{split}
\end{equation*}
Thus
\begin{equation}\label{uniqueness2_delta_f6}
	\|t^{\alpha_1} \delta f_6\|_{L^{2r}_TL^{p_1}_x}\lesssim
	\chi_6(T)\|(\delta v,\, \delta h,\,\nabla \delta P)\|_{\Y_{r,T}},
\end{equation}
for an opportune continuous function $\chi_6(t)$ which goes to $0$ when $t\rightarrow 0$.

\noindent	
\textit{Bounds for $t^{\alpha_1}\delta f_7$}. From the definition of $\delta f_7$ we have
\begin{equation*}
\begin{split}
	\|t^{\alpha_1}\delta f_7  \|_{L^{2r}_TL^{p_1}_x}\lesssim 
	&\|\nabla A_2\|_{L^\infty_T L^N_x}
	\|t^{\alpha_1} \nabla \delta h\|_{L^{2r}_T L^{p_1^*}_x}+
	\|A_2\|_{L^\infty_T L^\infty_x}
	\|t^{\alpha_1} \nabla^2 \delta h \|_{L^{2r}_T L^{p_1}_x}+ \\&+
	\| \delta A \|_{L^\infty_TL^{p_1}_x} \|t^{\alpha_1^\ee}\nabla h_1\|_{L^{2r}_TL^\infty_x} + 
	\|\delta A\|_{L^\infty_T L^{p_1^*}_x}\|t^{\alpha_1}\nabla^2 h_1\|_{L^{2r}_T L^N_x},
\end{split}
\end{equation*}
which yields that there exists a continuous function $\chi_7(t)\geq 0$, with $\chi_7(0)=0$, such that
\begin{equation}\label{uniqueness2_delta_f7}
	\|t^{\alpha_1}\delta f_7  \|_{L^{2r}_TL^{p_1}_x}\lesssim \chi_7 (T)
	\|(\delta v,\, \delta h,\,\nabla \delta P)\|_{\Y_{r,T}}.
\end{equation}

\noindent	
\textit{Bounds for $t^{\alpha_1}\delta f_8$}. From the definition of $\delta f_8$ we get 
\begin{equation*}
\begin{split}
	\|t^{\alpha_1}&\delta f_9\|_{L^{2r}_TL^{p_1}_x}\lesssim  
	\|t^{\beta_1}\delta h\|_{L^{2r}_TL^{p_3}_x}
	\|t^{\beta_2}\nabla h_2 \|_{L^\infty_T L^{p_2}_x}
	\|\omega_2\|_{L^\infty_TL^\infty_x} + \|t^{\beta_2} h_1\|_{L^{\infty}_TL^{p_3}_x}
	\|t^{\beta_1}\nabla \delta h \|_{L^{2r}_T L^{p_2}_x} 
	{\scriptstyle \times}
	\\&{\scriptstyle \times}
	\|\omega_2\|_{L^\infty_TL^\infty_x} +
	\|t^{\beta_2} h_1\|_{L^{\infty}_TL^{p_3}_x}
	\|t^{\beta_2}\nabla h_1 \|_{L^{\infty}_T L^{p_2}_x}\|t^{-\frac{1}{2}-\frac{1}{2r}}
	\|_{L^{\frac{2r}{r-1}-\bar{\ee}}_T}
	\|t^{\frac{1}{2}}\delta \omega\|_{L^\infty_TL^\infty_x} 	 
\end{split}
\end{equation*}
For $\bar{\ee}$ small enough, that is,
\begin{equation}\label{uniqueness2_delta_f8}
	\|t^{\alpha_1}\delta f_8  \|_{L^{2r}_TL^{p_1}_x}\lesssim \chi_9 (T)\Big\{
	\|(\delta v,\, \delta h,\,\nabla \delta P)\|_{\Y_{r,T}} +
	\|t^{\frac{1}{2}}\delta \omega\|_{L^\infty_TL^\infty_x}
	\Big\}.
\end{equation}

\noindent	
\textit{Bounds for $t^{\alpha_1}\nabla \delta g $}. By the definition of $\delta g$ we get
\begin{equation*}
\begin{split}
	\|t^{\alpha_1}\nabla \delta g\|_{L^{2r}_TL^{p_1}_x} \lesssim 
	\|\nabla A_2&\|_{L^\infty_TL^N_x}\|t^{\alpha_1}\nabla \delta v\|_{L^{2r}_T L^{p_1^*}_x} + 
	\|\Id-A_2\|_{L^\infty_TL^\infty_x} \|t^{\alpha_1}\nabla^2 \delta v\|_{L^{2r}_TL^{p_1}_x} + \\&+
	\|\nabla \delta A\|_{L^{\infty}_TL^{p_1}_x}\|t^{\alpha_1^\ee}\nabla v_1\|_{L^{2r}_TL^\infty_x} + 
	\|\delta A\|_{L^\infty_T L^{p_1^*}_x}\|t^{\alpha_1}\nabla^2 v_1\|_{L^{2r}_T L^N_x}.
\end{split}
\end{equation*}
We deduce that there exists a continuous function $\chi_g(t)$ with $\chi_g(0)=0$ such that
\begin{equation}\label{uniqueness2_delta_g}
	\|t^{\alpha_1}\nabla \delta g\|_{L^{2r}_TL^{p_1}_x} \lesssim 
	\chi_g(T)\|(\delta v,\, \delta h,\,\nabla \delta P)\|_{\Y_{r,T}}.
\end{equation} 

\noindent
\textit{Bounds for $t^{\alpha_1}\delta R $}. From the definition of $\delta R$ we have
\begin{equation*}
\begin{split}
	\|t^{\alpha_1}\delta R\|_{L^{2r}_TL^{p_1}_x}\lesssim 
	\|t^{\beta_2}\nabla v_2&\|_{L^\infty_TL^{p_2}_x}\|t^{\gamma_1}\delta v\|_{L^{2r}_TL^{p_3}_x}+
	\|\Id-A_2\|_{L^{\infty}_T L^\infty_x}\|t^{\alpha_1}\partial_t \delta v\|_{L^{2r}_TL^{p_1}_x}+\\&+
	\|t^{\beta_1}\nabla\delta v\|_{L^{2r}_TL^{p_2}_x}\|t^{\gamma_2}v_1\|_{L^\infty_TL^{p_3}_x}
	+\|\delta A \|_{L^{2r}_T L^{p_1^*}_x}\|t^{\alpha_1} \partial_t v_1 \|_{L^{2r}_TL^N_x},
\end{split}
\end{equation*}
Thus, there exists  a continuous function $\chi_R(t)$ with $\chi_R(0)=0$ such that
\begin{equation}\label{uniqueness2_delta_R}
	\|t^{\alpha_1}\delta R\|_{L^{2r}_TL^{p_1}_x}\lesssim  
	\chi_R(T)\Big\{\|(\delta v,\, \delta h,\,\nabla \delta P)\|_{\Y_{r,T}}+ 
	\|t^{\alpha_1}\partial_t\delta v\|_{L^{2r}_TL^{p_1}_x}\Big\}.
\end{equation}

\noindent	
\textit{Bounds for $t^{\alpha_1} \delta f_4 $}. From the definition of $\delta f_4$ it follows
\begin{equation*}
\begin{split}
	\|&t^{\alpha_1} 	\delta f_4 		\|_{L^{1}     _T L^{p_1^*} 	_x}			\lesssim 
	\| t^{\beta _1}		\nabla \delta	h		\|_{L^{2r}    _T L^{p_2} 	_x}
	\| t^{\gamma_2}			(h_1,\,h_2)			\|_{L^{\infty}_T L^{p_3} 	_x} 
	\| 					(\omega_1,\,\omega_2)	\|_{L^\infty  _T L^\infty	_x}	+
	\| t^{\gamma_1}				\delta	h		\|_{L^{2r}    _T L^{p_2} 	_x}		{\scriptstyle\times}
																				\\&	{\scriptstyle \times}
	\| t^{\gamma_2}		\nabla (h_1,\,h_2)		\|_{L^{\infty}_T L^{p_3} 	_x} 
	\| 					(\omega_1,\,\omega_2)	\|_{L^\infty  _T L^\infty	_x}	+
	\| t^{\beta _1}			 \delta	h			\|_{L^{2r}	  _T L^{p_2} 	_x}
	\| t^{\gamma_2}			(h_1,\,h_2)			\|_{L^{\infty}_T L^{p_3} 	_x} 
	\| 				\nabla(\omega_1,\,\omega_2)	\|_{L^{2r}    _T L^\infty	_x}	+\\&+
	\| t^{\frac{1}{2}}				h_1			\|_{L^{\infty}_T L^{\infty} _x}^2
	\|t^{\alpha_1} 			\delta \omega		\|_{L^{\infty}_T L^{p_1^*}	_x} 
\end{split}
\end{equation*}
Therefore, we obtain
\begin{equation}\label{uniqueness2_delta_f4}
	\| t^{\alpha_1}\delta 	f_4 \|_{L^{1}_T  L^{p_1^*}_x}\lesssim \chi_4 (T)\Big\{
	\|(\delta v,\, \delta h,\,\nabla \delta P)\|_{\Y_{r,T}} +\|
	t^{\alpha_1}\delta \omega\|_{L^\infty_TL^{p_1^*}_x}
	\Big\}.
\end{equation}

\noindent	
\textit{Bounds for $t^{\gamma_1}\delta f_5 $}. By definition
\begin{equation*}
\begin{split}
	\|t^{\gamma_1}\delta f_5\|_{L^{1}_T L^{p_1^*}_x}&\lesssim 
	\|t^{\alpha_1}\delta A\|_{L^\infty_T L^{p_1^*}_x}\|\nabla h_2\|_{L^1_TL^\infty_x}+
	\|A_1\|_{L^2r_T L^\infty_x} \|t^{\alpha_1}\nabla \delta h\|_{L^{2r}_T L^{p_1^*}_x}\\
	&\lesssim 
	\|t^{\alpha_1} \nabla^2 \delta v\|_{L^{2r}_T L^{p_1}_x}\|\nabla h_2\|_{L^1_TL^\infty_x}+
	\|A_1\|_{L^2r_T L^\infty_x}\|t^{\alpha_1}\nabla^2 \delta h\|_{L^{2r}_T L^{p_1}_x}
\end{split}
\end{equation*}
hence
\begin{equation}\label{uniqueness2_delta_f5}
	\|t^{\alpha_1}\delta f_5\|_{L^{1}_T L^{p_1^*}_x}\lesssim \chi_5 (T)
	\|(\delta v,\, \delta h,\,\nabla \delta P)\|_{\Y_{r,T}} .
\end{equation}
with $\chi_5$ as the previous functions.

\noindent
Summarizing points \eqref{uniqueness2_delta_f1}, \eqref{uniqueness2_delta_f2}, 
\eqref{uniqueness2_delta_f3}, \eqref{uniqueness2_delta_f6}, \eqref{uniqueness2_delta_f7}, 
\eqref{uniqueness2_delta_f8},  \eqref{uniqueness2_delta_g}, \eqref{uniqueness2_delta_R}, 
\eqref{uniqueness2_delta_f4} and \eqref{uniqueness2_delta_f5}, we finally obtain
\begin{equation*}
\begin{split}
	|(\delta v,\, \delta h, \nabla \delta P)&\|_{\Y_{r,T}} + 
	\|t^{\alpha_1}\partial_t\delta v\|_{L^{2r}_TL^{p_1}_x}+
	\|t^{\alpha_1}\delta \omega\|_{L^\infty_TL^{p_1^*}_x}\lesssim \\&
	\chi(T) \Big\{ |(\delta v,\, \delta h, \nabla \delta P)\|_{\Y_{r,T}} + 
	\|t^{\alpha_1}\delta \omega\|_{L^\infty_TL^{p_1^*}_x}+
	\|t^{\alpha_1}\partial_t\delta v\|_{L^{2r}_TL^{p_1}_x} \Big\}
\end{split}
\end{equation*}
where $\chi$ stands for $\sum_i\chi_i$. Thus, for $T$ sufficiently small, the left-hand side has to be $0$.
This proves the uniqueness at least in small time interval. Then uniqueness part can be completed by a bootstrap method.  This concludes the proof of Theorem \ref{Main_Theorem_3}.
\appendix

\section{Estimates}

\begin{lemma}\label{Lemma4}
	Let the operator $\Cc$ be defined as in Lemma \ref{Lemma3}. 
	Consider $T\in(0,\infty]$, $\ee\geq 0$ small enough, $1<\bar{r}<\infty$, and moreover suppose that 
	$q,\,\tilde{q}$ satisfy $N/2<q<N/(1-\ee)$, 
	$\,\max\{N, q\}<\tilde{q}\leq \infty$. Let $\alpha^\ee$, $\gamma^\ee$ and 
	$\bar{\gamma}^\ee$	be defined by
	\begin{equation*}
	\alpha^\ee:=\frac{1}{2}\big(3-\frac{N}{q}-\ee \big) -\frac{1}{\bar{r}},\quad	
	\gamma^\ee:=\frac{1}{2}\big(1-\frac{N}{\tilde{q}}-\ee \big)	-\frac{1}{\bar{r}}\quad\text{and}\quad
	\bar{\gamma}^\ee:=\frac{1}{2}\big(1-\frac{N}{\tilde{q}}-\ee \big).
	\end{equation*}
	If $t^{\alpha^\ee} f(t)$ belongs to $L^{\bar{r}}(0,T;L^{q}_x)$ then $t^{\gamma^\ee}\Cc f(t)$ 
	belongs to $L^{\bar{r}}(0,T;L^{\tilde{q}}_x)$. 
	Furthermore there exist $C_\ee=C_\ee(q, \tilde{q}, \bar{r})>0$ such that
	\begin{equation}\label{inquality_Lemma4}
		\|t^{\gamma^\ee} \Cc f(t) \|_{L^{\bar{r}}(0,T; L^{\tilde{q}}_x)}
		\leq 		
		C_\ee \|t^{\alpha^\ee}  f(t) \|_{L^{\tilde{r}}(0,T; L^{q}_x)}.
	\end{equation}
	Moreover, if $\bar{r}>2$ and $N\bar{r}/(2\bar{r}-2)<q$,  
	then $t^{\bar{\gamma}^\ee}\Cc f(t)$ belongs to 
	$L^{\infty}(0,T;L^{\tilde{q}}_x)$ and there exists a positive constant 
	$\bar{C}_\ee = \bar{C}_\ee (q, \tilde{q}, \bar{r})$ 
	such that
	\begin{equation}\label{inquality2_Lemma4}
		\|t^{\bar{\gamma}^\ee} \Cc f(t) \|_{L^{\infty}(0,T; L^{\tilde{q}}_x)}\leq 
		\bar{C}_\ee \|t^{\alpha^\ee}  f(t) \|_{L^{\bar{r}}(0,T; L^{q}_x)}.
	\end{equation} 
\end{lemma}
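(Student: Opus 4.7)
The plan is to reduce both inequalities to convolution estimates on the half-line, the main ingredient being the pointwise bound $\|K(t,\cdot)\|_{L^\lambda_x}\lesssim t^{-N/(2\lambda')}$ recalled in \eqref{estimate_heatkernel}. First I would apply Young's inequality in space with $1/\tilde q_1+1/q=1/\tilde q+1$ (so $1/\tilde q_1'=1/q-1/\tilde q$) to obtain, for each $t\in(0,T)$,
\[
\|\Cc f(t)\|_{L^{\tilde q}_x}\lesssim \int_0^t (t-s)^{-\sigma}\|f(s)\|_{L^q_x}\,\dd s,\qquad \sigma:=\tfrac{N}{2}\bigl(\tfrac{1}{q}-\tfrac{1}{\tilde q}\bigr),
\]
and record the crucial arithmetic identity $\alpha^\ee-\gamma^\ee=1-\sigma$; this is what will force the powers of $t$ to balance in the sequel. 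The hypotheses immediately give $0\leq\sigma<1$ (from $q>N/2$ and $\tilde q\geq q$).

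For \eqref{inquality_Lemma4} the plan is to insert the trivial factor $s^{-\alpha^\ee}s^{\alpha^\ee}$ inside the convolution and then change variable $s=t\tau$; the identity above makes the exterior explicit factor of $t$ disappear, leaving
\[
t^{\gamma^\ee}\|\Cc f(t)\|_{L^{\tilde q}_x}\lesssim \int_0^1(1-\tau)^{-\sigma}\tau^{-\alpha^\ee}(t\tau)^{\alpha^\ee}\|f(t\tau)\|_{L^q_x}\,\dd\tau.
\]
Minkowski's inequality in $L^{\bar r}(0,T;\dd t)$ followed by the substitution $t'=t\tau$ then reduces the estimate to the finiteness of $\int_0^1(1-\tau)^{-\sigma}\tau^{-\alpha^\ee-1/\bar r}\,\dd\tau$. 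The endpoint $\tau=1$ is integrable because $\sigma<1$ (hypothesis $q>N/2$), while $\tau=0$ requires $\alpha^\ee+1/\bar r<1$, which after rearrangement is exactly $q<N/(1-\ee)$.

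For \eqref{inquality2_Lemma4} the natural approach is H\"older in the time variable with exponents $\bar r'$ and $\bar r$: split the integrand as $(t-s)^{-\sigma}s^{-\alpha^\ee}\cdot s^{\alpha^\ee}\|f(s)\|_{L^q_x}$ to obtain
\[
\|\Cc f(t)\|_{L^{\tilde q}_x}\lesssim\Big(\int_0^t(t-s)^{-\sigma\bar r'}s^{-\alpha^\ee \bar r'}\dd s\Big)^{1/\bar r'}\|s^{\alpha^\ee}f\|_{L^{\bar r}(0,t;L^q_x)}.
\]
The scaling $s=t\tau$ turns the inner integral into $t^{1-(\sigma+\alpha^\ee)\bar r'}B(1-\sigma\bar r',\,1-\alpha^\ee\bar r')$. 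A direct arithmetic check gives $\bigl(1-(\sigma+\alpha^\ee)\bar r'\bigr)/\bar r'=1-1/\bar r-\sigma-\alpha^\ee=-\bar\gamma^\ee$, so that multiplying by $t^{\bar\gamma^\ee}$ and taking $\sup_{t\in(0,T)}$ delivers the claim. The Beta integral is finite iff $\sigma\bar r'<1$ and $\alpha^\ee\bar r'<1$: the first rearranges to the extra hypothesis $q>N\bar r/(2\bar r-2)$, the second again to $q<N/(1-\ee)$.

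The only genuinely delicate point is bookkeeping: one must make sure the identity $\alpha^\ee-\gamma^\ee=1-\sigma$ is invoked exactly at the step where the exterior factor of $t$ must be absorbed, otherwise the resulting inequality acquires a spurious power of $t$ and fails at $t\to 0$ or $t\to\infty$. Once that is done, the three assumptions $q>N/2$, $q<N/(1-\ee)$ and (for the second part) $q>N\bar r/(2\bar r-2)$ correspond one-for-one to the integrability conditions at $\tau=1$ and $\tau=0$. No modification is needed when $\tilde q=\infty$, since then $\tilde q_1=q'$ and every estimate persists.
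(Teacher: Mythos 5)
Your proposal is correct and follows essentially the same route as the paper's proof: the kernel decay \eqref{estimate_heatkernel} (equivalently Young's inequality in space), the scaling $s=t\tau$ combined with the identity $\gamma^\ee-\alpha^\ee+1=\frac{N}{2}\big(\frac1q-\frac1{\tilde q}\big)$, Minkowski in time for \eqref{inquality_Lemma4}, and H\"older with exponent $\bar r'$ plus a Beta-type integral for \eqref{inquality2_Lemma4}, with the same matching of $q>N/2$, $q<N/(1-\ee)$ and $q>N\bar r/(2\bar r-2)$ to the integrability conditions at $\tau=1$ and $\tau=0$. The only cosmetic difference is that your hypothesis $q>N\bar r/(2\bar r-2)$ is sufficient for (rather than equivalent to) $\frac{N}{2}\big(\frac1q-\frac1{\tilde q}\big)\bar r'<1$, exactly as in the paper.
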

\begin{proof}
	Recalling \eqref{estimate_heatkernel} we have 
	\begin{equation*}
		\|t^{\gamma^\ee} \Cc f(t) \|_{L^{\tilde{q}}_x}\lesssim 
		\int_0^t \frac{1}
		{|t-s|^{\frac{N}{2}\big(\frac{1}{q}-\frac{1}{\tilde{q}} \big)}}\|f(s)\|_{L^{q}_x}\dd s.
	\end{equation*}		
	Defining $F(s):=\|s^{\alpha^\ee}f(s)\|_{L^{q}_x}$, by a change of variable $s=t\tau$ and 
	because $  \gamma^\ee -\alpha^\ee +1 = (1/q-1/\tilde{q}) N/2$, we get that
	\begin{equation*}
		t^{\gamma^\ee}\|\Cc f(t)\|_{L^{\tilde{q}}_x}\lesssim 
		\int_0^1 \frac{1}{|1-\tau|^{\frac{N}{2}\big(\frac{1}{q}-\frac{1}{\tilde{q}} \big)
		}}\tau^{-\alpha^\ee}F(t\tau)\dd \tau.  
	\end{equation*}
	Applying Minkowski inequality, we deduce that
	\begin{equation*}
		\|t^{\gamma^\ee} \Cc f(t) \|_{L^{\bar{r}}_t L^{\tilde{q}}_x}\lesssim 
		\int_0^1 \frac{1}
		{|1-\tau|^{\frac{N}{2}\big(\frac{1}{q}-\frac{1}{\tilde{q}}\big) }}\tau^{-\alpha^\ee}
		\Big( \int_0^{T\tau}F(t\tau)^{\bar{r}}\dd t \Big)^{\frac{1}{\bar{r}}}\dd \tau,
	\end{equation*}
	which yields
	\begin{equation*}
		\|t^{\gamma^\ee} \Cc f(t) \|_{L^{\bar{r}}_t L^{\tilde{q}}_x}\lesssim 
		\int_0^1 \frac{1}
		{|1-\tau|^{\frac{N}{2}\big(\frac{1}{q}-\frac{1}{\tilde{q}}\big)}}
		\tau^{-\alpha^\ee-\frac{1}{\bar{r}}}\dd \tau
		\|t^{\alpha_1^\ee}  f(t) \|_{L^{r_1}(0,T; L^{q_1}_x)}.
	\end{equation*}
	Thus, because $(1/q-1/\tilde{q})N/2<1$ and $0<1/2(3-N/q -\ee)<1$, we obtain inequality 
	\eqref{inquality_Lemma4}.

	On the other hand, observing that
	\begin{equation*}
		t^{\bar{\gamma}^\ee}\|\Cc f(t)\|_{L^{\tilde{q}}_x}\lesssim 
		\int_0^t \frac{t^{\bar{\gamma}^\ee} s^{-\alpha^\ee}}
		{|t-s|^{\frac{N}{2}\big(\frac{1}{q}-\frac{1}{\tilde{q}}\big)}} 
		F(s)\dd s\lesssim
		\Big(
		\int_0^t
		\frac{t^{\bar{\gamma}^\ee \bar{r}'}s^{-\alpha^\ee\bar{r}'}}
		{|t-s|^{\frac{N}{2}\big(\frac{1}{q}-\frac{1}{\tilde{q}}\big)\bar{r}'}}
		\dd s\Big)^{\frac{1}{\bar{r}'}}
		\|t^{\alpha^\ee}  f(t) \|_{L^{\bar{r}}(0,T; L^{q}_x)} . 
	\end{equation*}
	By a change of variable $s=t\tau$ and because $\bar{\gamma}^\ee \bar{r}'-\alpha^\ee \bar{r}'-
	(1/q-1/\tilde{q})N\bar{r}'/2+1=0$, 		
	we obtain
	\begin{equation*}
		t^{\gamma^\ee_2}\|\Cc f(t)\|_{L^{q_2}_x}\lesssim
		\Bigg(
		\int_0^1 \frac{\tau^{-\alpha^\ee\bar{r}'}}
		{|1-\tau|^{\frac{N}{2}\big(\frac{1}{q}-\frac{1}{\tilde{q}}\big)\bar{r}'}}
		\dd \tau \Bigg)^{\frac{1}{\bar{r}'}}
		\|t^{\alpha_1^\ee}  f(t) \|_{L^{\bar{r}}(0,T; L^{q_1}_x)}.
	\end{equation*}
	Since $q>N\bar{r}/(2\bar{r}-2)$ yields $(1/q-1/\tilde{q})N/2<1/\bar{r}'$ and $q<N/(1-\ee)$ implies 
	$\alpha^\ee \bar{r}'<1$, we obtain \eqref{inquality2_Lemma4}, which completes the proof of the Lemma.

\end{proof}

\begin{lemma}\label{Lemma5}
	Let the operators $\Bb$ be defined as in Lemma \ref{Lemma2}. 
	Consider $T\in (0,\infty]$, $\ee\geq 0$ small enough, $1<\bar{r}<\infty$, and moreover suppose that 
	$q,\,\bar{q}$ satisfy $N/2<q<N/(1-\ee)$ and $q\leq \bar{q}$ such that $1/q-1/\bar{q}<1/N$. Let 
	$\alpha^\ee$ be defined as in Lemma \ref{Lemma4} and $\beta^\ee$ and 	$\bar{\beta}^\ee$ be defined by
	\begin{equation*}
	\bar{\beta}^\ee:=\frac{1}{2}\big(2-\frac{N}{\bar{q}}-\ee \big)	\quad\text{and}\quad
	\beta^\ee:=\frac{1}{2}\big(1-\frac{N}{\bar{q}}-\ee \big)-\frac{1}{\bar{r}}.
	\end{equation*}
	If $t^{\alpha^\ee} f(t)$ belongs to $L^{\bar{r}}(0,T; L^{q}_x)$ then 
	$t^{\beta^\ee}\mathcal{B}f(t) $ 
	belongs to $ L^{\bar{r}}(0,T;L^{\bar{q}}_x)$ and there exists a positive constant
	$C_\ee=C_\ee (q, \bar{q}, \bar{r})$ such that
	\begin{equation}\label{inquality_Lemma5}
		\|t^{\beta^\ee} \mathcal{B} f(t) \|_{L^{\bar{r}}(0,T; L^{\bar{q}}_x)}\leq 
		\bar{C}_\ee \|t^{\alpha^\ee}  f(t) \|_{L^{\bar{r}}(0,T; L^{q}_x)}.
	\end{equation}
	Moreover, if $\bar{r}>2$, $N\bar{r}/(2\bar{r}-2)<q$ and $\bar{q}<Nr$ 
	then $t^{\bar{\beta}^\ee}\mathcal{B}f(t)$ belongs to 
	$L^{\infty}(0,T;L^{\bar{q}}_x)$ and there exists a positive constant 
	$\bar{C}_\ee = \bar{C}_\ee (q, \bar{q}, \bar{r})$ 
	such that
	\begin{equation}\label{inquality2_Lemma5}
		\|t^{\bar{\beta}^\ee} \mathcal{B} f(t) \|_{L^{\infty}(0,T; L^{\bar{q}}_x)}\leq 
		C_\ee \|t^{\alpha^\ee}  f(t) \|_{L^{\bar{r}}(0,T; L^{q}_x)}.
	\end{equation}
\end{lemma}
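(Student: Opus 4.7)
The strategy is a direct parallel of the proof of Lemma \ref{Lemma4}, with the gradient heat-kernel bound
\[
\|\nabla K(t,\cdot)\|_{L^{\lambda}_{x}} \lesssim t^{-\frac{1}{2}-\frac{N}{2\lambda'}}, \qquad 1\le \lambda\le\infty,
\]
replacing the pointwise kernel estimate \eqref{estimate_heatkernel}. The first step is to apply Young's convolution inequality inside the integral defining $\Bb f(t)$, choosing the Young exponent $\tilde q$ via $1/\tilde q = 1 - (1/q - 1/\bar q)$, so that
\[
\|\Bb f(t)\|_{L^{\bar q}_{x}} \;\lesssim\; \int_{0}^{t}|t-s|^{-\frac{1}{2}-\frac{N}{2}\left(\frac{1}{q}-\frac{1}{\bar q}\right)}\,\|f(s)\|_{L^{q}_{x}}\,\dd s.
\]
The assumption $1/q-1/\bar q < 1/N$ guarantees that this singular exponent is strictly less than $1$, so the integral is well-defined, exactly as in the proof of Lemma \ref{Lemma4}.

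Next, setting $F(s):=s^{\alpha^{\ee}}\|f(s)\|_{L^{q}_{x}}$ and making the change of variable $s=t\tau$, I would verify the scaling identity
\[
\beta^{\ee}-\alpha^{\ee}+1-\tfrac{1}{2}-\tfrac{N}{2}\left(\tfrac{1}{q}-\tfrac{1}{\bar q}\right)=0,
\]
which turns the bound above into
\[
t^{\beta^{\ee}}\|\Bb f(t)\|_{L^{\bar q}_{x}} \;\lesssim\; \int_{0}^{1}(1-\tau)^{-\frac{1}{2}-\frac{N}{2}\left(\frac{1}{q}-\frac{1}{\bar q}\right)}\,\tau^{-\alpha^{\ee}}\,F(t\tau)\,\dd\tau .
\]
Applying Minkowski's inequality in $\tau$ and performing the inner change of variable $t'=t\tau$ (exactly as in the derivation leading to \eqref{inquality_Lemma4}), one reduces \eqref{inquality_Lemma5} to the finiteness of
\[
\int_{0}^{1}(1-\tau)^{-\frac{1}{2}-\frac{N}{2}\left(\frac{1}{q}-\frac{1}{\bar q}\right)}\,\tau^{-\alpha^{\ee}-\frac{1}{\bar r}}\,\dd\tau,
\]
which converges thanks to $1/q-1/\bar q<1/N$ (endpoint $\tau=1$) and to $\alpha^{\ee}+1/\bar r<1$, itself a consequence of $q<N/(1-\ee)$ (endpoint $\tau=0$).

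For the $L^{\infty}_{t}$ estimate \eqref{inquality2_Lemma5}, the strategy mirrors the corresponding endpoint argument in Lemma \ref{Lemma4}: I would apply H\"older's inequality in the variable $s$ with exponents $\bar r$ and $\bar r'$, reducing matters to controlling
\[
\Big(\int_{0}^{t}t^{\bar\beta^{\ee}\bar r'}\,|t-s|^{-\left(\frac{1}{2}+\frac{N}{2}\left(\frac{1}{q}-\frac{1}{\bar q}\right)\right)\bar r'}\,s^{-\alpha^{\ee}\bar r'}\,\dd s\Big)^{1/\bar r'}.
\]
A change of variable $s=t\tau$ collapses the $t$-dependence (the scaling exponents balance out, as in the derivation of \eqref{inquality2_Lemma4}), and one is left with a dimensionless Beta-type integral. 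Its convergence at $\tau=1$ requires $\bigl(1/2+N(1/q-1/\bar q)/2\bigr)\bar r'<1$, which is equivalent to $q>N\bar r/(2\bar r-2)$ together with $\bar q<Nr$, while convergence at $\tau=0$ follows again from $q<N/(1-\ee)$ via $\alpha^{\ee}\bar r'<1$. The main obstacle in the argument is precisely the bookkeeping of these two endpoint constraints on the Beta integral; once the scaling identities are verified, the rest is mechanical.
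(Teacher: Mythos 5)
Your proposal is correct and follows essentially the same route as the paper's own proof: the gradient heat-kernel bound combined with Young's inequality, the change of variables $s=t\tau$ exploiting the scaling identities for $\beta^\ee$ and $\bar{\beta}^\ee$, Minkowski's inequality for the $L^{\bar r}_t$ estimate, and H\"older's inequality leading to the same Beta-type integral for the $L^\infty_t$ estimate, with the same endpoint conditions ($1/q-1/\bar q<1/N$, $\alpha^\ee+1/\bar r<1$ from $q<N/(1-\ee)$, and the hypotheses $\bar r>2$, $q>N\bar r/(2\bar r-2)$, $\bar q<Nr$ for the second part). The only minor imprecision is describing the $\tau=1$ integrability condition as \emph{equivalent} to those hypotheses rather than implied by them, which matches how the paper itself invokes them and does not affect the argument.
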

\begin{proof}
	At first, recalling \eqref{estimate_nablaheatkernel}, we get that
	\begin{equation*}
		t^{\beta^\ee}\|\mathcal{B}f(t)\|_{L^{\bar{q}}_x}\lesssim 
		t^{\beta^\ee}
		\int_0^t \frac{1}
		{|t-s|^{\frac{N}{2}\big(\frac{1}{q}-\frac{1}{\bar{q}} \big)+\frac{1}{2}}}\|f(s)\|_{L^{q}_x}\dd s.  
	\end{equation*}
	Defining $F(s):=\|s^{\alpha^\ee}f(s)\|_{L^{q}_x}$, by a change of variable $s=t\tau$ and 
	because $  \beta^\ee -\alpha^\ee +1 = 1/2 + (1/q-1/\bar{q}) N/2$, we get that
	\begin{equation*}
		t^{\beta^\ee}\|\mathcal{B}f(t)\|_{L^{\bar{q}}_x}\lesssim 
		\int_0^1 \frac{1}{|1-\tau|^{\frac{N}{2}\big(\frac{1}{q}-\frac{1}{\bar{q}} \big)+
		\frac{1}{2}}}\tau^{-\alpha^\ee}F(t\tau)\dd \tau.  
	\end{equation*}
	Applying Minkowski inequality, we deduce that
	\begin{equation}\label{estimate1_Lemma4}
		\|t^{\beta^\ee} \mathcal{B} f(t) \|_{L^{\bar{r}}_t L^{\bar{q}}_x}\lesssim 
		\int_0^1 \frac{1}{|1-\tau|^{\frac{N}{2}\big(\frac{1}{q}-\frac{1}{\bar{q}} \big)+
		\frac{1}{2}}}\tau^{-\alpha^\ee}
		\Big( \int_0^{T\tau}F(t\tau)^{\bar{r}}\dd t \Big)^{\frac{1}{\bar{r}}}\dd \tau,
	\end{equation}
	which yields
	\begin{equation*}
		\|t^{\beta^\ee} \mathcal{B} f(t) \|_{L^{\bar{r}}_t L^{\bar{q}}_x}\lesssim 
		\int_0^1 \frac{1}{|1-\tau|^{\frac{N}{2}\big(\frac{1}{q}-\frac{1}{\bar{q}} \big)+
		\frac{1}{2}}}\tau^{-\alpha^\ee-\frac{1}{\bar{r}}}\dd \tau
		\|t^{\alpha^\ee}  f(t) \|_{L^{\bar{r}}(0,T; L^{q}_x)}.
	\end{equation*}
	Because $0<(1/q-1/\bar{q})N/2+1/2<1$ and $0<1/2(3-N/q -\ee)<1$, we deduce inequality 
	\eqref{inquality_Lemma5}.

	\noindent For the second inequality, proceeding in a similar way of the previous Lemma, we obtain that
	\begin{equation*}
		t^{\beta^\ee_2}\|\mathcal{B}f(t)\|_{L^{\bar{q}}_x}\lesssim
		\Bigg(
		\int_0^t \frac{t^{\beta^\ee \bar{r}'}s^{-\alpha^\ee\bar{r}_1'}}
		{|t-s|^{\frac{N}{2}\big(\frac{1}{q}-\frac{1}{\bar{q}} \big)\bar{r}'+
		\frac{1}{2}\bar{r}'}}
		\dd s\Bigg)^{\frac{1}{\bar{r}'}}
		\|t^{\alpha^\ee}  f(t) \|_{L^{\bar{r}}(0,T; L^{q}_x)}
	\end{equation*}
	By a change of variable $s=t\tau$ and because $\beta^\ee\bar{r}'-\alpha^\ee\bar{r}'-
	(1/q-1/\bar{q})N\bar{r}'/2-\bar{r}'/2+1=0$, 		
	we obtain
	\begin{equation*}
		t^{\beta^\ee}\|\Bb f(t)\|_{L^{\bar{q}}_x}\lesssim
		\Bigg(
		\int_0^1 \frac{\tau^{-\alpha^\ee\bar{r}'}}
		{|1-\tau|^{\frac{N}{2}\big(\frac{1}{q}-\frac{1}{\bar{q}} \big)\bar{r}'+
		\frac{1}{2}\bar{r}'}}
		\dd \tau \Bigg)^{\frac{1}{\bar{r}'}}
		\|t^{\alpha^\ee}  f(t) \|_{L^{\bar{r}}(0,T; L^{q}_x)}.
	\end{equation*}
	Since by the hypotheses we can deduce $\alpha^\ee_1\bar{r}'<1$ 
	and $(1/q-1/\bar{q})N\bar{r}'/2+\bar{r}'/2<1 $ then there 
	exists $C_\ee>0$ such that
	\begin{equation*}
		\|t^{\beta^\ee} \mathcal{B} f(t) \|_{L^{\infty}(0,T; L^{\bar{q}}_x)}\leq C_\ee 
		\|t^{\alpha^\ee}  f(t) \|_{L^{\bar{r}}(0,T; L^{q}_x)}
	\end{equation*}
\end{proof}

\begin{lemma}\label{Lemma6}
	Let $1<\bar{r}<\infty$, $q>N\bar{r}/(2\bar{r}-2)$ and $\sigma:= 1-N/(2q)-1/\bar{r}$. Let us suppose that 
	$t^{\sigma} f$ belongs to $L^{\bar{r}}(0,T; L^q_x)$ with $T\in (0,\infty]$. Then $\Cc f$ belongs to 
	$L^\infty(0,T; L^\infty_x)$ and for every $t\in (0,T)$
	\begin{equation*}
		\| \Cc f(t) \|_{L^\infty_x}\leq C_{\bar{r}}\|s^{\sigma} f\|_{L^{\bar{r}}(0,t; L^q_x)},
	\end{equation*}
	where $C_{\bar{r}}r$ is a positive constant dependent only by $\bar{r}$.
\end{lemma}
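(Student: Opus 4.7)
The plan is to reduce the bound to a scale-invariant Beta-type integral via the standard heat-kernel decay, and then verify that the choice of $\sigma$ cancels exactly the homogeneity in $t$.

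First I would apply Young's convolution inequality to the kernel representation of $\Cc f(t) = \int_0^t e^{(t-s)\Delta}f(s)\,\dd s$. Using the identity $\|K(t,\cdot)\|_{L^{q'}_x} = t^{-N/(2q)}\|K(1,\cdot)\|_{L^{q'}_x}$ from \eqref{estimate_heatkernel}, I get
\begin{equation*}
\|\Cc f(t)\|_{L^\infty_x} \leq \int_0^t \|K(t-s,\cdot)\|_{L^{q'}_x}\|f(s)\|_{L^q_x}\,\dd s
\lesssim \int_0^t \frac{1}{(t-s)^{\frac{N}{2q}}}\|f(s)\|_{L^q_x}\,\dd s.
\end{equation*}

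Next, following the same strategy used in Lemmas \ref{Lemma4} and \ref{Lemma5}, I introduce $F(s):=s^{\sigma}\|f(s)\|_{L^q_x}$, rewrite the integrand as $s^{-\sigma}(t-s)^{-N/(2q)}F(s)$, and apply Hölder's inequality with conjugate exponents $\bar{r}$ and $\bar{r}'$ to obtain
\begin{equation*}
\|\Cc f(t)\|_{L^\infty_x} \lesssim \Bigl(\int_0^t s^{-\sigma\bar{r}'}(t-s)^{-\frac{N\bar{r}'}{2q}}\,\dd s\Bigr)^{1/\bar{r}'}\|s^{\sigma}f\|_{L^{\bar{r}}(0,t;L^q_x)}.
\end{equation*}
The change of variables $s=t\tau$ yields
\begin{equation*}
\int_0^t s^{-\sigma\bar{r}'}(t-s)^{-\frac{N\bar{r}'}{2q}}\,\dd s = t^{1-\sigma\bar{r}'-\frac{N\bar{r}'}{2q}}\int_0^1 \tau^{-\sigma\bar{r}'}(1-\tau)^{-\frac{N\bar{r}'}{2q}}\,\dd\tau.
\end{equation*}

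The key observation, which motivates the definition of $\sigma$, is that the exponent on $t$ vanishes: the relation $\sigma = 1 - N/(2q) - 1/\bar{r}$ is equivalent to $\sigma + N/(2q) = 1/\bar{r}'$, hence $1-\sigma\bar{r}'-N\bar{r}'/(2q)=0$. The step I expect to matter most is then checking that the Beta integral is finite. The singularity at $\tau=1$ requires $N\bar{r}'/(2q)<1$, which is exactly the hypothesis $q>N\bar{r}/(2\bar{r}-2)$; the singularity at $\tau=0$ requires $\sigma\bar{r}'<1$, which follows at once from $\sigma\bar{r}' = 1 - N\bar{r}'/(2q) < 1$. Taking the supremum in $t\in(0,T)$ yields the claimed bound with a constant $C_{\bar{r}}$ depending only on $\bar{r}$ (through the value of the Beta integral), concluding the proof.
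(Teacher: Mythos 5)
Your proof is correct and follows essentially the same route as the paper: the pointwise heat-kernel decay $\|K(t-s)\|_{L^{q'}_x}\lesssim (t-s)^{-N/(2q)}$, H\"older with exponents $\bar{r},\bar{r}'$, and the scaling $s=t\tau$ showing the exponent of $t$ cancels precisely because $\sigma+N/(2q)=1/\bar{r}'$, with the Beta integral finite since $N\bar{r}'/(2q)<1$ and $\sigma\bar{r}'<1$. The only (immaterial) difference is that you apply H\"older before the change of variables, and in doing so you track the $s^{-\sigma}$ singularity at the origin slightly more explicitly than the paper's displayed H\"older step does.
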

\begin{proof}
	Recalling \eqref{estimate_heatkernel} we get
	\begin{equation*}
		\| \Cc f(t) \|_{L^\infty_x} \leq \int_0^t \frac{1}{|t-s|^{\frac{N}{2}\frac{1}{q}}s^{\sigma}} 
		F(s)\dd s,
	\end{equation*}
	for every $t\in (0,T)$, where $F(s)=s^{\sigma}\|f(s)\|_{L^\infty_x}$. By the change of variable 
	$s=t\, \tau$ we obtain
	\begin{equation*}
		\|\Cc f(t)\|_{L^\infty_x} \leq \int_0^1 \frac{1}{|1-\tau|^{\frac{N}{2}\frac{1}{q}}\tau^{\sigma}} 
		F(t\,\tau)t^{1-\frac{N}{2}\frac{1}{q}-\sigma}\dd s=
		\int_0^1 \frac{1}{|1-\tau|^{\frac{N}{2}\frac{1}{q}}\tau^{\sigma}} 
		F(t\,\tau)t^{\frac{1}{\bar{r}}}\dd s.
	\end{equation*}
	Hence, by H\"{o}lder inequality, it follows
	\begin{equation*}
		\| \Cc f(t) \|_{L^\infty_x} \leq
		\Big( \int_0^1 \frac{1}{|1-\tau |^{\frac{N}{2}\frac{1}{q}\bar{r}'}}\dd \tau \Big)^\frac{1}{\bar{r}'}
		\Big( \int_0^1 |F(t\, \tau)|^r t \dd \tau \Big)^\frac{1}{r}.
	\end{equation*}
	Since $\bar{r}' N/(2q)<1$, we finally get
	\begin{equation*}
	\| \Cc f(t) \|_{L^\infty_x} \leq C_{\bar{r}} \Big( \int_0^1 |F(t\, \tau)|^r t \dd \tau \Big)^\frac{1}{r}
	= C_{\bar{r}}\|s^{\sigma} f\|_{L^{\bar{r}}(0,t; L^q_x)}.
	\end{equation*}
\end{proof}
\noindent Finally we enunciate the following Lemma, which proof is basically equivalent to the previous one.
\begin{lemma}\label{Lemma7}
	Let $2<\bar{r}<\infty$, $q>N\bar{r}/(\bar{r}-2)$ and $\sigma:= (1-N/q)1/2-1/\bar{r}$. Let us suppose that 
	$t^{\sigma} f$ belongs to $L^{\bar{r}}(0,T; L^q_x)$ with $T\in (0,\infty]$. Then $\Bb f$ and 
	$t^{-1/2}\Cc f$ belong to 
	$L^\infty(0,T; L^\infty_x)$ and for every $t\in (0,T)$
	\begin{equation*}
		\| \Bb f(t),\, t^{-\frac{1}{2}}\Cc f(t) \|_{L^\infty_x}\leq 
		C_{\bar{r}}\|s^{\sigma} f\|_{L^{\bar{r}}(0,t; L^q_x)},
	\end{equation*}
	where $C_{\bar{r}}$ is a positive constant dependent only by $\bar{r}$.
\end{lemma}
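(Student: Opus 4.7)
The plan is to mimic the proof of Lemma \ref{Lemma6} step by step, treating $\mathcal{B}f$ and $t^{-1/2}\mathcal{C}f$ in parallel. The only change of ingredient is that for $\mathcal{B}$ we use the gradient bound \eqref{estimate_nablaheatkernel} on the heat kernel instead of \eqref{estimate_heatkernel}, which introduces one extra half-power of $|t-s|^{-1}$ in the integrand; the outer weight $t^{-1/2}$ in front of $\mathcal{C}f$ will play exactly the same book-keeping role.

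First, for the $\mathcal{B}$ estimate, I would apply Young's inequality with the exponent split $1/q+1/q'=1/\infty + 1$, obtaining
$$\|\mathcal{B}f(t)\|_{L^\infty_x}\lesssim \int_0^t \frac{1}{|t-s|^{\frac{N}{2q}+\frac{1}{2}}\,s^{\sigma}}F(s)\,ds,\qquad F(s):=s^{\sigma}\|f(s)\|_{L^q_x},$$
by virtue of \eqref{estimate_nablaheatkernel} with $\lambda=q'$. The change of variable $s=t\tau$ produces the prefactor $t^{1-\frac{N}{2q}-\frac{1}{2}-\sigma}$, which by the very definition of $\sigma$ equals $t^{1/\bar{r}}$, and then H\"older's inequality in $\tau$ with conjugate $\bar{r}'$ yields
$$\|\mathcal{B}f(t)\|_{L^\infty_x}\lesssim \Bigl(\int_0^1 |1-\tau|^{-(\frac{N}{2q}+\frac{1}{2})\bar{r}'}\,d\tau\Bigr)^{1/\bar{r}'}\|F\|_{L^{\bar{r}}(0,t)}.$$
The second factor is $\|s^{\sigma}f\|_{L^{\bar{r}}(0,t;L^q_x)}$ after undoing the scaling, so the whole matter reduces to integrability of $|1-\tau|^{-(\frac{N}{2q}+\frac{1}{2})\bar{r}'}$. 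This is precisely where the hypothesis $q>N\bar{r}/(\bar{r}-2)$ enters: a direct computation shows it is equivalent to $(\frac{N}{2q}+\frac{1}{2})\bar{r}'<1$.

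For the $t^{-1/2}\mathcal{C}f$ bound the same script runs with \eqref{estimate_heatkernel} (no gradient), and the outer weight $t^{-1/2}$ exactly compensates the missing half-power, giving after the scaling
$$t^{-\frac{1}{2}}\|\mathcal{C}f(t)\|_{L^\infty_x}\lesssim \Bigl(\int_0^1 |1-\tau|^{-\frac{N\bar{r}'}{2q}}\,d\tau\Bigr)^{1/\bar{r}'}\|s^{\sigma}f\|_{L^{\bar{r}}(0,t;L^q_x)}.$$
The required integrability condition $\frac{N\bar{r}'}{2q}<1$ amounts to $q>N\bar{r}/(2(\bar{r}-1))$, which is a strictly weaker restriction than $q>N\bar{r}/(\bar{r}-2)$ (indeed $2(\bar{r}-1)>\bar{r}-2$ for $\bar{r}>2$). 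Hence both bounds hold simultaneously under the single hypothesis of the lemma.

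There is no genuine obstacle: the two exponent verifications and the bookkeeping of powers of $t$ and $\tau$ are the only content, and both follow directly from the definition of $\sigma$ together with the assumed lower bound on $q$. I would present the two estimates in a single display, parallel to the final line of the proof of Lemma \ref{Lemma6}.
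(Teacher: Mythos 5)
Your proposal is correct and is essentially the paper's own argument: the paper proves Lemma~\ref{Lemma7} simply by declaring its proof ``basically equivalent'' to that of Lemma~\ref{Lemma6}, and your adaptation (the kernel bound \eqref{estimate_nablaheatkernel} for $\Bb$, the outer weight $t^{-1/2}$ for $\Cc$, the scaling $s=t\tau$ producing the prefactor $t^{1/\bar{r}}$, H\"older with exponent $\bar{r}'$, and the check $(\tfrac{N}{2q}+\tfrac12)\bar{r}'<1\Leftrightarrow q>N\bar{r}/(\bar{r}-2)$, with the weaker condition $\tfrac{N\bar{r}'}{2q}<1$ sufficing for the $\Cc$ part) is exactly that. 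One small point, shared with the paper's own proof of Lemma~\ref{Lemma6}: in the H\"older step you silently drop the factor $\tau^{-\sigma}$, which under the hypothesis $q>N\bar{r}/(\bar{r}-2)$ satisfies $\sigma>0$ and hence cannot simply be discarded from an upper bound; the fix is to keep it and note that $\sigma\bar{r}'<1$ (indeed $\sigma=\tfrac12-\tfrac{N}{2q}-\tfrac1{\bar{r}}<1-\tfrac1{\bar{r}}$), so the integral $\int_0^1|1-\tau|^{-(\frac{N}{2q}+\frac12)\bar{r}'}\tau^{-\sigma\bar{r}'}\,\dd\tau$ still converges and the stated estimate is unchanged.
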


\section{Technical Results for the Heat and Stokes equations}

\noindent We consider the following system, composed by an Heat equation and a free Stokes equation with a linear perturbation:

\begin{equation}\label{Appendix_Navier_Stokes_system}
	\begin{cases}
		\;\partial_t u + v\cdot \nabla u -\Delta u+\nabla \Pi =f_1		& \RR_+ \times\RR^N,\\
		\; \partial_t d -\Delta d= f_2									& \RR_+ \times\RR^N,\\
		\;\Div\, u = 0													& \RR_+ \times\RR^N,\\
		\;(u,\, d)_{|t=0} = (u_0,\, d_0)							& \;\;\quad \quad\RR^N,\\
	\end{cases}
\end{equation}
where $d_0\in L^{\infty}_x$ and $(u_0, \,\nabla d_0)$ belongs to $\BB_{p,r}^{N/p-1}$ with 
$1<p<N$ and $1<r<\infty$.   
Propositions \ref{Appx_Prop_exist_sol_scheme} and Proposition \ref{Appx_Prop_exist_sol_scheme_weight_in_time} concern the existence of a solution $(u,\,d,\,\nabla \Pi)$, which belong to $\X_{r,T}$ and $\Y_{r,T}$ respectively. For $p$ less than (or equal to) the critical exponent $Nr/(3r-2)$ we can solve our system in a functional framework based only on some regularizing effects for the heat kernel in $L^pL^q$ spaces. However, 
if $p$ exceeds this critical value, in order to handle this less of regularity we have the add a weight in time. 

\noindent
Proposition \ref{Theorem_solutions_smooth_dates} requires the following result:

\begin{prop}\label{Appx_Prop_exist_sol_scheme}
	Let $1<r\leq 2$ and $1<p\leq Nr/(3r-2)$. Suppose that $f_1,\, \nabla f_2$ belong to 
	$L^r_TL^{N3/(3r-2)}_x$, $f_2 \in L^1_TL^\infty_x\cap L^r_TL^{Nr/2(r-1)}_x$, 
	$\nabla f_2$ belongs to 
	$L^{6r/5}_T L^{3Nr/(6r-2)}_x$. Assume that $v$ belongs to $L^{2r}_T L^{Nr/(r-1)}_x$ 
	and its norm is small enough. Let us assume that $d_0$ takes value in 
	$\SSS^{N-1}$, $u_0,\,\nabla d_0$ belong to $\BB_{p,r}^{N/p-1}$ and condition 
	\eqref{smallness_condition} is satisfied. Then there exists $(u,\,d,\,\nabla \Pi)$ solution of  
	\eqref{Appendix_Navier_Stokes_system} such that $d$ belongs to $L^{\infty}_TL^\infty_x$,
	$(u,\,\nabla d,\,\nabla \Pi)$ belongs to $\X_{r,T}$ and $(u,\,\nabla d)$ belongs to 
	$L^{2}_T L^\infty_x$.
\end{prop}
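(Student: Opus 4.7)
The plan is to exploit the linearity of the system and carry out a strategy analogous to that of Proposition \ref{Theorem_solutions_smooth_dates}, but without an iterative nonlinear scheme. First, since the heat equation for $d$ is decoupled from the one for $u$, I would solve it directly via the mild formulation
\begin{equation*}
d(t) = e^{t\Delta} d_0 + \int_0^t e^{(t-s)\Delta} f_2(s)\, \dd s.
\end{equation*}
The bound $\|d\|_{L^\infty_{t,x}} \leq \|d_0\|_{L^\infty_x} + \|f_2\|_{L^1_T L^\infty_x}$ is immediate. All the norms of $\nabla d$, $\nabla^2 d$ and $\nabla^3 d$ required by $\X_{r,T}$ follow from Corollary \ref{Cor_Characterization_of_hom_Besov_spaces} applied to $e^{t\Delta}\nabla d_0$ (after using the embeddings of $\BB_{p,r}^{N/p-1}$ into the relevant target Besov spaces provided by Theorem \ref{Theorem_embedding_Besov}) together with Lemma \ref{Lemma2}, Lemma \ref{Lemma3} and Theorem \ref{Maximal_regularity_theorem} applied to the Duhamel term, using the hypothesized integrability of $f_2$ and $\nabla f_2$.

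Next, for the Stokes system I would freeze $v$ and construct $u$ by a Banach fixed-point argument. Applying the Leray projector $\PP$ yields the mild formulation
\begin{equation*}
u(t) = e^{t\Delta} u_0 + \int_0^t e^{(t-s)\Delta} \PP\bigl\{ f_1 - v \cdot \nabla u \bigr\}(s)\, \dd s,
\end{equation*}
while the pressure is recovered from $-\Delta \Pi = \Div\{f_1 - v \cdot \nabla u\}$ as $\nabla \Pi = R\,R \cdot \{f_1 - v \cdot \nabla u\}$ via the Riesz transform. Defining the map $\Phi : u \mapsto \Phi(u)$ through the right-hand side and applying the same heat-kernel toolkit as above, the crucial estimate is
\begin{equation*}
\| v \cdot \nabla u \|_{L^r_T L^{\frac{Nr}{3r-2}}_x} \leq \|v\|_{L^{2r}_T L^{\frac{Nr}{r-1}}_x}\,\|\nabla u\|_{L^{2r}_T L^{\frac{Nr}{2r-1}}_x},
\end{equation*}
so that the smallness of $\|v\|_{L^{2r}_T L^{Nr/(r-1)}_x}$ lets this term be absorbed into the left-hand side, yielding both stability of a ball of $\X_{r,T}$ under $\Phi$ and contraction for differences. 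The fixed point provides $(u,\nabla \Pi) \in \X_{r,T}$ with the desired bound, and $u$ satisfies the Stokes equation in the sense of distributions.

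The main technical point is the additional $L^2_T L^\infty_x$ bound for $(u,\nabla d)$. For the semigroup contributions $e^{t\Delta} u_0$ and $e^{t\Delta} \nabla d_0$ this follows from Corollary \ref{Cor_Characterization_of_hom_Besov_spaces} combined with the embedding $\BB_{p,r}^{N/p-1} \hookrightarrow \BB_{\infty,2}^{-1}$ supplied by Theorem \ref{Theorem_embedding_Besov} (legitimate because $r \leq 2$). For the Duhamel contributions one applies Lemma \ref{Lemma3} with $q_1 = Nr/(3r-2)$ and endpoint $q_2 = \infty$; the corresponding admissibility condition \eqref{condition_lemma3a} then reads $(3r-2)/(2r) < 1$, which is precisely the motivation for the restriction on $r$ in the hypothesis. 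Once these bounds are in place, assembling all the estimates yields the required inclusions $(u, \nabla d, \nabla \Pi) \in \X_{r,T}$, $d \in L^\infty_T L^\infty_x$ and $(u,\nabla d) \in L^2_T L^\infty_x$, completing the proof.
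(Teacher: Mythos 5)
Your proposal is correct and follows essentially the same route as the paper's own proof: the mild formulation for $d$ with the heat-kernel Lemmas \ref{Lemma2}, \ref{Lemma3}, Theorem \ref{Maximal_regularity_theorem} and Corollary \ref{Cor_Characterization_of_hom_Besov_spaces} for the $\X_{r,T}$ norms, the embedding of $\BB_{p,r}^{N/p-1}$ into $\BB_{\infty,2}^{-1}$ (using $r\leq 2$) for the $L^2_T L^\infty_x$ bound, and a Banach fixed point for the $v$-perturbed Stokes system with the pressure recovered through Riesz transforms and the contraction driven by the smallness of $\|v\|_{L^{2r}_T L^{Nr/(r-1)}_x}$. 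The only cosmetic difference is that you run the fixed point on $u$ alone via the Leray projector rather than on the pair $(\omega,\,\nabla P)$ as the paper does, and the borderline behaviour of the endpoint conditions at $r=2$ in your application of Lemma \ref{Lemma3} with $q_2=\infty$ is present to the same extent in the paper's argument.
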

\begin{proof}
	The case of the simple heat equation in $d$ is provided by the Mild formulation, namely
	\begin{equation}\label{appx_prop1_d}
		d(t) = e^{t\Delta }d_0 + \int_0^te^{(t-s)\Delta }f_2(s)\dd s=
		e^{t\Delta}d_0+\Cc f_2(t).
	\end{equation}
	We immediately get $d\in L^{\infty}_TL^\infty_x$ and its norm is bounded by $\|d_0
	\|_{L^\infty_x}+\int_0^t\|f_2(s)\|_{L^\infty_x}\dd s$. Moreover, by Corollary 
	\ref{Cor_Characterization_of_hom_Besov_spaces}, because $\nabla d_0\in 
	\BB_{p,r}^{N/p-1}\hookrightarrow \BB^{-1/r}_{Nr/(r-1),2r}$, we deduce that 
	$e^{t\Delta}\nabla d_0$ belongs $L^{2r}_T L^{Nr/(r-1)}_x$. the integral 
	part $\nabla \Cc f_2(t) = \Cc \nabla f_2(t)$ is handled by Lemma \ref{Lemma3} with 
	$r_1=r$, $r_2=2r$ 
	$q_1=Nr/(3r-2)$ and $q_2=Nr/(r-1)$. Similarly, because $\nabla^2 d_0$ belongs to 
	$\BB_{Nr/(2r-1),2r}^{-1/r}\cap \BB^{-2/r}_{Nr/(2r-2),r}$, we get 
	$\nabla e^{t\Delta}\nabla d_0\in L^{2r}_T L^{Nr/2r-1}_x\cap L^r_T L^{Nr/(2r-2)}_x$ 
	and by Lemma \ref{Lemma2} we obtain 
	$\Bb \nabla f_2\in  L^{2r}_T L^{Nr/2r-1}_x\cap L^r_T L^{Nr/(2r-2)}_x$. Observing also that 
	$\BB_{p,r}^{N/r-1}$ is embedded in $\BB^{-2/3r}_{3Nr/(3r-2), 3r}$, again by Corollary 
	\ref{Cor_Characterization_of_hom_Besov_spaces} we get that $e^{t\Delta}\nabla d_0$ 
	belongs to $L^{3r}_T L^{3Nr/(3r-2)}_x$. The same property is fulfilled by 
	$\Bb f_2 = \Cc \nabla f_2$, using Lemma \ref{Lemma2} with $r_1=6r/5$ and $q_1=3Nr/(6r-5)$.
	At last, since $\nabla^3 d_0\in \BB_{Nr/(3r-2),r}^{-2/r}$ we deduce again 
	by Corollary \ref{Cor_Characterization_of_hom_Besov_spaces} that 
	$\nabla^2 e^{t\Delta}\nabla d_0$ belongs to $L^{r}_T L^{Nr/(3r-2)}_x$, while the same result 
	is allowed for $\Aa\nabla f_2$ by Theorem \ref{Maximal_regularity_theorem}.
	Hence, $(u,\,\nabla d,\,\nabla \Pi)$ belongs to $\X_{r,T}$ at least for the terms 
	related to $d$. Furthermore, since $\nabla d_0$ belongs to $\BB_{\infty,2}^{-1}$ (here the 
	necessary condition $r\leq 2$) we get $e^{t\Delta }\nabla d_0\in L^2_T L^\infty_x$ and by 
	Lemma \ref{Lemma2} with $r_1=r$, $r_2=2$, $q_1=Nr/(r-1)$ and $q_2=\infty$, we deduce that 
	$\Bb f_2$ belongs to $L^2_TL^\infty_x$, that is $\nabla d\in L^2_T L^\infty_x$
	
	\noindent
	Concerning the Stokes equation with the $v$-linear perturbation the mainly idea is to use 
	the Fixed-Point Theorem on the space $\tilde{\X}_{r,T}$ determined by
	\begin{equation*}
		\tilde{\X}_{r,T} := \big\{\,(u,\,\nabla\Pi)\quad\text{such that}\quad
		(u,\,d,\,\nabla \Pi)\in \X_{r,T}\big\}.
	\end{equation*}
	Indeed, let $(\omega_i,\,\nabla P_i)$ belong to $\tilde{\X}_{r,T}$, for $i=1,2$, and let us 
	define
	\begin{equation}\label{appx_prop1_ui_Pii}
	\begin{aligned}
		u_i(t) &:= e^{t\Delta }u_0  + \int_0^t e^{(t-s)\Delta}\big\{ -v\cdot \nabla \omega_i -
		\nabla P_i +f_1(s)\big\}\dd s,\\
		\nabla \Pi_i &:= -RR\cdot \big\{ v\cdot\nabla \omega_i + f_1\big\},
	\end{aligned}	
	\end{equation}
	then we have $(u_i,\,\nabla P_i)\in \tilde{\X}_{r,T}$, by the same techniques used for $d$. 
	Moreover, subtracting in $i$, $\delta u := u_1-u_2$,  $\delta \nabla \Pi := \nabla \Pi_1 -
	\nabla \Pi_2$, $\delta \omega: = \omega_1-\omega_2$ and  $\delta \nabla P: = \nabla P_1 -
	\nabla P_2$, we get 
	\begin{equation*}
		\|(\delta u,\,\delta \nabla \Pi)\|_{\tilde{X}_{r,T}}\lesssim 
		\|v\|_{L^{2r}_T L^{\frac{Nr}{r-1}}_x}\|(\delta \omega,\,\delta P)\|_{\tilde{X}_{r,T}}.
	\end{equation*}
	Thus, by the Fixed-Point Theorem, on the condition $\|v\|_{L^{2r}_T L^{Nr/(r-1)}_x}$ small 
	enough, there exists $(u,\,d,\nabla \Pi)$ solution for 
	\eqref{Appendix_Navier_Stokes_system}, with the properties described by the statement. 
	This concludes the proof of Proposition \ref{Appx_Prop_exist_sol_scheme}. 
\end{proof}

\noindent Now we extend the range of $r$ to $(1,\, \infty)$ and we consider an index of integrability $p$ greater than the critical $Nr/(3r-2)$. As already mentioned, here the addition of a weight in time is necessary. The following result is used in proposition \ref{Theorem_solutions_smooth_dates_2}.

\begin{prop}\label{Appx_Prop_exist_sol_scheme_weight_in_time}
	Let $1<r< \infty$ and $Nr/(3r-2)<p<N$. Recalling the notation of Theorem \ref{Main_Theorem_2}, let us 
	suppose that $t^{\alpha_1}(f_1,\,\nabla f_2)$ belongs to $L^{2r}_T L^{p_1}_x$ and $t^{2\gamma_1}f_2$ 
	belongs to $L^{r}_T L^{p_3/2}_x$. Assume that $t^{\gamma_1}\in L^{2r}_T L^{p_3}_x$ and its norm is 
	small enough. Let $d_0$ and $u_0$ be defined as in Proposition \ref{Appx_Prop_exist_sol_scheme}. 
	Then there exists $(u,\,d,\,\nabla \Pi)\in \Y_{r,T}$ solution of \eqref{Appendix_Navier_Stokes_system}, 
	with $d\in L^\infty_T L^\infty_x$.
\end{prop}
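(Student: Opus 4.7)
The plan is to follow the strategy of Proposition \ref{Appx_Prop_exist_sol_scheme}, replacing the unweighted $L^r_tL^q_x$ regularizing estimates with their weighted-in-time analogues established earlier in the paper.

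First I would treat the heat equation for $d$ via its Mild formulation $d(t)=e^{t\Delta}d_0+\Cc f_2(t)$. The $L^\infty_{t,x}$ bound for $d$ follows from $\|d_0\|_{L^\infty_x}\le 1$ together with Lemma \ref{Lemma6} applied to $f_2$ with $q=p_3/2$ and $\sigma=2\gamma_1$, which is precisely the hypothesis $t^{2\gamma_1}f_2\in L^r_TL^{p_3/2}_x$. For the weighted bounds on $\nabla d$, $\nabla^2 d$, $\nabla^3 d$ required by $\Y_{r,T}$, I would split into the free part $e^{t\Delta}\nabla d_0$, handled by Theorem \ref{Characterization_of_hom_Besov_spaces} after embedding $\BB_{p,r}^{N/p-1}$ into the appropriate negative-regularity Besov spaces with base Lebesgue exponents $p_3$, $3p_1$, $p_2$, $p_3/2$, $p_1$; and the convolution part $\Cc \nabla f_2$, $\Bb \nabla f_2$, $\Aa \nabla f_2$, handled respectively by Lemma \ref{Lemma4b}, Lemma \ref{Lemma5b} (taking $\tilde q\in\{p_3,3p_1\}$ and $\bar q\in\{p_2,p_3/2\}$) and Theorem \ref{Maximal_regularity_Thm_weight_time} (with $\alpha=\alpha_1$ or $\alpha_2$, $q=p_1$). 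The choice of the exponents $\alpha_1,\beta_i,\gamma_j$ has been tailored precisely so that these Lemmas apply.

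Next I would construct $(u,\nabla\Pi)$ by a fixed-point argument in the Banach space
\begin{equation*}
\tilde\Y_{r,T}:=\{(u,\nabla\Pi)\;:\;(u,d,\nabla\Pi)\in\Y_{r,T}\},
\end{equation*}
using the map $(\omega,\nabla P)\mapsto(u,\nabla\Pi)$ defined, in the spirit of \eqref{appx_prop1_ui_Pii}, by
\begin{equation*}
u(t)=e^{t\Delta}u_0+\int_0^t e^{(t-s)\Delta}\bigl\{-v\cdot\nabla\omega-\nabla P+f_1\bigr\}(s)\,\dd s,\qquad
\nabla\Pi=-RR\cdot\{v\cdot\nabla\omega+f_1\}.
\end{equation*}
That this map sends $\tilde\Y_{r,T}$ into itself reduces, as for $d$ above, to the combination of Theorem \ref{Characterization_of_hom_Besov_spaces} on $e^{t\Delta}u_0$ together with Lemmas \ref{Lemma4b}, \ref{Lemma5b} and Theorem \ref{Maximal_regularity_Thm_weight_time} applied to the forcing term, after checking that $t^{\alpha_1}v\cdot\nabla\omega\in L^{2r}_TL^{p_1}_x$. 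The latter follows from the H\"older decomposition $1/p_1=1/p_3+1/p_2$ and the time-exponent identity $\alpha_1=\gamma_1+\beta_1$, giving
\begin{equation*}
\|t^{\alpha_1}v\cdot\nabla\omega\|_{L^{2r}_TL^{p_1}_x}\lesssim
\|t^{\gamma_1}v\|_{L^{2r}_TL^{p_3}_x}\|t^{\beta_1}\nabla\omega\|_{L^\infty_TL^{p_2}_x}.
\end{equation*}
The contraction estimate follows by applying the same bounds to the differences $\delta u,\delta\omega,\delta\nabla\Pi,\delta\nabla P$, producing
\begin{equation*}
\|(\delta u,\delta\nabla\Pi)\|_{\tilde\Y_{r,T}}\lesssim \|t^{\gamma_1}v\|_{L^{2r}_TL^{p_3}_x}\,\|(\delta\omega,\delta\nabla P)\|_{\tilde\Y_{r,T}},
\end{equation*}
which is a strict contraction under the smallness hypothesis on $v$, yielding the unique fixed point $(u,\nabla\Pi)\in\tilde\Y_{r,T}$.

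The only real obstacle is bookkeeping: verifying case-by-case that the triples $(r_1,q_1,q_2)$ implicit in each application of Lemmas \ref{Lemma4b} and \ref{Lemma5b} satisfy the admissibility conditions $N/2<q_1<N$, $\max\{N,q_1\}<\tilde q\le\infty$ or $1/q_1-1/\bar q<1/N$, and that the corresponding time-weight identities (e.g.\ $\alpha_1-\gamma_k$, $\alpha_1-\beta_k$ equal $1/\bar r - N/2(1/q_1-1/\tilde q)$) come out correctly with the definitions of $\alpha_i,\beta_j,\gamma_k$ given in Theorem \ref{Main_Theorem_2}. Once these routine verifications are in place the rest of the argument is mechanical and mirrors the proof of Proposition \ref{Appx_Prop_exist_sol_scheme}.
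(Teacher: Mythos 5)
Your proposal follows essentially the same route as the paper: mild formulation plus Lemma \ref{Lemma6} for the $L^\infty_{t,x}$ bound on $d$, Theorem \ref{Characterization_of_hom_Besov_spaces} for the free heat flow, the weighted Lemmas \ref{Lemma4b}--\ref{Lemma5b} (i.e.\ Lemmas \ref{Lemma4}--\ref{Lemma5} with $\ee=0$) and Theorem \ref{Maximal_regularity_Thm_weight_time} for the Duhamel terms, and a fixed-point argument in $\tilde\Y_{r,T}$ with contraction constant $\|t^{\gamma_1}v\|_{L^{2r}_TL^{p_3}_x}$, so it is correct and matches the paper's proof. Only a bookkeeping slip: with the definitions of Theorem \ref{Main_Theorem_2} one has $\gamma_1+\beta_2=\alpha_1$ and $\gamma_1+\beta_1=\alpha_2$ (not $\gamma_1+\beta_1=\alpha_1$), so your H\"older estimate should pair $\|t^{\gamma_1}v\|_{L^{2r}_TL^{p_3}_x}$ with $\|t^{\beta_2}\nabla\omega\|_{L^\infty_TL^{p_2}_x}$, a harmless correction since both norms are contained in $\|\cdot\|_{\Y_{r,T}}$.
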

\begin{proof}
	The proof is basically equivalent to the one of Proposition \ref{Appx_Prop_exist_sol_scheme}. At first, 
	by \eqref{appx_prop1_d} and Lemma \ref{Lemma6}, we get 
	\begin{equation*}
	\|d\|_{L^\infty_T L^\infty_x}\lesssim \|d_0
	\|_{L^\infty_x}+\|t^{2\gamma_1}f_2\|_{L^r_T L^{\frac{p_3}{2}}_x}.
	\end{equation*}
	Recalling Theorem \ref{Characterization_of_hom_Besov_spaces}, by $\nabla d_0\in \BB_{p_3,2r}^{N/p_3-1}$, 
	$\nabla^2d_0 \in \BB_{p_2, 2r}^{N/p_2-1}$ and $\nabla^3d_0\in \BB_{p_1,2r}^{N/p_1-1}$, we get that
	$t^{\gamma_1}e^{t\Delta}\nabla d_0\in L^{2r}_TL^{p_3}_x$, $t^{\beta_1}\nabla e^{t\Delta}\nabla d_0
	\in L^{2r}_TL^{p_2}_x$ and $t^{\alpha_1}e^{t\Delta}\nabla d_0\in L^{2r}_TL^{p_1}_x$. Similarly 
	$t^{\gamma_2}e^{t\Delta}\nabla d_0\in L^{\infty}_TL^{p_3}_x$ and we get also 
	$t^{\beta_2}\nabla e^{t\Delta}\nabla d_0
	\in L^{\infty}_TL^{p_2}_x$. Because $\nabla d_0 \in \BB_{3p_1, 2r}^{N/(3p_1)-1}$ we get $t^{\gamma_3} 
	e^{t\Delta}\nabla d_0\in L^{2r}_T L^{3p_1}_x$ and $t^{\gamma_4}e^{t\Delta}\nabla d_0$ belongs to 
	$L^{\infty}_T L^{3p_1}_x$. 
	
	\noindent Using Lemma \ref{Lemma4} and Lemma \ref{Lemma5} with $\ee=0$, $q=p_1$, $\tilde{q}=p_2$, 
	$\bar{q}=p_3$ or $\bar{q}=3p_1$, we deduce the previous results for $\Cc \nabla f_2$ instead of 
	$e^{t\Delta}\nabla d_0$ (observing also that $\nabla\Cc =\Bb$ and $\nabla^2\Cc=\Aa$). Thus 
	$\nabla d$ fulfils all the condition imposed by $\Y_{r,T}$. 
	
	\noindent To conclude the proof, we use the Fixed-Point Theorem. Denoting $\tilde{Y}_{r,T}$ the set 
	composed by the	couples $(u,\,\nabla\Pi)$ such that $(u,\,d,\,\nabla\Pi)$ belongs to $\Y_{r,T}$, 
	we consider $(\omega_i,\, \nabla P_i)\in \tilde{Y}_{r,T}$, for $i=1,2$. Thus, defining $(u_i, \nabla \Pi_i
	)$ by \eqref{appx_prop1_ui_Pii}, we have
	\begin{equation*}
		\|(\delta u,\,\delta \nabla \Pi)\|_{\tilde{X}_{r,T}}\lesssim 
		\|t^{\gamma_1}v\|_{L^{2r}_T L^{p_3}_x}\|(\delta \omega,\,\delta P)\|_{\tilde{X}_{r,T}},
	\end{equation*}
	hence there exists $(u, \,d,\,\nabla \Pi)\in \Y_{r,T}$ solution of \eqref{Appendix_Navier_Stokes_system}, 
	and this concludes the proof.
\end{proof}
\pagestyle{empty}
\bibliographystyle{amsplain}
\providecommand{\bysame}{\leavevmode\hbox to3em{\hrulefill}\thinspace}
\providecommand{\MR}{\relax\ifhmode\unskip\space\fi MR }
\providecommand{\MRhref}[2]{%
  \href{http://www.ams.org/mathscinet-getitem?mr=#1}{#2}
}
\providecommand{\href}[2]{#2}

\end{document}